\newcommand{\ignore}[1]{{}}
\DeclareMathOperator*{\argmin}{arg\,min}
\newtheorem{theorem}{Theorem}
\newtheorem{lemma}{Lemma}
\newlist{myEnumerate}{enumerate}{5}
\setlist[myEnumerate,1]{label=\Roman*., leftmargin=0.5cm}
\setlist[myEnumerate,2]{label=(\Alph*), leftmargin=0.5cm}
\setlist[myEnumerate,3]{label=(\alph*), leftmargin=0.5cm}
\setlist[myEnumerate,4]{label=(\roman*), leftmargin=0.5cm}
\setlist[myEnumerate,5]{label=\textbf{Case \arabic*:}, leftmargin=1.25cm, itemsep=0.15cm}
\newcommand\numberthis{\addtocounter{equation}{1}\tag{\theequation}} 
\newcommand{\LCP}{\ensuremath{\text{LCP}}}
\newcommand{\alignstack}[2]{\stackrel{\mathmakebox[\widthof{\ensuremath{#2}}]{#1}}{#2}}
\newcommand{\alignstacktinytext}[2]{\stackrel{\mathmakebox[\widthof{\ensuremath{#2}}]{\text{\tiny #1}}}{#2}}
\DeclareMathOperator{\fpart}{frac}
\begin{document}

\title{Optimal Algorithms for Right-Sizing Data Centers ---\\ Extended Version\footnote{Work supported by the European Research Council, Grant Agreement No.\ 691672.}}

\author{Susanne Albers \\
	Technical University of Munich \\
	albers@in.tum.de \\
	\and
	Jens Quedenfeld\footnote{Contact author} \\
	Technical University of Munich \\
	jens.quedenfeld@in.tum.de \\
}

\maketitle

\begin{abstract}
	Electricity cost is a dominant and rapidly growing expense in data centers. 
	Unfortunately, much of the consumed energy is wasted because servers are idle 
	for extended periods of time. We study a capacity management problem that
	dynamically right-sizes a data center, matching the number of active servers
	with the varying demand for computing capacity. We resort to a data-center
	optimization problem introduced by Lin, Wierman, Andrew and Thereska~\cite{LinWierman2011infocom,LinWierman2013}
	that, over a time horizon, minimizes a combined objective function consisting 
	of operating cost, modeled by a sequence of convex functions, and server switching
	cost. All prior work addresses a continuous setting in which the number of active
	servers, at any time, may take a fractional value.
	
	In this paper, we investigate for the first time the discrete data-center optimization 
	problem where the number of active servers, at any time, must be integer valued. Thereby
	we seek truly feasible solutions. First, we show that the offline problem can be solved 
	in polynomial time. Our algorithm relies on a new, yet intuitive graph theoretic model
	of the optimization problem and performs binary search in a layered graph. Second,
	we study the online problem and extend the algorithm {\em Lazy Capacity Provisioning\/} (LCP) 
	by Lin et al.~\cite{LinWierman2011infocom,LinWierman2013} to the discrete setting.  We prove that LCP is 3-competitive. 
	Moreover, we show that no deterministic online algorithm can achieve a competitive ratio 
	smaller than~3. Hence, while LCP does not attain an optimal competitiveness in the continuous
	setting, it does so in the discrete problem examined here. We prove that the lower
	bound of~3 also holds in a problem variant with more restricted operating cost functions,
	introduced by Lin et al.~\cite{LinWierman2011infocom}.  
	
	In addition, we develop a randomized online algorithm that is 2-competitive against an oblivious adversary. It is based on the algorithm of Bansal et al.~\cite{Bansal2015} (a deterministic, 2-competitive algorithm for the continuous setting) and uses randomized rounding to obtain an integral solution. Moreover, we prove that 2 is a lower bound for the competitive ratio of randomized online algorithms, so our algorithm is optimal. We prove that the lower bound still holds for the more restricted model. 
	
	Finally, we address the continuous setting and give a lower bound of~2 on the best competitiveness 
	of online algorithms. This matches an upper bound by Bansal et al.~\cite{Bansal2015}. A lower bound 
	of~2 was also shown by Antoniadis and Schewior~\cite{Antoniadis2017}. We develop an independent proof 
	that extends to the scenario with more restricted operating cost.
\end{abstract}

\section{Introduction}
Energy conservation in data centers is a major concern for both operators and the environment. 
In the U.S.,\ about 1.8\% of the total electricity consumption is attributed to data centers~\cite{Shehabi2016}.
In 2015, more than 416 TWh (terawatt hours) were used by the world's data centers, which exceeds the total 
power consumption in the UK~\cite{Bawden2016}. Electricity cost is a significant expense in data centers~\cite{Dayarathna2016};
about 18--28\% of their budget is invested in power \cite{Hamilton2008, Brill2007}.
Remarkably, the servers of a data center are only utilized 12--40\% of the time on average~\cite{Delforge2014,Armbrust2009,Barroso2007}. 
Even worse, when idle and  in active mode, they consume about half of their peak power \cite{Schmid2009power}. Hence, a promising 
approach for energy conservation and capacity management is to transition idle servers into low-power sleep 
states. However, state transitions, and in particular power-up operations, also incur energy/cost. Therefore, 
dynamically matching the number of active servers with the varying demand for computing capacity is a 
challenging optimization problem. In essence, the goal is to right-size a data center over time so as 
to minimize energy and operation costs. 

\vspace*{0.1cm}

{\bf Problem Formulation.} We investigate a basic algorithmic problem with the objective of dynamically 
resizing a data center. Specifically, we resort to a framework that was introduced by Lin, Wierman, Andrew 
and Thereska~\cite{LinWierman2011infocom,LinWierman2013} and further explored, for instance, in~\cite{Antoniadis2016,Antoniadis2017,Bansal2015,Andrew2013,Wang2015,LinWierman2011,LiuLinWierman2015,Zhang2018,Antoniadis2020}.

Consider a data center with $m$ homogeneous servers, each of which has an active state and a sleep state. 
An optimization is performed over a discrete finite time horizon consisting of time steps $t=1,\ldots, T$. At 
any time $t$, $1\leq t\leq T$, a non-negative convex cost function $f_t(\cdot)$ models the operating cost of the
data center. More precisely, $f_t(x_t)$ is the incurred cost if $x_t$ servers are in the active state at time $t$,
where $0\leq x_t\leq m$. 
This operating cost captures, e.g., energy cost and service delay, for an incoming
workload, depending on the number of active servers. Furthermore, at any time $t$ there is a
switching cost, taking into account that the data center may be resized by changing the number of active
servers. This switching cost is equal to $\beta(x_t-x_{t-1})^+$, where $\beta$ is a positive real constant
and $(x)^+=\max(0,x)$. Here we assume that transition cost is incurred when servers are powered up from the
sleep state to the active state. A cost of powering down servers may be folded into this cost. 
The constant 
$\beta$ incorporates, e.g., the energy needed to transition a server from the sleep state to the active state, 
as well as delays resulting from a migration of data and connections. We assume that at the beginning 
and at the end of the time horizon all servers are in the sleep state, i.e., $x_0=x_{T+1}=0$. The goal
is to determine a vector $X=(x_1,\ldots,x_T)$ called \emph{schedule}, specifying at any time the number of active servers, that
minimizes
\begin{equation}
\sum_{t=1}^T f_t(x_t) + \beta \sum_{t=1}^T (x_t-x_{t-1})^+.
\label{eqn:model:cost}
\end{equation}
In the offline version of this data-center optimization problem, the convex functions $f_t$, $1\leq t \leq T$, 
are known in advance. In the online version, the $f_t$ arrive over time. At time $t$, function $f_t$ is presented. 
Recall that the operating cost at time $t$ depends for instance on the incoming workload, which becomes known
only at time $t$. 

All previous work on the data-center optimization problem assumes that the server numbers~$x_t$, $1\leq t\leq T$,
may take fractional values. That is, $x_t$ may be an arbitrary real number in the range~$[0,m]$. From a practical
point of view this is acceptable because a data center has a large number of machines. Nonetheless, from an algorithmic
and optimization perspective, the proposed algorithms do not compute feasible solutions. Important questions remain
if the $x_t$ are indeed integer valued: (1)~Can optimal solutions be computed in polynomial time? (2)~What is the
best competitive ratio achievable by online algorithms? In this paper, we present the first study of the data-center
optimization problem assuming that the $x_t$ take integer values and, in particular, settle questions~(1) and (2). 

\vspace*{0.1cm}

{\bf Previous Work.} As indicated above, all prior work on the data-center optimization problem assumes that the 
$x_t$, $1\leq t \leq T$, may take fractional values in $[0,m]$. First, Lin et al.~\cite{LinWierman2013} consider the offline problem.
They develop an algorithm based on a convex program that computes optimal solutions. Second, 
Lin et al.~\cite{LinWierman2013} study the online problem. They devise a deterministic algorithm called {\em Lazy Capacity 
Provisioning (LCP)\/} and prove that it achieves a competitive ratio of exactly~3. Algorithm LCP, at any time $t$, computes
a lower bound and an upper bound on the number of active servers by considering two scenarios in which the switching
cost $\beta$ is charged, either when a server is powered up or when it is powered down. The LCP algorithm lazily stays within
these two bounds. The tight bound of~3 on the competitiveness of LCP also holds if the algorithm has a finite prediction
window $w$, i.e., at time $t$ it knows the current as well as the next $w$ arriving functions $f_t, \dots, f_{t+w}$.
Furthermore, Lin et al.~\cite{LinWierman2013} perform an experimental study with two real-world traces evaluating
the savings resulting from right-sizing in data centers.

Bansal et al.~\cite{Bansal2015} presented a 2-competitive online algorithm and showed that no deterministic or randomized
online strategy can attain a competitiveness smaller than 1.86. Recently, Antoniadis and Schewior~\cite{Antoniadis2017} improved
the lower bound to~2. Bansal et al.~\cite{Bansal2015} also gave a 3-competitive memoryless 
algorithm and showed that this is the best competitive factor achievable by a deterministic memoryless algorithm.
The data-center optimization problem is an online convex optimization problem with switching costs. 
Andrew et al.~\cite{Andrew2013} showed that there is an algorithm with sublinear regret but that $\mathcal{O}(1)$-competitiveness
and sublinear regret cannot be achieved simultaneously. 

The continuous data-center optimization problem on \emph{heterogeneous} data centers (that contain different server types) is special case of convex function chasing where the values $x_t$ are points in a metric space. Sellke~\cite{Sellke2020} presented a $(d+1)$-competitive online algorithm for convex function chasing. A similar result was found by Argue et al.~\cite{Argue2020}. Goel and Wierman~\cite{GoelWierman2018} developed an algorithm called Online Balanced Descent (OBD) that achieves a competitive ratio of $3 + \mathcal{O}(1/\mu)$ if the arriving operating cost functions are $\mu$-strongly convex. Chen et al. ~\cite{ChenGoelWierman2018} showed that OBD is $(3 + \mathcal{O}(1/\alpha))$-competitive if the functions are locally $\alpha$-polyhedral. Other publications handling convex function chasing and related problems are \cite{Antoniadis2016,BubeckSellke2020nested}.

Further work on energy conservation in data center includes, for instance, \cite{Khuller2010,LiKhuller2011}. Khuller et al.~\cite{Khuller2010} 
introduce a machine activation problem. 
There exists an activation cost budget and jobs have to be scheduled on the 
selected, activated machines so as to minimize the makespan. They present algorithms that simultaneously approximate 
the budget and the makespan. A second paper by Li and Khuller~\cite{LiKhuller2011} considers a generalization where the 
activation cost of a machine is a non-decreasing function of the load. In the more applied computer science 
literature, power management strategies and the value of sleep states have been studied extensively. The papers
 focus mostly on experimental evaluations. Articles that also present analytic results include~\cite{Gandhi2011,Gandhi2010,Haas2015}.

\vspace*{0.1cm}

{\bf Our Contribution.} We conduct the first investigation of the {\em discrete\/} data-center optimization problem,
where the values $x_t$, specifying the number of active servers at any time $t\in\{1,\ldots,T\}$, must be integer 
valued. Thereby, we seek truly feasible solutions. 

First, in Section~\ref{sec:poly} we study the offline algorithm. We show that optimal solutions can be computed in polynomial
time. Our algorithm is different from the convex optimization approach by Lin et al.~\cite{LinWierman2013}. We propose a new, yet
natural graph-based representation of the discrete data-center optimization problem. We construct a grid-structured graph 
containing a vertex $v_{t,j}$, for each $t\in \{1,\ldots, T\}$ and $j\in\{0,\ldots,m\}$. Edges represent right-sizing operations,
i.e., changes in the number of active servers, and are labeled with operating and switching costs. An optimal solution
could be determined by a shortest path computation. However, the resulting algorithm would have a pseudo-polynomial running
time. Instead, we devise an algorithm that improves solutions iteratively using binary search.
In each iteration the algorithm uses only a constant number of graph layers. The resulting running time is $\mathcal{O}(T\log m)$.

The remaining paper focuses on the online problem and develops tight bounds on the competitiveness. 
In Section~\ref{sec:lcp}, we adapt the LCP algorithm by Lin et al.~\cite{LinWierman2013} to the discrete data-center optimization
problem. We prove that LCP is 3-competitive, as in the continuous setting. We remark that our analysis is different from
that by Lin et al.~\cite{LinWierman2013}. Specifically, our analysis resorts to the discrete structure of the problem and identifies respective
properties. The analysis by Lin et al.~\cite{LinWierman2013} relates to their convex optimization approach that characterizes
optimal solutions in the continuous setting. 

In Section~\ref{sec:random}, we develop a randomized online algorithm which is 2-competitive against an oblivious adversary. It is based on the algorithm of Bansal et al.~\cite{Bansal2015} that achieves a competitive ratio of 2 for the continuous setting. Our algorithm works as follows. First, it extends the given discrete problem instance to the continuous setting. Then, it calculates a 2-competitive fractional schedule by using the algorithm of Bansal et al. Finally, we round the fractional schedule randomly to obtain an integral schedule. By using the right rounding technique it can be shown that the resulting schedule is 2-competitive according to the original discrete problem instance.

In Section~\ref{sec:lower}, we devise lower bounds. We prove that no deterministic online algorithm can obtain a competitive
ratio smaller than~3. Hence, LCP achieves an optimal competitive factor. Interestingly, while LCP does not attain an
optimal competitiveness in the continuous data-center optimization problem (where the $x_t$ may take fractional values), 
it does so in the discrete problem (according to deterministic algorithms).  
We prove that the lower bound of~3 on the best possible competitive ratio also holds
for a more restricted setting, originally introduced by Lin et al.~\cite{LinWierman2011infocom} in the conference publication of their paper.
Specifically, the problem is to find a vector $X=(x_1,\ldots,x_T)$ that minimizes
\begin{equation}
\sum_{t=1}^T x_t f(\lambda_t/x_t) + \beta \sum_{t=1}^T (x_t-x_{t-1})^+,
\label{eqn:model:lin}
\end{equation}
subject to $x_t\geq \lambda_t$, for $t\in \{1,\ldots,T\}$. Here $\lambda_t$ is the incoming workload at time $t$ and $f(z)$
is a non-negative convex function representing the operating cost of a single server running with load $z \in [0,1]$. Since $f$ is convex, 
it is optimal to distribute the jobs equally to all active servers, so that the operating cost at time $t$ is $x_t f(\lambda_t / x_t)$. 
This problem setting is more restricted in that there is only a single function $f$ modeling operating cost over the time horizon. Nonetheless,
it is well motivated by real data center environments.  

Furthermore, in Section~\ref{sec:lower}, we address the continuous data-center optimization problem and prove that no deterministic online 
algorithm can achieve a competitive ratio smaller than 2. The same result was shown by Antoniadis and Schewior~\cite{Antoniadis2017}. We develop an
independent proof that can again be extended to the more restricted optimization problem stated in~(\ref{eqn:model:lin}), i.e.,
the lower bound of~2 on the best competitiveness holds in this setting as well.

In addition, we show that there is no randomized online algorithm with a competitive ratio smaller than~2, so our randomized online algorithm presented in Section~\ref{sec:random} is optimal. The construction of the lower bound uses some results of the lower bound proof for the continuous setting. Again, we show that the lower bound holds for the more restricted model.

Finally, in Section~\ref{sec:lower}, we analyze online algorithms with a finite prediction window, i.e., at time~$t$ an online algorithm knows the 
current as well as the next $w$ arriving functions $f_t, \dots, f_{t+w}$. We show that all our lower bounds, for both settings (continuous and discrete) 
and both models (general and restricted), still hold.

\section{An optimal offline algorithm} 
\label{sec:poly}

In this section we study the offline version of the discrete data-center optimization problem. We develop
an algorithm that computes optimal solutions in $\mathcal{O}(T\log m)$ time. 

\setlength{\textfloatsep}{12pt plus 1.0pt minus 2.0pt}
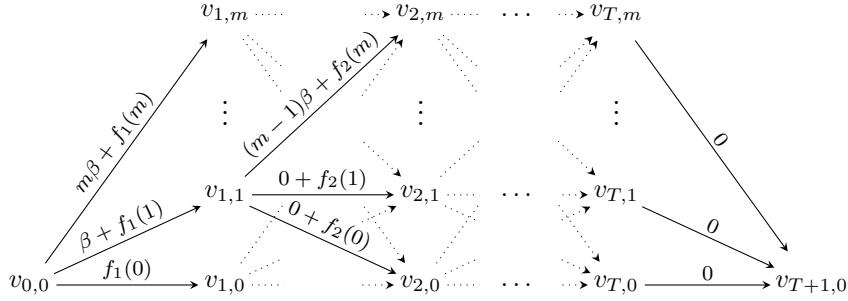
\begin{figure} 
	\setlength{\abovecaptionskip}{0pt plus 0pt minus 0pt}
	\setlength{\belowcaptionskip}{0pt plus 0pt minus 0pt}
	
	\centering
	\begin{tikzpicture}
	\pgfmathsetmacro{\xBegin}{0}
	\pgfmathsetmacro{\xDist}{2.6}
	\pgfmathsetmacro{\yBegin}{0}
	\pgfmathsetmacro{\yDist}{1.2}
	\pgfmathsetmacro{\yEnd}{\yBegin + \yDist * 3}
	\pgfmathsetmacro{\ratioDottedEdge}{0.30}
	\pgfmathsetmacro{\ratioDottedEdgeInv}{1 - \ratioDottedEdge}
	
	\tikzstyle{vertex}=[circle, inner sep=1pt, font=\small]
	\tikzstyle{edge}=[-stealth]
	\tikzstyle{edgetext}=[font=\scriptsize, , above=-2pt, sloped]

	\node[vertex] (v00) at (\xBegin, \yBegin) {$v_{0,0}$};
	
	\node[vertex] (v10) at (\xBegin + \xDist, \yBegin) {$v_{1,0}$};
	\node[vertex] (v11) at (\xBegin + \xDist, \yBegin + \yDist) {$v_{1,1}$};
	\node[] (v1x) at (\xBegin + \xDist, \yEnd/2 + \yBegin/2 + \yDist/2 ) {$\vdots$};
	\node[vertex] (v1m) at (\xBegin + \xDist, \yEnd) {$v_{1,m}$};
	
	\draw[edge] (v00) edge node[edgetext] {$ f_1(0)$} (v10);
	\draw[edge] (v00) edge node[edgetext] {$\beta + f_1(1)$} (v11);
	\draw[edge] (v00) edge node[edgetext] {$m\beta + f_1(m)$} (v1m);
	
	\node[vertex] (v20) at (\xBegin + \xDist * 2, \yBegin) {$v_{2,0}$};
	\node[vertex] (v21) at (\xBegin + \xDist * 2, \yBegin + \yDist) {$v_{2,1}$};
	\node[] (v2x) at (\xBegin + \xDist * 2, \yEnd/2 + \yBegin/2 + \yDist/2 ) {$\vdots$};
	\node[vertex] (v2m) at (\xBegin + \xDist * 2, \yEnd) {$v_{2,m}$};
	
	\draw[] (v10) edge [dotted] ($(v10)!\ratioDottedEdge!(v20)$);
	\draw[] (v10) edge [dotted] ($(v10)!\ratioDottedEdge!(v21)$);
	\draw[] (v10) edge [dotted] ($(v10)!\ratioDottedEdge!(v2m)$);
	
	\draw[] (v1m) edge [dotted] ($(v1m)!\ratioDottedEdge!(v20)$);
	\draw[] (v1m) edge [dotted] ($(v1m)!\ratioDottedEdge!(v21)$);
	\draw[] (v1m) edge [dotted] ($(v1m)!\ratioDottedEdge!(v2m)$);
	
	\draw[] (v20) edge [stealth-, dotted] ($(v10)!\ratioDottedEdgeInv!(v20)$);
	\draw[] (v21) edge [stealth-,dotted] ($(v10)!\ratioDottedEdgeInv!(v21)$);
	\draw[] (v2m) edge [stealth-,dotted] ($(v10)!\ratioDottedEdgeInv!(v2m)$);
	
	\draw[] (v20) edge [stealth-, dotted] ($(v1m)!\ratioDottedEdgeInv!(v20)$);
	\draw[] (v21) edge [stealth-,dotted] ($(v1m)!\ratioDottedEdgeInv!(v21)$);
	\draw[] (v2m) edge [stealth-,dotted] ($(v1m)!\ratioDottedEdgeInv!(v2m)$);
	
	
	\draw[edge] (v11) edge node[edgetext] {$0 + f_2(0)$} (v20);
	\draw[edge] (v11) edge node[edgetext] {$0 + f_2(1)$} (v21);
	\draw[edge] (v11) edge node[edgetext] {$(m-1)\beta + f_2(m)$} (v2m);
	
	
	\node[vertex] (vx0) at (\xBegin + \xDist * 2.5, \yBegin) {$\dots$};
	\node[vertex] (vx1) at (\xBegin + \xDist * 2.5, \yBegin + \yDist) {$\dots$};
	\node[vertex] (vxm) at (\xBegin + \xDist * 2.5, \yEnd) {$\dots$};
	
	\node[vertex] (vT0) at (\xBegin + \xDist * 3, \yBegin) {$v_{T,0}$};
	\node[vertex] (vT1) at (\xBegin + \xDist * 3, \yBegin + \yDist) {$v_{T,1}$};
	\node[] (vTx) at (\xBegin + \xDist * 3, \yEnd/2 + \yBegin/2 + \yDist/2 ) {$\vdots$};
	\node[vertex] (vTm) at (\xBegin + \xDist * 3, \yEnd) {$v_{T,m}$};
	
	\draw[] (v20) edge [dotted] ($(v20)!\ratioDottedEdge!(vT0)$);
	\draw[] (v20) edge [dotted] ($(v20)!\ratioDottedEdge!(vT1)$);
	\draw[] (v20) edge [dotted] ($(v20)!\ratioDottedEdge!(vTm)$);
	
	\draw[] (v21) edge [dotted] ($(v21)!\ratioDottedEdge!(vT0)$);
	\draw[] (v21) edge [dotted] ($(v21)!\ratioDottedEdge!(vT1)$);
	\draw[] (v21) edge [dotted] ($(v21)!\ratioDottedEdge!(vTm)$);
	
	\draw[] (v2m) edge [dotted] ($(v2m)!\ratioDottedEdge!(vT0)$);
	\draw[] (v2m) edge [dotted] ($(v2m)!\ratioDottedEdge!(vT1)$);
	\draw[] (v2m) edge [dotted] ($(v2m)!\ratioDottedEdge!(vTm)$);
	
	\draw[] (vT0) edge [stealth-, dotted] ($(v20)!\ratioDottedEdgeInv!(vT0)$);
	\draw[] (vT1) edge [stealth-,dotted] ($(v20)!\ratioDottedEdgeInv!(vT1)$);
	\draw[] (vTm) edge [stealth-,dotted] ($(v20)!\ratioDottedEdgeInv!(vTm)$);
	
	\draw[] (vT0) edge [stealth-, dotted] ($(v21)!\ratioDottedEdgeInv!(vT0)$);
	\draw[] (vT1) edge [stealth-,dotted] ($(v21)!\ratioDottedEdgeInv!(vT1)$);
	\draw[] (vTm) edge [stealth-,dotted] ($(v21)!\ratioDottedEdgeInv!(vTm)$);
	
	\draw[] (vT0) edge [stealth-, dotted] ($(v2m)!\ratioDottedEdgeInv!(vT0)$);
	\draw[] (vT1) edge [stealth-,dotted] ($(v2m)!\ratioDottedEdgeInv!(vT1)$);
	\draw[] (vTm) edge [stealth-,dotted] ($(v2m)!\ratioDottedEdgeInv!(vTm)$);
	
	\node[vertex] (vEnd) at (\xBegin + \xDist * 4, \yBegin) {$v_{T+1,0}$};
	
	\draw[edge] (vT0) edge node[edgetext] {$0$} (vEnd);
	\draw[edge] (vT1) edge node[edgetext] {$0$} (vEnd);
	\draw[edge] (vTm) edge node[edgetext] {$0$} (vEnd);
	
	\end{tikzpicture}
	\caption{Construction of the graph.}
	\label{fig:poly:graph:example}
\end{figure}

\subsection{Graph-based approach}
\label{sec:poly:graph}

Our algorithm works with an underlying directed, weighted graph $G=(V,E)$ that we describe first. 
Let $[k] \coloneqq \{1, 2, \dots, k\}$ and $[k]_0 \coloneqq \{0, 1, \dots, k\}$ with $k \in \mathbb{N}$. 
For each $t\in [T]$ and each $j \in [m]_0$, there is a vertex $v_{t,j}$, representing the state that
exactly $j$ servers are active at time $t$. Furthermore, there are two vertices $v_{0,0}$ and $v_{T+1,0}$ for
the initial and final states $x_0=0$ and $x_{T+1}=0$. For each $t\in \{2,\ldots,T\}$ and each pair
$j,j'\in [m]_0$, there is a directed edge from $v_{t-1,j}$ to $v_{t,j'}$ having weight $\beta (j'-j)^+ +f_t(j')$.
This edge weight corresponds to the switching cost when changing the number of servers between time $t-1$ and $t$ 
and to the operating cost incurred at time $t$.
Similarly, for $t=1$ and each $j'\in [m]_0$, there
is a directed edge from $v_{0,0}$ to $v_{1,j'}$ with weight $f_1(j')+\beta (j')^+$. Finally, for $t=T$ and each 
$j\in [m]_0$, there is a directed edge from $v_{T,j}$ to $v_{T+1,0}$ of weight~0. The structure of $G$ is depicted in Figure~\ref{fig:poly:graph:example}.

In the following, for each $j\in [m]_0$, vertex set $\mathcal{R}_j =\{ v_{t,j} \mid t \in [T]\}$ is called 
\emph{row}~$j$. For each $t\in [T]$, vertex set $\{ v_{t,j} \mid j \in [m]_0\}$ is called 
\emph{column}~$t$. 

A path between $v_{0,0}$ and $v_{T+1,0}$ represents a schedule. If the path visits $v_{t,j}$, then 
$x_t=j$ servers are active at time $t$. Note that a path visits exactly one vertex in each column, because the directed edges connect adjacent columns. The total length (weight) of a path
is equal to the cost of the corresponding schedule. 
An optimal schedule can be determined using a shortest
path computation, which takes $\mathcal{O}(Tm)$ time in the particular graph $G$.
However, this running time is not polynomial
because the encoding length of an input instance is linear in $T$ and $\log m$, in addition to the encoding
of the functions $f_t$. 

In the following, we present a polynomial time algorithm that improves an initial schedule iteratively using binary search.
In each iteration the algorithm constructs and uses only a constant number of rows of $G$. 

\subsection{Polynomial time algorithm}
\label{sec:poly:polytimealg}


An instance of the data-center optimization problem is defined by the tuple $\mathcal{P} = (T, m, \beta, F)$ with $F = (f_1, \dots, f_T)$. 
We assume that $m$ is a power of two. If this is not the case we can transform the given problem instance $\mathcal{P} = (T, m, \beta, F)$ to $\mathcal{P}' = (T, m', \beta, F')$ with $m' = 2^{\lceil \log m \rceil }$ and 
\begin{equation*}
f'_t(x) = \begin{cases}
f_t(x) & x \leq m \\ 
x \cdot (f_t(m) +\epsilon) & \text{otherwise}
\end{cases}
\end{equation*}
with $\epsilon > 0$. The term $x \cdot f_t(m)$ ensures that $f'_t(x)$ is a convex function, since the greatest slope of $f_t$ is $f_t(m) - f_t(m-1) \leq f_t(m)$. The inequality holds because $f_t(x) \geq 0$ for all $x \in [m]_0$. The additional term $x \cdot \epsilon$ ensures that it is adverse to use a state $x > m$, because the cost of $f_t(m)$ is always smaller.

Our algorithm uses $\log m - 1$ iterations denoted reversely by $k = K \coloneqq \log m - 2$ for the first iteration and $k = 0$ for the last iteration. 
The states used in iteration $k$ are always multiples of~$2^k$. 
For the first iteration we use the rows $\mathcal{R}_0, \mathcal{R}_{m/4}, \mathcal{R}_{m/2}, \mathcal{R}_{3m/4}, \mathcal{R}_m$, so that the graph of the first iteration contains the vertices
\begin{equation*}
V^K \coloneqq \{v_{0,0}, v_{T+1, 0}\} \cup \left\{v_{t, \xi \cdot m/4} \mid t \in [T] ,\xi \in \{0,1,2,3,4\}\right\}.
\end{equation*}
The optimal schedule for this simplified problem instance can be calculated in $\mathcal{O}(T)$ time, since each column contains only five states. Given an optimal schedule $\hat{X}^k = (\hat{x}^k_1, \dots, \hat{x}^k_T)$ of iteration $k$, let
\begin{equation*}
V^{k-1}_t \coloneqq \left\{\hat{x}^k_t + \xi \cdot 2^{k-1} \mid \xi \in \{-2, -1, 0, 1, 2\} \right\} \cap [m]_0
\end{equation*}
be the states used in the $t$-th column of the next iteration $k-1$. Thus, the iteration $k-1$ uses the vertex set
\begin{equation*}
V^{k-1} \coloneqq  \{v_{0,0}, v_{T+1, 0}\} \cup \left\{v_{t, j} \mid t \in [T], j \in V^{k-1}_t\right\} .
\end{equation*}

Note that the states with $\xi \in \{-2, 0, 2\}$ were already used in iteration $k$ and we just insert the intermediate states $\xi = -1$ and $\xi = 1$. 
If $\hat{x}^k_t = 0$ (or $\hat{x}^k_t = m$), then $\xi \in \{-2,-1\}$ (or $\xi \in \{1,2\}$) leads to negative states (or to states larger than $m$), thus the set $V^{k-1}_t$ is cut with $[m]_0$ to ensure that we only use valid states.

The last iteration ($k = 0$) provides an optimal schedule for the original problem instance as shown in the next section. The runtime of the algorithm is $\mathcal{O}(T \cdot \log m)$ and thus polynomial.

\subsection{Correctness}
\label{sec:poly:correct}

To prove the correctness of the algorithm described in the previous section we have to introduce some definitions:

Given the original problem instance $\mathcal{P} = (T, m, \beta, F)$, we define $\mathcal{P}_k$ (with $k \in [K]_0 \coloneqq [\log m - 2]_0$) as the data-center optimization problem where we are only allowed to use the states that are multiples of $2^k$. Let $M_{k} \coloneqq \{n \in [m]_0 \mid n \bmod 2^k = 0\}$, so $X$ is a feasible schedule for $\mathcal{P}_k$ if $x_t \in M_k$ holds for all $t \in [T]$. To express $\mathcal{P}_k$ as a tuple, we need another tuple element called $M$ which describes the allowed states, i.e., $x_t \in M$ for all $t \in [T]$. The original problem instance can be written as $\mathcal{P} = (T, m, \beta, F, [m]_0)$ and $\mathcal{P}_k = (T, m, \beta, F, M_k)$.  Note that $\mathcal{P}_0 = \mathcal{P}$. Let $\hat{X}^k = (\hat{x}^k_1, \dots, \hat{x}^k_T)$ denote an optimal schedule for $\mathcal{P}_k$. In general, for any given problem instance $Q = (T, m, \beta, F, M)$, let $\Phi_k(Q) \coloneqq (T, m, \beta, F, M \cap \{i \cdot 2^k \mid i \in \mathbb{N}\})$, so $\Phi_k(\mathcal{P}) = \mathcal{P}_k$.

Instead of using only states that are multiple of $2^k$ we can also scale a given problem instance $Q = (T,m,\beta, F, M)$ as follows. Let 
\begin{equation*}
\Psi_l(Q) \coloneqq (T, m/2^l, \beta \cdot 2^l, F', M')
\end{equation*}
with $M' \coloneqq \{x / 2^l \mid x \in M\}$, $F' = (f'_1, \dots, f'_T)$ and $f'_t(x) \coloneqq f_t(x \cdot 2^l)$. Given a schedule $X = (x_1, \dots, x_T)$ for $Q$ with cost $C^Q(X)$, the corresponding schedule $X' = (x_1 / 2^l, \dots, x_T / 2^l)$ for $\Psi_l(Q)$ has exactly the same cost, i.e., $C^Q(X) = C^{\Psi_l(Q)}(X')$. Note that the problem instance $\Psi_k(\mathcal{P}_k)$ uses all integral states less than or equal to $m / 2^k$, so there are no gaps.

Furthermore, we introduce a continuous version of any given problem instance $Q$ where fractional schedules are allowed.
Let $\bar{Q} = (T, m, \beta, \allowbreak \bar{F}, [0,m])$ with $\bar{F} = (\bar{f}_1, \dots, \bar{f}_T)$ be the continuous extension of the problem instance $Q = (T, m, \beta, F, M)$, where $x_t \in [0,m]$, $\bar{f}_t : [0,m] \rightarrow \mathbb{R}_{\geq 0}$ and 
\begin{equation}\label{eqn:poly:correct:extension}
\bar{f}_t(x) \coloneqq \begin{cases}
f_t(x) & \text{if }x \in M \\
(\lceil x \rceil - x) f_t(\lfloor x \rfloor) + (x - \lfloor x \rfloor) f_t(\lceil x \rceil) & \text{else} .
\end{cases}
\end{equation}
The operating cost of the fractional states is linearly interpolated, thus $\bar{f}_t$ is convex for all $t \in [T]$. 
Let $X^\ast = (x^\ast_1, \dots, x^\ast_T) \in [0,m]^T$ be an optimal schedule for $\bar{\mathcal{P}}$.

The set of all optimal schedules for a given problem instance $Q$ is denoted by $\Omega(Q)$.
Let $C^Q_{[a,b]}(X) \coloneqq \sum_{t=a}^{b} f_t(x_t) + \sum_{t=a+1}^{b} \beta (x_t - x_{t-1})^+ $ 
be the cost during the time interval $\{a, a+1, \dots, b\}$. We define $f_0(x) \coloneqq 0$, so $C^Q_{[0,T]}(X) = C^Q(X)$. 

Now, we are able to prove the correctness of our algorithm. We begin with a simple lemma showing the relationship between the functions $\Phi$ and $\Psi$.

\begin{lemma} \label{lemma:poly:correct:phipsi}
	The problem instances $\Phi_{k-l}(\Psi_l(\mathcal{P}_l))$ and $\Psi_l(\mathcal{P}_k)$ are equivalent.
\end{lemma}

\begin{proof}
	We begin with $\Phi_{k-l}(\Psi_l(\mathcal{P}_l))$ and simply apply the definitions of $\mathcal{P}_l$, $\Psi_l$ and $\Phi_{k-l}$. 
	\begin{align*}
		& \Phi_{k-l}(\Psi_l(\mathcal{P}_l)) \\
		={}& \Phi_{k-l}(\Psi_l(\Phi_l((T, m, \beta, F, \{n \in [m]_0\})))) \\
		={}& \Phi_{k-l}(\Psi_l((T, m, \beta, F, \{n \in [m]_0 \mid n \bmod  2^l = 0\}))) \\
		={}& \Phi_{k-l}((T, m / 2^l, \beta \cdot 2^l, F_l, \{n \in [m / 2^l]_0 \mid n \bmod 1 = 0\}) )\\
		={}& (T, m / 2^l, \beta \cdot 2^l, F_l, \{n \in [m / 2^l]_0 \mid n \bmod 2^{k-l} = 0\}). \\
		\intertext{Afterwards, we use the definitions of $\Psi_l$, $\Phi_k$ and $\mathcal{P}_k$ and get $\Psi_l(\mathcal{P}_k)$ as shown below:}
		& (T, m / 2^l, \beta \cdot 2^l, F_l, \{n \in [m / 2^l]_0 \mid n \bmod 2^{k-l} = 0\}) \\
		={}& \Psi_l((T, m, \beta, F, \{n \in [m]_0 \mid n \bmod 2^{k} = 0\}))) \\
		={}& \Psi_l(\Phi_{k}((T, m, \beta, F, \{n \in [m]_0\}))) \\
		={}& \Psi_l(\mathcal{P}_k). \qedhere
	\end{align*}
\end{proof}

The next technical lemma will be needed later. Informally, it demonstrates that optimal solutions for the reduced discrete problem instance and the 
continuous problem instance behave similarly. 

\begin{lemma} \label{lemma:poly:correct:direction}
	Let $Y \in \Omega(\mathcal{P}_k)$ be an optimal schedule for $\mathcal{P}_k$ with $k \in [K]_0$. There exists an optimal solution $X^\ast \in \Omega(\bar{\mathcal{P}})$ such that 
	\begin{equation} \label{eqn:poly:correct:border:slopeproduct}
		(y_t - y_{t-1}) \cdot (x^\ast_t - x^\ast_{t-1}) \geq 0
	\end{equation}
	holds for all $t \in [T]$ with $|y_t - x^\ast_t| \geq 2^k$ or $|y_{t-1} - x^\ast_{t-1}| \geq 2^k$.
\end{lemma}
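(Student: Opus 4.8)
The plan is an exchange (\emph{uncrossing}) argument driven by a potential that measures how far the continuous optimum is from the discrete one. First I note that the given $Y\in\Omega(\mathcal{P}_k)$ is itself a feasible schedule for the continuous relaxation $\bar{\mathcal{P}}$, and that $\bar f_t(y_t)=f_t(y_t)$ because $y_t\in M_k$; hence $C^{\bar{\mathcal{P}}}(Y)=C^{\mathcal{P}_k}(Y)$. Among all continuous optima I would select the one closest to $Y$, i.e.\ a minimizer $X^\ast$ of $\Phi(X):=\sum_{t=1}^{T}|y_t-x_t|$ over $\Omega(\bar{\mathcal{P}})$. Such a minimizer exists: $\Omega(\bar{\mathcal{P}})$ is a nonempty, compact, convex set (the objective is convex, since each $\bar f_t$ is convex and each $(x_t-x_{t-1})^+$ is convex), and $\Phi$ is continuous. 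The whole proof is by contradiction: assuming this $X^\ast$ still violates \eqref{eqn:poly:correct:border:slopeproduct} at some time $\tau$ with $|y_\tau-x^\ast_\tau|\ge 2^k$ or $|y_{\tau-1}-x^\ast_{\tau-1}|\ge 2^k$, I would exhibit another optimal continuous schedule strictly closer to $Y$, contradicting the choice of $X^\ast$.

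\textbf{Tools.} Two identities make the bookkeeping tractable. Writing $U_t:=\max(y_t,x^\ast_t)$ and $L_t:=\min(y_t,x^\ast_t)$ (both feasible for $\bar{\mathcal{P}}$, with $U_0=L_0=U_{T+1}=L_{T+1}=0$), the multiset equality $\{U_t,L_t\}=\{y_t,x^\ast_t\}$ gives both $\sum_t\bar f_t(U_t)+\sum_t\bar f_t(L_t)=\sum_t\bar f_t(y_t)+\sum_t\bar f_t(x^\ast_t)$ and $\Phi(U)+\Phi(L)=\Phi(X^\ast)$, so coordinatewise max/min never changes total operating cost or total potential. For the switching cost I would use the layer-cake decomposition $(z_t-z_{t-1})^+=\int \mathbf{1}[z_{t-1}\le\theta< z_t]\,d\theta$, reducing the claim, at each level $\theta$ and step $t$, to an elementary fact about the binary sequences $\mathbf{1}[y_t>\theta]$ and $\mathbf{1}[x^\ast_t>\theta]$: their OR and AND together have no more upward transitions than the two originals, with a strict deficit exactly when one steps up while the other steps down. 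Integrating shows the switching cost weakly decreases under the swap and strictly decreases precisely where $Y$ and $X^\ast$ cross in opposite directions. A further handle is the identity $\sum_t(x_t-x_{t-1})^+=\sum_t(x_{t-1}-x_t)^+$, valid because $x_0=x_{T+1}=0$, which lets me treat the case $y$ increasing / $x^\ast$ decreasing without loss of generality despite the asymmetric (power-up only) switching term.

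\textbf{The realigning move.} When $Y$ and $X^\ast$ actually cross at $\tau$, the uncrossing of the previous paragraph strictly lowers switching cost at no operating cost, and restricting it to the maximal time-block bounded by the adjacent meeting points of the two schedules should yield a genuinely cheaper continuous schedule, contradicting optimality of $X^\ast$. The harder, and I expect dominant, case is a directional disagreement in which $Y$ and $X^\ast$ do \emph{not} vertically overlap, i.e.\ one lies at least $2^k$ above the other throughout step $\tau$; here max/min is inert. For this I would instead perturb $X^\ast$ locally: on a maximal interval containing $\tau$ on which $X^\ast$ sits on one side of $Y$ and is constant (a plateau forced by optimality), shift the plateau a small amount toward $Y$. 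Convexity of the $\bar f_t$ controls the operating-cost change, the identity above controls the asymmetric switching term, and the hypothesis $\ge 2^k$ guarantees there is room to move without crossing $Y$ or leaving $[0,m]$. The move is cost-neutral but strictly decreases $\Phi$, the desired contradiction.

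\textbf{Main obstacle.} I expect the crux to be exactly this local perturbation: arranging that the shift of an optimal plateau of $X^\ast$ toward $Y$ is simultaneously cost-neutral (so that optimality is preserved) and $\Phi$-decreasing, in the presence of the one-sided switching cost. The real work is the case distinction --- $X^\ast$ above versus below $Y$, and converging versus diverging disagreements --- and verifying that the $2^k$-gap always supplies enough slack for the realigning step; the layer-cake and up-equals-down identities are the levers, but the feasibility and optimality bookkeeping across the chosen interval is where the difficulty lies.
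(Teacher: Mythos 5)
Your high-level architecture differs from the paper's: you fix an extremal continuous optimum (the $\Phi$-minimizer closest to $Y$) and aim to contradict its extremality, whereas the paper takes an arbitrary $X^\ast\in\Omega(\bar{\mathcal{P}})$ and repairs violations one coordinate at a time, each repair justified by locating the minimizers $x^{\min-}_t,x^{\min+}_t$ of $f_t$ relative to $y_t$ and $x^\ast_t$. Your skeleton is legitimate and arguably cleaner (it sidesteps the question of whether repairing one $t$ disturbs another), but both of the load-bearing steps are missing, and the first fails as stated. For the crossing case, the layer-cake computation is correct ($C^{\bar{\mathcal{P}}}(U)+C^{\bar{\mathcal{P}}}(L)<C^{\bar{\mathcal{P}}}(Y)+C^{\bar{\mathcal{P}}}(X^\ast)$ when the indicators transition oppositely), but it yields no contradiction: $U$ and $L$ are feasible only for $\bar{\mathcal{P}}$, not for $\mathcal{P}_k$, so the only inference from $C^{\bar{\mathcal{P}}}(U),C^{\bar{\mathcal{P}}}(L)\geq C^{\bar{\mathcal{P}}}(X^\ast)$ is $C^{\bar{\mathcal{P}}}(X^\ast)<C^{\bar{\mathcal{P}}}(Y)$, which is unremarkable; and your proposed localization ``between adjacent meeting points'' presupposes meeting points of $Y$ and $X^\ast$ that need not exist (e.g.\ $Y$ always on the grid $M_k$ and $X^\ast$ always strictly off it).

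The second, admittedly dominant, case is where the real content lies, and the proposal does not supply the mechanism that makes the shift cost-neutral. Convexity of $\bar f_t$ alone does not control the \emph{sign} of the operating-cost change when you push a piece of $X^\ast$ toward $Y$; what is needed is that each relevant $f_t$ is \emph{flat} on the whole range swept by the move. The paper extracts this flatness by playing the two optimality assumptions against each other: optimality of $X^\ast$ for $\bar{\mathcal{P}}$ pins the minimizer of $f_t$ on one side of $x^\ast_t$, and optimality of $Y$ for $\mathcal{P}_k$ pins it on the other side of $y_t$ --- and it is precisely here that the $2^k$-gap is used, to guarantee an alternative state of $M_k$ strictly between $y_t$ and $x^\ast_t$ against which $Y$'s optimality can be invoked. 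Your sketch never uses the optimality of $Y$, and assigns the $2^k$-gap only the weaker role of ``room to move,'' so the claim that the perturbation is cost-neutral is unsupported; this is a genuine gap, not a routine verification. (A smaller issue: the identity $\sum_t(x_t-x_{t-1})^+=\sum_t(x_{t-1}-x_t)^+$ holds only when the terminal state is forced to $0$ inside the sum; for local modifications that change $x_T$, or for the truncated cost used in the paper, it is an inequality, so the ``without loss of generality'' symmetry between the increasing and decreasing cases needs justification.)
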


\begin{proof}
	Let $x^{\text{min}+}_t \coloneqq \max ( \argmin_x f_t(x))$ be the greatest state that minimizes $f_t$ and let $x^{\text{min}-}_t \coloneqq \min (\argmin_x f_t(x))$ be the smallest state that minimizes $f_t$. Let $X^\ast \in \Omega(\bar{\mathcal{P}})$ be an arbitrary optimal solution. We will show that it is possible to modify $X^\ast$ such that it fulfills equation~\eqref{eqn:poly:correct:border:slopeproduct} without increasing the cost. The modified schedule is denoted by $\tilde{X}^\ast$. We differ between several cases according to the relations of $y_{t-1}, y_t, x_{t-1}$ and $x_t$:	
	\begin{myEnumerate}
		\item $x^\ast_{t-1} > x^\ast_t $
		\begin{myEnumerate}
			\item $y_{t-1} \geq y_t$ \\
			Equation~\eqref{eqn:poly:correct:border:slopeproduct} is fulfilled.
			\item $y_{t-1} < y_t$
			\begin{myEnumerate}
				\item $y_{t-1} \leq x^\ast_{t-1}$ \\
				If $x^{\text{min}+}_{t-1} < x^\ast_{t-1}$, then using 
				$\tilde{x}^\ast_{t-1} \coloneqq x^\ast_{t-1} - \epsilon$ (for a small $\epsilon > 0$)
				instead of $x^\ast_{t-1}$ would lead to a better solution, because $f_{t-1}$ is a convex function and the switching costs between the time slots $t-2$ and $t$ are not increased, so 
				\begin{equation} \label{eqn:poly:correct:direction:a}
				x^{\text{min}+}_{t-1} \geq x^\ast_{t-1}
				\end{equation}
				must be fulfilled. If $x^{\text{min}-}_{t} > x^\ast_t$, then 
				$\tilde{x}^\ast_t \coloneqq x^\ast_{t} + \epsilon$ 
				would lead to a better solution for the same reason, so 
				\begin{equation} \label{eqn:poly:correct:direction:b}
				x^{\text{min}-}_{t} \leq x^\ast_t .
				\end{equation}
				\begin{myEnumerate}
					\item $y_t \leq x^\ast_{t-1}$ \\
					If $x^{\text{min}-}_{t-1} > y_{t-1}$, then using 
					$
					\tilde{y}_{t-1} \coloneqq y_{t} 
					\alignstacktinytext{\text{(i)}}{\leq} x^\ast_{t-1} 
					\alignstacktinytext{\eqref{eqn:poly:correct:direction:a}}{\leq} x^{\text{min}+}_{t-1}
					$
					instead of $y_{t-1}$ would lead to a better solution, so 
					\begin{equation} \label{eqn:poly:correct:direction:c}
					x^{\text{min}-}_{t-1} \leq y_{t-1}
					\end{equation}
					must be fulfilled. 
					\begin{myEnumerate}
						\item $x^\ast_t \geq y_{t-1}$ \\ 
							We set $\tilde{x}^\ast_{t-1} \coloneqq x^\ast_t$, so equation~\eqref{eqn:poly:correct:border:slopeproduct} is fulfilled. Since
							$
							x^{\text{min}-}_{t-1} 
							\alignstacktinytext{\eqref{eqn:poly:correct:direction:c}}{\leq} y_{t-1} 
							\alignstacktinytext{\text{C1}}{\leq} x^\ast_t 
							\alignstacktinytext{\text{I}}{<} x^\ast_{t-1} 
							\alignstacktinytext{\eqref{eqn:poly:correct:direction:a}}{\leq} x^{\text{min}+}_{t-1} 
							$
							, the cost of $\tilde{X}^\ast$ is not increased. 
						\item $x^\ast_t < y_{t-1}$ \\
							We set $\tilde{x}^\ast_{t-1} \coloneqq y_{t-1}$ which does not increase the cost of $\tilde{X}^\ast$ because 
							$x^\ast_{t-1} 
							\alignstacktinytext{\text{(a)}}{\geq} y_{t-1} 
							\alignstacktinytext{\text{C2}}{>} x^\ast_t$ 
							and 
							$x^{\text{min}-}_{t-1} 
							\alignstacktinytext{\eqref{eqn:poly:correct:direction:c}}{\leq} y_{t-1} 
							\alignstacktinytext{\text{(a)}}{\leq} x^\ast_{t-1} 
							\alignstacktinytext{\eqref{eqn:poly:correct:direction:a}}{\leq} x^{\text{min}+}_{t-1}$.
							If $x^{\text{min}+}_t < y_t$, then $\tilde{y}_t \coloneqq y_{t-1}$ would lead to a better solution, so $x^{\text{min}+}_t \geq y_t$. 
							We set $\tilde{x}^\ast_t \coloneqq \tilde{x}^\ast_{t-1}$, so equation~\eqref{eqn:poly:correct:border:slopeproduct} is fulfilled. Since 
							$x^{\text{min}-}_{t} 
							\alignstacktinytext{\eqref{eqn:poly:correct:direction:b}}{\leq} x^\ast_t 
							\alignstacktinytext{\text{C2}}{<} \tilde{x}^\ast_{t-1} 
							\alignstacktinytext{\text{(B)}}{<} y_t 
							\leq x^{\text{min}+}_{t}$,
							the cost of $\tilde{X}^\ast$ is not increased. 
					\end{myEnumerate}
					\item $y_t > x^\ast_{t-1}$ 
					\begin{myEnumerate}
						\item $x^\ast_t \leq y_{t-1}$ \\
							We have 
							$y_t 
							\alignstacktinytext{\text{(B)}}{>} y_{t-1} 
							\alignstacktinytext{\text{C1}}{\geq} x^\ast_t 
							\alignstacktinytext{\eqref{eqn:poly:correct:direction:b}}{\geq} x^{\text{min}-}_t$.
							If $x^{\text{min}+}_t < y_t$, then $\tilde{y}_t \coloneqq y_{t-1}$ would lead to a better solution, so $x^{\text{min}+}_t \geq y_t$. 
							We set $\tilde{x}^\ast_t \coloneqq x^\ast_{t-1}$, so equation~\eqref{eqn:poly:correct:border:slopeproduct} is fulfilled. Since 
							$x^{\text{min}-}_{t} 
							\alignstacktinytext{\eqref{eqn:poly:correct:direction:b}}{\leq} x^\ast_t 
							\alignstacktinytext{\text{I}}{<} x^\ast_{t-1} 
							\alignstacktinytext{\text{(ii)}}{<} y_t 
							\leq x^{\text{min}+}_{t}$,
							the cost of $\tilde{X}^\ast$ is not increased.
						\item $x^\ast_t > y_{t-1}$ and $|y_{t-1} - x^\ast_{t-1}| \geq 2^k$ \\
							By $y_{t-1} \alignstacktinytext{\text{(a)}}{\leq} x^\ast_{t-1}$ and $|y_{t-1} - x^\ast_{t-1}| \geq 2^k$, there exists a state $\tilde{y}_{t-1} \in M_k$ with $y_{t-1} < \tilde{y}_{t-1} \leq x^\ast_{t-1}$. If $x^{\text{min}-}_{t-1} > y_{t-1}$, then using $\tilde{y}_{t-1}$ instead of $y_{t-1}$ would lead to a better solution, so $x^{\text{min}-}_{t-1} \leq y_{t-1}$ must be fulfilled.  
							We set $\tilde{x}^\ast_{t-1} \coloneqq x^\ast_t$, so equation~\eqref{eqn:poly:correct:border:slopeproduct} is fulfilled. Since  
							$x^{\text{min}-}_{t-1} 
							\leq y_{t-1} 
							\alignstacktinytext{\text{C2}}{<} x^\ast_t 
							\alignstacktinytext{\text{I}}{<} x^\ast_{t-1} 
							\alignstacktinytext{\eqref{eqn:poly:correct:direction:a}}{\leq} x^{\text{min}+}_{t-1}$,
							the cost of $\tilde{X}^\ast$ is not increased. 
						\item $x^\ast_t > y_{t-1}$ and $|y_{t} - x^\ast_{t}| \geq 2^k$ \\
							Since 
							$x^\ast_t 
							\alignstacktinytext{\text{I}}{<} x^\ast_{t-1} 
							\alignstacktinytext{\text{(ii)}}{<} y_t$
							and $|y_{t} - x^\ast_{t}| \geq 2^k$,
							there exists a state $\tilde{y}_{t} \in M_k$ with $x^\ast_{t} \leq \tilde{y}_{t} < y_t$. 
							If $x^{\text{min}+}_t < y_t$, then using $\tilde{y}_t$ instead of $y_t$ would lead to a better solution, so $x^{\text{min}+}_t \geq y_t$. 
							We set $\tilde{x}^\ast_t \coloneqq x^\ast_{t-1}$, so equation~\eqref{eqn:poly:correct:border:slopeproduct} is fulfilled. Since 
							$x^{\text{min}-}_{t} 
							\alignstacktinytext{\eqref{eqn:poly:correct:direction:b}}{\leq} x^\ast_t 
							\alignstacktinytext{\text{I}}{<} x^\ast_{t-1} 
							\alignstacktinytext{\text{(ii)}}{<}< y_t 
							\leq x^{\text{min}+}_{t}$,
							the cost of $\tilde{X}^\ast$ is not increased. 
					\end{myEnumerate}
				\end{myEnumerate}
				\item $y_{t-1} > x^\ast_{t-1}$ \\
				If $x^{\text{min}-}_{t} > x^\ast_t$, then $\tilde{x}^\ast_t \coloneqq x^\ast_t + \epsilon$ would lead to a better solution, so $x^{\text{min}-}_{t} \leq x^\ast_t$. 
				If $x^{\text{min}+}_t < y_t$, then $\tilde{y}_t \coloneqq y_{t-1}$ would lead to a better solution because 
				$x^{\text{min}-}_{t} 
				\leq x^\ast_t 
				\alignstacktinytext{\text{I}}{<} x^\ast_{t-1} 
				\alignstacktinytext{\text{(b)}}{<} y_{t-1} 
				\alignstacktinytext{\text{(B)}}{<} y_t$,
				so $x^{\text{min}+}_t \geq y_t$. 
				We set $\tilde{x}^\ast_t \coloneqq x^\ast_{t-1}$, so equation~\eqref{eqn:poly:correct:border:slopeproduct} is fulfilled. Since 
				$x^{\text{min}-}_{t} 
				\leq x^\ast_t 
				\alignstacktinytext{\text{I}}{<} x^\ast_{t-1} 
				\alignstacktinytext{\text{(b)}}{<} y_{t-1} 
				\alignstacktinytext{\text{(B)}}{<} y_t 
				\leq x^{\text{min}+}_t$,
				the cost of $\tilde{X}^\ast$ is not increased.
			\end{myEnumerate}
		\end{myEnumerate}
		\item $x^\ast_{t-1} = x^\ast_t$ \\
		Equation~\eqref{eqn:poly:correct:border:slopeproduct} is fulfilled.
		\item $x^\ast_{t-1} < x^\ast_t$  \\
		This case is symmetric to case 1. \qedhere
	\end{myEnumerate}
\end{proof}

By using Lemma~\ref{lemma:poly:correct:direction}, we can show that an optimal solution for a discrete problem instance $\mathcal{P}_k$ 
cannot be very far from an optimal solution of the continuous problem instance $\bar{\mathcal{P}}$.

\begin{lemma}\label{lemma:poly:correct:border}
	Let $\hat{X}^k \in \Omega(\mathcal{P}_k)$ be an arbitrary optimal schedule for $\mathcal{P}_k$ with $k \in [K]_0$. There exists an optimal schedule $X^\ast \in \Omega(\bar{\mathcal{P}})$ for $\bar{\mathcal{P}}$ such that $|\hat{x}^k_t - x^\ast_t| < 2^k$ holds for all $t \in [T]$. Formally,
	\begin{equation*}
	\forall k \in [K]_0 : \forall \hat{X}^k \in \Omega(\mathcal{P}_k) : \exists X^\ast \in \Omega(\bar{\mathcal{P}}) : \forall t \in [T] : |\hat{x}^k_t - x^\ast_t| < 2^k .
	\end{equation*}
\end{lemma}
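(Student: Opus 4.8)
The plan is to argue by contradiction, using Lemma~\ref{lemma:poly:correct:direction} to fix a convenient optimal continuous schedule and then to exhibit a local ``exchange'' that is incompatible with the joint optimality of $\hat{X}^k$ and the chosen $X^\ast$. Concretely, among all $X^\ast \in \Omega(\bar{\mathcal{P}})$ that satisfy the slope-product property \eqref{eqn:poly:correct:border:slopeproduct} with respect to $\hat{X}^k$ (the set is non-empty by Lemma~\ref{lemma:poly:correct:direction}), I would pick one minimizing $\sum_{t=1}^T |\hat{x}^k_t - x^\ast_t|$. Suppose for contradiction that $|\hat{x}^k_\tau - x^\ast_\tau| \geq 2^k$ for some $\tau$; by the up/down symmetry of the cost I may assume $\hat{x}^k_\tau - x^\ast_\tau \geq 2^k$. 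I then consider a maximal block $[a,b] \ni \tau$ on which $\hat{x}^k_t - x^\ast_t \geq 2^k$ for every $t$. By maximality, and using the boundary conditions $x_0 = x_{T+1} = 0$, the gap is strictly below $2^k$ at the neighbours $a-1$ and $b+1$.

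Next I introduce two competing schedules. Let $\hat{X}'$ be obtained from $\hat{X}^k$ by subtracting $2^k$ on $[a,b]$; it is feasible for $\mathcal{P}_k$ because each $\hat{x}^k_t$ is a multiple of $2^k$ and $\hat{x}^k_t \geq x^\ast_t + 2^k \geq 2^k$ there. Let $X^{\ast\prime}$ be obtained from $X^\ast$ by adding $2^k$ on $[a,b]$; it is feasible for $\bar{\mathcal{P}}$ since $x^\ast_t + 2^k \leq \hat{x}^k_t \leq m$. The core estimate is $C^{\mathcal{P}_k}(\hat{X}') + C^{\bar{\mathcal{P}}}(X^{\ast\prime}) \leq C^{\mathcal{P}_k}(\hat{X}^k) + C^{\bar{\mathcal{P}}}(X^\ast)$. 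For the operating cost I use that $\hat{x}^k_t$ and $\hat{x}^k_t - 2^k$ are integral, so $f_t$ and $\bar{f}_t$ agree there and every operating term can be measured with the single convex function $\bar{f}_t$. The column-$t$ contribution is then $[\bar{f}_t(x^\ast_t + 2^k) - \bar{f}_t(x^\ast_t)] - [\bar{f}_t(\hat{x}^k_t) - \bar{f}_t(\hat{x}^k_t - 2^k)]$, the difference of two increments of $\bar{f}_t$ over intervals of equal length $2^k$; since $x^\ast_t + 2^k \leq \hat{x}^k_t$ the added interval lies weakly to the left of the subtracted one, so convexity makes each such contribution nonpositive.

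It remains to control the switching cost. Because the shift is by the same amount $2^k$ throughout $[a,b]$, all internal transitions are unchanged and only the terms at the two block boundaries $a$ and $b+1$ are affected. Here I combine two facts: property \eqref{eqn:poly:correct:border:slopeproduct} applies at $t=a$ and $t=b+1$ (the gap is at least $2^k$ at $a$ and at $b$), so across each boundary $\hat{X}^k$ and $X^\ast$ move in the same direction; and every step of $\hat{X}^k$ is a multiple of $2^k$, so the boundary terms contributed by $\hat{X}^k$ are either $0$ or exactly $\mp 2^k$, never a proper fraction of $2^k$. A short case distinction on the signs of the four boundary steps then shows the boundary terms sum to at most $0$, establishing the displayed inequality. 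Finally, optimality of $\hat{X}^k$ for $\mathcal{P}_k$ forces $C^{\mathcal{P}_k}(\hat{X}') \geq C^{\mathcal{P}_k}(\hat{X}^k)$ and optimality of $X^\ast$ for $\bar{\mathcal{P}}$ forces $C^{\bar{\mathcal{P}}}(X^{\ast\prime}) \geq C^{\bar{\mathcal{P}}}(X^\ast)$; together with the inequality both must hold with equality. Thus $X^{\ast\prime}$ is again an optimal continuous schedule that (after re-verifying \eqref{eqn:poly:correct:border:slopeproduct}) lies strictly closer to $\hat{X}^k$, since the gap drops by $2^k$ on the non-empty block $[a,b]$, contradicting the minimality of $X^\ast$.

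I expect the main obstacle to be the switching-cost bookkeeping at the two endpoints of $[a,b]$. The delicate configuration is a boundary at which $\hat{X}^k$ stays flat while $X^\ast$ moves by strictly less than $2^k$: a naive uniform shift leaves a positive residual there, and resolving it is exactly where both the integrality of $\hat{X}^k$ (its steps being multiples of $2^k$) and the directional information of Lemma~\ref{lemma:poly:correct:direction} are essential, if necessary by enlarging the shifted block to the nearest genuine power-up or power-down of $\hat{X}^k$. A secondary technical point is confirming that the modified schedule $X^{\ast\prime}$ still satisfies \eqref{eqn:poly:correct:border:slopeproduct} so that the extremal argument closes. Both points amount to an endpoint-sign analysis that mirrors, and leans on, the extensive case distinction already carried out in the proof of Lemma~\ref{lemma:poly:correct:direction}.
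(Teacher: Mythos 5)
Your strategy --- a simultaneous exchange between $\hat{X}^k$ and $X^\ast$ on a maximal block where the gap is at least $2^k$ --- is genuinely different from the paper's proof, which instead contracts $\hat{X}^k$ toward $X^\ast$ by a convex combination $z_t = \lambda \hat{x}^k_t + (1-\lambda)x^\ast_t$ on each such block, uses Lemma~\ref{lemma:poly:correct:direction} to replace the two boundary terms $\beta(\cdot)^+$ by linear (or zero) expressions valid along the whole segment, and then invokes monotonicity of the resulting convex function toward its minimizer. Your version has a genuine gap exactly at the point you flag: the claim that ``the boundary terms sum to at most $0$'' is false. Take a block $[a,b]$ with $\hat{x}^k_t \equiv 2^k$ on $\{a-1,\dots,b+1\}$, $x^\ast_{a-1} = x^\ast_{b+1} = 2^{k-1}$ and $x^\ast_t = 0$ on $[a,b]$. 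All hypotheses of your setup hold: the gap is $\geq 2^k$ exactly on $[a,b]$, and the slope products at $a$ and $b+1$ are $0$, so \eqref{eqn:poly:correct:border:slopeproduct} is satisfied. Yet after the exchange, $X^{\ast\prime}$ newly powers up by $2^{k-1}$ at $a$ (cost $+\beta 2^{k-1}$), $\hat{X}'$ newly powers up by $2^k$ at $b+1$ (cost $+\beta 2^{k}$) while $X^{\ast\prime}$ saves only $\beta 2^{k-1}$ there, and the operating terms cancel exactly since $x^\ast_t + 2^k = \hat{x}^k_t$ on the block. The combined cost strictly increases by $\beta 2^k$, so the core inequality fails.

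The repair you sketch --- enlarging the block to the nearest genuine step of $\hat{X}^k$ --- conflicts with your operating-cost argument, which needs $x^\ast_t + 2^k \le \hat{x}^k_t$ on every shifted column so that the added increment of $\bar{f}_t$ lies to the left of the subtracted one; on the enlarged columns the gap is below $2^k$ and those contributions can become positive. There is also a secondary unresolved issue: Lemma~\ref{lemma:poly:correct:direction} asserts the existence of \emph{some} optimal $X^\ast$ with property \eqref{eqn:poly:correct:border:slopeproduct}, not that the property survives your block shift, and the set of optimal schedules with that property need not be closed, so the minimizer of $\sum_t |\hat{x}^k_t - x^\ast_t|$ over it need not exist. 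The paper's convex-combination transformation is immune to the configuration above precisely because it never has to compare the two $(\cdot)^+$ terms of the two schedules against each other; it only needs that, along the segment from $\hat{X}^k$ restricted to the block toward $X^\ast$ restricted to the block, each boundary term is linear (or identically zero), which is what the direction information of Lemma~\ref{lemma:poly:correct:direction} supplies. I would recommend either adopting that contraction, or strengthening your exchange so that the amount shifted at each boundary column is adapted to the actual fractional steps of $X^\ast$ rather than being a uniform $2^k$.
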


\begin{proof}
	To get a contradiction, we assume 
	that there exists a $\hat{X}^k \in \Omega(\mathcal{P}_k)$ with $k \in [K]_0$ such that for all optimal schedules $X^\ast \in \Omega(\bar{\mathcal{P}})$ there is at least one $t \in [T]$ with $|\hat{x}^k_t - x^\ast_t| \geq 2^k$.
	
	Let $X^\ast \in \Omega(\bar{\mathcal{P}})$ be an arbitrary optimal schedule that fulfills Lemma~\ref{lemma:poly:correct:direction}, i.e., $(\hat{x}^k_t - \hat{x}^k_{t-1}) \cdot (x^\ast_t - x^\ast_{t-1}) \geq 0$ holds for all $t \in [T]$ with $|\hat{x}^k_t - x^\ast_t| \geq 2^k$ or $|\hat{x}^k_{t-1} - x^\ast_{t-1}| \geq 2^k$.
	
	Given an arbitrary schedule $X = (x_1, \dots, x_T)$, let $J_1, \dots, J_l \subseteq [T]$ be the inclusion maximal time intervals such that $|x_t - x^\ast_t| \geq 2^k$ or $x_t \notin M_k$ holds for all $t \in J_j$ and the sign of $x_t - x^\ast_t$ remains the same during~$J_j$. The set of all $J_j$ with $j \in [l]$ is denoted by $\mathcal{J}(X)$. If $\mathcal{J}(X)$ is empty, then the conditions $|x_t - x^\ast_t| < 2^k$ and $x_t \in M_k$ are fulfilled for all $t \in [T]$. 
	%
	The set of all time slots in $\mathcal{J}(X)$ is denoted by $\mathcal{T}(X) \coloneqq \{t \in J \mid J \in \mathcal{J}(X)\}$ and the number of time slots in $\mathcal{J}$ by $L(X) \coloneqq |\mathcal{T}(X)| =  \sum_{J \in \mathcal{J}} |J|$. 
	
	We will use a recursive transformation $\phi$ that reduces $L(X)$ at least by one for each step, while the cost of $X$ is not increased. Formally, we have to show that $L(\phi(X)) \leq L(X) - 1$ and $C^{\bar{\mathcal{P}}}(\phi(X)) \leq C^{\bar{\mathcal{P}}}(X)$ holds. The first inequality ensures that the recursive procedure will terminate. 
	The transformation described below will produce fractional schedules, however for each $t \in [T] \setminus \mathcal{T}(X)$ it is ensured that $x_t \in M_{k}$. Therefore, if $L(X) = 0$, the corresponding schedule fulfills $|x_t - x^\ast_t| < 2^k$ and $x_t \in M_{k}$ for all $t \in [T]$. 
	
	To describe the transformation, we will use the following notation: A given schedule $Y = (y_1, \dots, y_T)$ with $L(Y) > 0$ is transformed to $Z = \phi(Y) = (z_1, \dots, z_T)$. We assume that $Y$ fulfills the invariant
	\begin{equation} \label{eqn:poly:correct:border:proof:invariant}
	(y_t - y_{t-1}) \cdot (x^\ast_t - x^\ast_{t-1}) \geq 0
	\end{equation} 
	for all $t$ with $\{t-1, t\} \cap \mathcal{J}(Y) \not= \emptyset$, i.e., $t-1$ or $t$ (or both) belong to $\mathcal{J}(Y) $. 
	For $Y = \hat{X}^k$, this is the case since we chose $X^\ast$ such that the property of Lemma~\ref{lemma:poly:correct:direction} is satisfied. We will show that inequality~\eqref{eqn:poly:correct:border:proof:invariant} still holds for the transformed schedule $Z$. 
	
	%
	Let $J \coloneqq \{t_i + 1, \dots, t_{i+1} -1 \} \in \mathcal{J}(Y)$.
	We differ between two cases, in case~1 we handle the intervals with $y_t > x^\ast_t$ (for all $t \in J$) and in case 2 we handle the intervals with $y_t < x^\ast_t$. We will handle case 1 first.
	
	Let $\lceil x \rceil_{n} \coloneqq n \cdot \lceil x/n \rceil$ with $x \in \mathbb{R}$ and $n \in \mathbb{N}$ be the smallest value that is divisible by $n$ and greater than or equal to $x$.
	The schedule $Y$ is transformed to $Z$ with 
	\begin{equation} \label{eqn:poly:correct:border:proof:z}
	z_t \coloneqq \begin{cases}
	y_t & \text{if $t \notin J$} \\
	\lambda \cdot y_t + (1-\lambda ) \cdot x^\ast_t & \text{if $t \in J$}
	\end{cases}
	\end{equation}
	where $\lambda \in [0,1]$ is as small as possible such that $z_t \geq \lceil x^\ast_t \rceil_{2^k}$ holds for all $t \in J$, so at least one time slot $t_= \in J$ satisfies this condition with equality. 
	This transformation ensures that $L(Z) \leq L(Y) - 1$ holds, because there is at least one time slot ($t_=$) in $J$ that fulfills $|z_{t_=}- x^\ast_{t_=}| < 2^k$. 
	
	We still have to show that the total cost is not increased by this operation. The total cost can be written as 
	\begin{equation} \label{eqn:poly:correct:border:cost}
	C^{\bar{\mathcal{P}}}(X) ={}  C^{\bar{\mathcal{P}}}_{[0,t_i]}(X) + \beta (x_{t_i + 1} - x_{t_i})^+ + C^{\bar{\mathcal{P}}}_{[t_i + 1, t_{i+1} - 1]}(X) 
	+ \beta(x_{t_{i+1}} -x_{t_{i+1} - 1} )^+ + C^{\bar{\mathcal{P}}}_{[t_{i+1}, T]}(X) .
	\end{equation}
	We have $C^{\bar{\mathcal{P}}}_{[0,t_i]}(Y) = C^{\bar{\mathcal{P}}}_{[0,t_i]}(Z)$ and $C^{\bar{\mathcal{P}}}_{[t_{i+1}, T]}(Y) = C^{\bar{\mathcal{P}}}_{[t_{i+1}, T]}(Z)$. 
	
	Consider the time slot $t_i$. 
	The invariant (inequality~\eqref{eqn:poly:correct:border:proof:invariant}) says
	that the terms $(y_{t_i + 1}  - y_{t_i})$ and $(x^\ast_{t_i + 1}  - x^\ast_{t_i})$ are both either non-negative or non-positive, so
	in equation~\eqref{eqn:poly:correct:border:cost} the term
	 $\beta (x_{t_i + 1} - x_{t_i})^+$ can be replaced by $\beta (x_{t_i + 1} - x_{t_i})$ or zero, respectively. 
	Analogously, for the time slot $t_{i+1}$, 
	the term $\beta(x_{t_{i+1}} -x_{t_{i+1} - 1} )^+$ in equation~\eqref{eqn:poly:correct:border:cost} can be replaced by $\beta(x_{t_{i+1}} -x_{t_{i+1} - 1} )$ or zero. In the former cases, the cost function is
	\begin{equation*}
	C^{\bar{\mathcal{P}}}(X) ={}  C^{\bar{\mathcal{P}}}_{[0,t_i]}(X) + \beta x_{t_i + 1} - \beta x_{t_i} + C^{\bar{\mathcal{P}}}_{[t_i + 1, t_{i+1} - 1]}(X) \\
	+ \beta x_{t_{i+1}} - \beta x_{t_{i+1} - 1} + C^{\bar{\mathcal{P}}}_{[t_{i+1}, T]}(X) .
	\end{equation*}
	Given a schedule $X = (x_1, \dots, x_T)$, we define $X_{[a:b]} \coloneqq (x_a, \dots, x_b)$ and $X_J = X_{[t_i + 1 : t_{i+1} - 1]}$.  
	Since there is no summand that contains both $x_{t_i}$ and $x_{t_i + 1}$, the function
	\begin{equation*}
	\begin{aligned}
	D_{X^\ast}((x'_{t_i + 1}, \dots, x'_{t_{i+1} - 1}))  \coloneqq{} &  C^{\bar{\mathcal{P}}}_{[0,t_i]}(X^\ast) - \beta x^\ast_{t_i} \\
	& + \beta x'_{t_i + 1} + C^{\bar{\mathcal{P}}}_{[t_i + 1, t_{i+1} - 1]}(X') + \beta x'_{t_{i+1}} \\
	& - \beta x^\ast_{t_{i+1} - 1} + C^{\bar{\mathcal{P}}}_{[t_{i+1}, T]}(X^\ast)
	\end{aligned}
	\end{equation*}
	with $x'_{t_i + 1} \geq x^\ast_{t_i + 1}$ and $x'_{t_{i+1} - 1} \geq x^\ast_{t_{i+1} - 1}$ is convex and has a minimum at $X_J^\text{min} \coloneqq (x^\ast_{t_i + 1}, \dots, x^\ast_{t_{i+1} - 1})$.	
	
	Due to convexity, $D_{X^\ast}(Y_J) \geq D_{X^\ast}(Z_J) \geq D_{X^\ast}(X^\text{min}_J)$, because $Z_J = \lambda Y_J + (1-\lambda) X^\text{min}_J$. Therefore, $C^{\bar{\mathcal{P}}}(Z) \leq C^{\bar{\mathcal{P}}}(Y)$ holds. If $\beta (x_{t_i + 1} - x_{t_i})^+ = 0$ or $\beta(x_{t_{i+1}} -x_{t_{i+1} - 1} )^+ = 0$ we can use the same argument.

	We still have to handle the second case, i.e., $y_t < x^\ast_t$. The proof is almost analogous, the difference is that we choose $\lambda$ as small as possible such that $z_t \leq \lfloor x^\ast_t \rfloor_{2^k}$ (where $\lfloor x \rfloor_n \coloneqq n \cdot \lfloor x/n \rfloor$). Then we have a time slot $t_=$ with $z_{t_=} = \lfloor x^\ast_{t_=} \rfloor_{2^k}$, so $L(Z) \leq L(Y) - 1$. 
	The proof that shows $C^{\bar{\mathcal{P}}}(Z) \leq C^{\bar{\mathcal{P}}}(Y)$ holds for both cases.
	
	To apply the transformation several times, we have to show that the invariant still holds for $Z$. We differ between three cases: (1) $\{t-1, t\} \subseteq J$, i.e., both $z_{t-1}$ and $z_t$ are transformed, (2) $t - 1 = t_i$, so only $z_t$ is transformed, and (3) $t = t_{i+1}$, so only $z_{t-1}$ is transformed. For the first case, we get
	\begin{align*} 
		(z_t - z_{t-1}) (x^\ast_t - x^\ast_{t-1})
		&\alignstack{\eqref{eqn:poly:correct:border:proof:z}}{=} \big(\lambda y_t + (1-\lambda) x^\ast_t - \lambda y_{t-1} - (1-\lambda) x^\ast_{t-1} \big) \cdot (x^\ast_t - x^\ast_{t-1}) \\
		&= \lambda (y_t - y_{t-1})(x^\ast_t - x^\ast_{t-1}) + (1-\lambda) (x^\ast_t - x^\ast_{t-1})^2 \\
		&\alignstack{\eqref{eqn:poly:correct:border:proof:invariant}}{\geq} 0 .
	\end{align*}
	The first equation uses the definition of $Z$. The last inequality holds since $Y$ fulfills the invariant.
	
	In case 2, we have $t-1 \notin J$, so $y_{t-1} \in M_k$. 
	If $y_t \geq y_{t-1}$ and $x^\ast_t \geq x^\ast_{t-1}$, then $x^\ast_{t-1} > y_{t-1} - 2^k$, because otherwise $t-1$ would belong to $J$. 
	Therefore, we have 
	\begin{equation*}
	z_t \geq \lceil x^\ast_t \rceil_{2^k} \geq \lceil x^\ast_{t-1} \rceil_{2^k} \geq y_{t-1} .
	\end{equation*}
	By using this inequality as well as $z_{t-1} = y_{t-1}$, we get $z_t - z_{t-1} \geq 0$, so $Z$ fulfills the invariant. The case $y_t \leq y_{t-1}$ and $x^\ast_t \leq x^\ast_{t-1}$ is analogous. Since the invariant holds for $Y$, there are no other cases.
	
	Case 3 (i.e., $t \notin J$) is analogous to case 2. Therefore, $Z$ always fulfills the invariant.
	
	We use the transformation $\phi$ until $L(Z) = 0$. Then, $\mathcal{J}(Z)$ is empty, so all states of $Z$ are multiples of $2^k$, i.e., $z_t \in M_k$ for all $t \in [T]$. Since $\hat{X}^k$ was defined to be optimal, $C^{\bar{\mathcal{P}}}(\hat{X}^k) = C^{\bar{\mathcal{P}}}(Z)$ holds. By our assumption, $Z \not = \hat{X}^k$ holds (because otherwise $|\hat{x}^k_t - x^\ast_t| < 2^k$ would be fulfilled for all $t \in [T]$), so there was a transformation with $\lambda < 1$. Thus, we moved towards the optimal schedule, however by $C^{\bar{\mathcal{P}}}(\hat{X}^k) = C^{\bar{\mathcal{P}}}(Z)$, the cost does not change. As $D_{X^\ast}(X')$ is a convex function, $C^{\bar{\mathcal{P}}}(\hat{X}^k) = C^{\bar{\mathcal{P}}}(Z)$ implies that $C^{\bar{\mathcal{P}}}(Z) = C^{\bar{\mathcal{P}}}(X^\ast)$, because $X^\ast$ minimizes $C^{\bar{\mathcal{P}}}$. In this case $\hat{X}^k$ is also optimal for $\bar{\mathcal{P}}$, so by choosing $X^\ast \coloneqq \hat{X}^k \in \Omega(\bar{\mathcal{P}})$, we have an optimal schedule for $\bar{\mathcal{P}}$ that fulfills the condition $|\hat{x}^k_t - x^\ast_t| < 2^k$ for all $t \in [T]$. Therefore, our assumption was wrong and the lemma is proven.
\end{proof}

The next lemma shows how an optimal fractional schedule can be rounded to an integral schedule such that it is still optimal.

\begin{lemma} \label{lemma:poly:correct:rounding}
	Let $X^\ast \in \Omega(\bar{\mathcal{P}})$. The schedules $\lfloor X^\ast \rfloor \coloneqq (\lfloor x^\ast_1 \rfloor, \dots, \lfloor x^\ast_T \rfloor)$ and $\lceil X^\ast \rceil \coloneqq (\lceil x^\ast_1 \rceil, \dots, \lceil x^\ast_T \rceil)$ are optimal too, i.e., $\lfloor X^\ast \rfloor , \lceil X^\ast \rceil \in \Omega(\bar{\mathcal{P}})$.
\end{lemma}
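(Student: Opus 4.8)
The plan is to round $X^\ast$ by a single \emph{coupled} random threshold and to exploit the fact that $C^{\bar{\mathcal P}}$ is affine precisely along the rounding directions. Draw $\alpha$ uniformly from $[0,1)$ and set $x^{(\alpha)}_t := \lfloor x^\ast_t\rfloor + \mathbb{1}[\,x^\ast_t-\lfloor x^\ast_t\rfloor > \alpha\,]$, using the \emph{same} $\alpha$ in every coordinate. For $\alpha$ above every fractional part this yields $X^{(\alpha)}=\lfloor X^\ast\rfloor$, and for $\alpha$ below every positive fractional part it yields $X^{(\alpha)}=\lceil X^\ast\rceil$; both threshold ranges have positive length. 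Moreover $X^{(\alpha)}$ is a step function of $\alpha$ with finitely many values, each of which is a feasible schedule in $[0,m]^T$ (here $0\le x^\ast_t\le m$ and $m\in\mathbb N$ guarantee feasibility). Hence it suffices to prove $\mathbb{E}_\alpha[C^{\bar{\mathcal P}}(X^{(\alpha)})]=C^{\bar{\mathcal P}}(X^\ast)$: since $X^\ast$ is optimal, every realization obeys $C^{\bar{\mathcal P}}(X^{(\alpha)})\ge C^{\bar{\mathcal P}}(X^\ast)$, so equality of the mean forces $C^{\bar{\mathcal P}}(X^{(\alpha)})=C^{\bar{\mathcal P}}(X^\ast)$ on every threshold interval of positive measure, and in particular for the two intervals realizing $\lfloor X^\ast\rfloor$ and $\lceil X^\ast\rceil$.

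For the operating term I would note that each coordinate is rounded to one of the two integer endpoints $\lfloor x^\ast_t\rfloor,\lceil x^\ast_t\rceil$ of the linear piece of $\bar f_t$ containing $x^\ast_t$, with $\Pr[x^{(\alpha)}_t=\lceil x^\ast_t\rceil]=x^\ast_t-\lfloor x^\ast_t\rfloor$. Because $\bar f_t$ is the linear interpolation of $f_t$ between consecutive integers by definition (Equation~\eqref{eqn:poly:correct:extension}), the expectation over the two endpoints reproduces $\bar f_t(x^\ast_t)$ exactly, so $\mathbb{E}_\alpha[\sum_t \bar f_t(x^{(\alpha)}_t)]=\sum_t \bar f_t(x^\ast_t)$. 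This is exactly where the linear-interpolation definition of $\bar{\mathcal P}$ is indispensable: a strictly convex extension would only give an inequality in the wrong direction.

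The crux, and the step I expect to be the main obstacle, is the switching term, where the shared threshold is essential. I would prove the coordinatewise identity $\mathbb{E}_\alpha[(x^{(\alpha)}_t-x^{(\alpha)}_{t-1})^+]=(x^\ast_t-x^\ast_{t-1})^+$ for every $t$ (with $x_0=x_{T+1}=0$ rounded to $0$) by a layer decomposition. Writing $(q-p)^+=\sum_{k=1}^m \mathbb{1}[\,p<k\le q\,]$ for integers $p,q\in\{0,\dots,m\}$ and observing that under the coupling the event $\{x^{(\alpha)}_t\ge k\}$ equals $\{\alpha < g_k(x^\ast_t)\}$ with $g_k(x):=\min\bigl(1,(x-(k-1))^+\bigr)$, one obtains
\begin{equation*}
\mathbb{E}_\alpha[(x^{(\alpha)}_t-x^{(\alpha)}_{t-1})^+]=\sum_{k=1}^m \bigl(g_k(x^\ast_t)-g_k(x^\ast_{t-1})\bigr)^+ .
\end{equation*}
Since each $g_k$ is nondecreasing and $\sum_{k=1}^m g_k(x)=x$ for all $x\in[0,m]$, this telescopes to $(x^\ast_t-x^\ast_{t-1})^+$; with independent thresholds per coordinate the joint event would factor and the identity would fail, which is why the coupling cannot be dropped. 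Summing the two contributions gives $\mathbb{E}_\alpha[C^{\bar{\mathcal P}}(X^{(\alpha)})]=C^{\bar{\mathcal P}}(X^\ast)$, and the averaging argument of the first paragraph then certifies that $\lfloor X^\ast\rfloor$ and $\lceil X^\ast\rceil$ are optimal; the case in which $X^\ast$ is already integral is immediate, since then $\lfloor X^\ast\rfloor=\lceil X^\ast\rceil=X^\ast$.
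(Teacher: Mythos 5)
Your proof is correct, and it takes a genuinely different route from the paper's. The paper argues deterministically: it partitions the fractional positions of $X^\ast$ into inclusion-maximal intervals on which the schedule is constant with fractional value $v_i$, and then, by a four-way case analysis on the neighbouring values, uses optimality to pin down the slope of the summed operating cost on the relevant linear piece to exactly $0$, $\beta$ or $-\beta$; sliding $v_i$ to either integer endpoint then leaves the cost unchanged, and iterating removes one fractional interval at a time. You replace all of this with a single coupled threshold rounding. The operating-cost identity is immediate from the linear interpolation defining $\bar f_t$, and your layer decomposition is the right tool for the switching cost: the event $\{x^{(\alpha)}_t\ge k\}$ does coincide with $\{\alpha<g_k(x^\ast_t)\}$ in all three cases $\lfloor x^\ast_t\rfloor\ge k$, $\lfloor x^\ast_t\rfloor=k-1$ and $\lfloor x^\ast_t\rfloor<k-1$, and the sum $\sum_{k=1}^m\bigl(g_k(x^\ast_t)-g_k(x^\ast_{t-1})\bigr)^+$ collapses to $(x^\ast_t-x^\ast_{t-1})^+$ because each $g_k$ is monotone and $\sum_k g_k(x)=x$, so all the positive parts point the same way. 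Together with optimality of $X^\ast$ and the observation that $\lfloor X^\ast\rfloor$ and $\lceil X^\ast\rceil$ are each realized on threshold intervals of positive length, the averaging step is airtight. What your approach buys: it avoids the delicate interval bookkeeping and case analysis, it shows more (every positive-measure realization of the threshold rounding is optimal, and cost is preserved in expectation for arbitrary, not necessarily optimal, fractional schedules), and it is conceptually aligned with the rounding analysis the paper later performs in Section~\ref{sec:random}. What the paper's proof buys instead is explicit structural information about optimal fractional schedules, namely that on each fractional plateau the local slope of the aggregated operating cost equals $0$ or $\pm\beta$, which your probabilistic argument does not expose.
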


\begin{proof}
	Let $X^\ast \in \Omega(\bar{\mathcal{P}})$ be arbitrary. Let $\mathcal{I}(X^\ast) = \{I_1, \dots, I_l\}$ be the set of time intervals such that for each $I_i \coloneqq \{a_i, a_i + 1, \dots, b_i\}$ with $i \in [l]$ the following conditions are fulfilled.
	\begin{enumerate}
		\item All states of $X^\ast$ have the same value during $I_i$, i.e., $x^\ast_t = v_i$ for all $t \in I_i$.
		\item The value is fractional, i.e., $v_i \notin \mathbb{N}$.
		\item Each $I_i$ is inclusion maximal, i.e., $x^\ast_{a_i - 1} \not= v_i$ and $x^\ast_{b_i + 1} \not = v_i$.
		\item The intervals are sorted, i.e., $b_i < a_{i+1}$ for all $i \in [l-1]$. 
	\end{enumerate}
	If $\mathcal{I}(X^\ast) = \emptyset$, then $X^\ast$ is an integral schedule, so $\lfloor X^\ast \rfloor = X^\ast = \lceil X^\ast \rceil $. Otherwise let $I_i \in \mathcal{I}(X^\ast)$ be an arbitrary interval. We will transform $X^\ast$ to $X'$ by changing the states at $I_i$ such that $|\mathcal{I}(X')| < |\mathcal{I}(X^\ast)|$ and $\lfloor x^\ast_t \rfloor \leq x'_t \leq \lceil x^\ast_t \rceil$ for all $t \in I_i$.
	Let $g(x) \coloneqq \sum_{t=a_i}^{b_i} \bar{f}_t(x)$. Since each $\bar{f}_t(x)$ is linear for $x \in [\lfloor v_i \rfloor, \lceil v_i \rceil ]$, the slope of $g(x)$ is constant for $x \in [\lfloor v_i \rfloor, \lceil v_i \rceil ]$ and denoted by $g'(v_i)$. 
	According to $I_i$, we have to differ between different cases:

	
	\begin{enumerate}
		\item $x^\ast_{a_i-1} < v_i < x^\ast_{b_i+1}$
		
		Let $\tilde{x}^\ast_{a_i-1} \coloneqq \max \{x^\ast_{a_i-1}, \lfloor v_i \rfloor \}$ and $\tilde{x}^\ast_{b_i+1} \coloneqq \min \{x^\ast_{b_i+1}, \lceil v_i \rceil \}$. By using any schedule with $x'_{a_i} = x'_{a_i + 1} = \dots = x'_{b_i} \in [\tilde{x}^\ast_{a_i-1}, \tilde{x}^\ast_{b_i+1}]$ (and $x'_t = x^\ast_t$ otherwise), the switching cost is unchanged. Since $I_i$ is inclusion maximal and $X^\ast$ is optimal, we can conclude that $g'(v_i) = 0$, so $C(X') = C(X^\ast)$. To show that $\lfloor X^\ast \rfloor$ is optimal, we set $x'_t = \tilde{x}^\ast_{a_i-1}$ for all $t \in I_i$. To show that $\lceil X^\ast \rceil $ is optimal, we set $x'_t = \tilde{x}^\ast_{b_i+1}$ for all $t \in I_i$.
		
		\item $x^\ast_{a_i-1} > v_i > x^\ast_{b_i+1}$
		
		This case is analogous to the first case
		
		\item $x^\ast_{a_i-1} > v_i < x^\ast_{b_i+1}$
		
		Let $v^+ = \min \{x^\ast_{a_i-1}, x^\ast_{b_i+1}, \lceil v_i \rceil \}$. Let $v'_i \in [\lfloor v_i \rfloor, v^+ ]$. By using the schedule $x'_t = v'_i$ for all $t \in I_i$, the switching cost is increased by $\beta (v_i - v'_i)$, but the operating cost is reduced by $g'(v_i) \cdot (v_i -  v'_i )$. As $X^\ast$ is optimal, we can conclude that $g'(v_i) = \beta$ because otherwise either using $v'_i = \lfloor v_i \rfloor$ or $v'_i = v^+$ (for all $t \in I_i$) would lead to a better solution, since $\lfloor v_i \rfloor < v_i < v^+$. Therefore, the total cost of $X'$ does not change  for $v'_i \in [\lfloor v_i \rfloor, v^+ ]$. To show that $\lfloor X^\ast \rfloor$ is optimal, we set $x'_t = \lfloor v_i \rfloor$ for all $t \in I_i$. To show that $\lceil X^\ast \rceil $ is optimal, we set $x'_t = v^+ $ for all $t \in I_i$.
		
		\item $x^\ast_{a_i-1} < v_i > x^\ast_{b_i+1}$
		
		This case is analogous to the third case, but $\lfloor x \rfloor$ and $\lceil x \rceil$ are swapped as well as $\min$ and $\max$. Furthermore, $g'(v_i) = -\beta$ and we replace $(v_i - v'_i)$ with $(v'_i - v_i)$. 
	\end{enumerate}
	By using the transformation described above, the number $|\mathcal{I}|$ of fractional intervals is at least reduced by 1. By applying the transformation several times until $|\mathcal{I}| = 0$, we receive $\lfloor X^\ast \rfloor$ or $\lceil X^\ast \rceil$. The total cost is not increased by the operations.
\end{proof}

So far, we have shown in Lemma~\ref{lemma:poly:correct:border} that for each optimal solution of the discrete problem instance $\mathcal{P}_k$ there is an optimal solution of the continuous problem instance $\bar{\mathcal{P}}$ that is not far away. In the following lemma we expand this statement: Given an optimal solution for $\mathcal{P}_k$, there is not only a fractional solution for $\bar{\mathcal{P}}$ that is not far away, but also an optimal solution of the discrete problem instance $\mathcal{P}_l$ for the subsequent iterations $l < k$.

\begin{lemma} \label{lemma:poly:correct:iteration}
	Let $k > l$ with $k,l \in [K]_0$. Let $\hat{X}^k \in \Omega(\mathcal{P}_k)$ be an arbitrary optimal schedule for $\mathcal{P}_k$ with $k \in [K]_0$. There exists an optimal schedule $\hat{X}^l \in \Omega(\mathcal{P}^l)$ for $\mathcal{P}^l$ such that $|\hat{x}^k_t - \hat{x}^l_t| \leq 2^k$ for all $t \in[T]$. 
	Formally, $\forall k \in [K]_0 : \forall l \in [k-1] : \forall \hat{X}^k \in \Omega(\mathcal{P}_k) : \exists \hat{X}^l \in \Omega(\mathcal{P}_l) : \forall t \in [T] : |\hat{x}^k_t - \hat{x}^l_t| \leq 2^k $.
\end{lemma}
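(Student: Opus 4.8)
The plan is to reduce the claim to the gap-free setting via the scaling operator $\Psi_l$ and then invoke the two transfer lemmas already proved. Scaling $\mathcal{P}_l$ down by a factor $2^l$ yields the instance $Q \coloneqq \Psi_l(\mathcal{P}_l)$, whose admissible states are \emph{all} integers in $[0, m/2^l]$; in particular $Q$ has no gaps, so it plays the role of a fresh ``original'' instance with parameter $K' = \log(m/2^l) - 2 = K - l$. By Lemma~\ref{lemma:poly:correct:phipsi}, restricting $Q$ to multiples of $2^{k-l}$ gives exactly $\Phi_{k-l}(Q) = \Psi_l(\mathcal{P}_k)$, i.e.\ the scaled version of $\mathcal{P}_k$. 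Since $\Psi_l$ is a cost-preserving bijection on schedules, the image $\hat{Y} \coloneqq \hat{X}^k / 2^l$ of the given schedule is optimal for $\Phi_{k-l}(Q)$, and it suffices to produce an optimal integral schedule $\hat{Y}^0 \in \Omega(Q)$ with $|\hat{y}_t - \hat{y}^0_t| \le 2^{k-l}$ for all $t$; scaling back by $2^l$ then yields $\hat{X}^l \coloneqq 2^l \hat{Y}^0 \in \Omega(\mathcal{P}_l)$ with $|\hat{x}^k_t - \hat{x}^l_t| = 2^l |\hat{y}_t - \hat{y}^0_t| \le 2^k$, as required. Note that $k - l \in [K-l]_0$ since $0 < k - l$ and $k \le K$.

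Next I would apply the two transfer results to the base instance $Q$. Because the definitions of $\Phi$, $\Psi$, the continuous extension, and $\mathcal{P}_k$ are generic operations on problem instances, Lemmas~\ref{lemma:poly:correct:border} and~\ref{lemma:poly:correct:rounding} hold verbatim with $Q$ in place of $\mathcal{P}$. Applying Lemma~\ref{lemma:poly:correct:border} to $Q$ with restriction level $k - l$ produces an optimal continuous schedule $\bar{Y}^\ast \in \Omega(\bar{Q})$ with $|\hat{y}_t - \bar{y}^\ast_t| < 2^{k-l}$ for all $t$. Applying Lemma~\ref{lemma:poly:correct:rounding} to $\bar{Q}$ shows that $\lfloor \bar{Y}^\ast \rfloor$ is optimal for $\bar{Q}$; since $Q$ is gap-free, every integral optimum of its continuous extension $\bar{Q}$ is an optimum of $Q$, so $\hat{Y}^0 \coloneqq \lfloor \bar{Y}^\ast \rfloor \in \Omega(Q)$.

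It then remains to sharpen the distance bound from the naive $< 2^{k-l} + 1$ to $\le 2^{k-l}$ using integrality. Here $\hat{y}_t$ is a multiple of $2^{k-l}$ (hence an integer, as $k - l \ge 1$) and $\hat{y}^0_t$ is an integer. From $\bar{y}^\ast_t > \hat{y}_t - 2^{k-l}$, whose right-hand side is an integer, we obtain $\hat{y}^0_t = \lfloor \bar{y}^\ast_t \rfloor \ge \hat{y}_t - 2^{k-l}$; and from $\bar{y}^\ast_t < \hat{y}_t + 2^{k-l}$ we get $\hat{y}^0_t \le \bar{y}^\ast_t < \hat{y}_t + 2^{k-l}$, hence $\hat{y}^0_t \le \hat{y}_t + 2^{k-l} - 1$ by integrality. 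Combining the two bounds gives $|\hat{y}_t - \hat{y}^0_t| \le 2^{k-l}$, exactly what the reduction needs.

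The main obstacle I anticipate is justifying that Lemma~\ref{lemma:poly:correct:border} may legitimately be invoked on the scaled, gap-free instance $Q$ rather than on the literal original $\mathcal{P}$: this requires observing that its statement and proof never use any property of $\mathcal{P}$ beyond being a valid problem instance, and that $Q$ is genuinely gap-free so that its $(k-l)$-restriction coincides with the $\Phi_{k-l}(Q)$ identified in Lemma~\ref{lemma:poly:correct:phipsi}. The remaining delicate point is the integrality step above, where one must exploit the \emph{strictness} of the inequality from Lemma~\ref{lemma:poly:correct:border} together with both endpoints being integers, so as not to lose an additive unit and land at $2^{k-l}+1$ instead of the required $2^{k-l}$.
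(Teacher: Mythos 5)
Your proposal is correct and follows essentially the same route as the paper's proof: scale by $\Psi_l$ to the gap-free instance $Q=\Psi_l(\mathcal{P}_l)$, identify $\Phi_{k-l}(Q)$ with $\Psi_l(\mathcal{P}_k)$ via Lemma~\ref{lemma:poly:correct:phipsi}, apply Lemmas~\ref{lemma:poly:correct:border} and~\ref{lemma:poly:correct:rounding} to $Q$, and use integrality to retain the bound $2^{k-l}$ after flooring before scaling back. Your integrality step is in fact spelled out slightly more carefully than the paper's (which glosses the strict-versus-non-strict distinction), but the argument is the same.
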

%
\begin{proof}
	Consider the reduced problem instance $\mathcal{Q} \coloneqq \Psi_l(\mathcal{P}_l)$ as well as the instance $\mathcal{Q}_{k-l} \coloneqq \Phi_{k-l}(\mathcal{Q})$ which is equivalent to $\Psi_l(\mathcal{P}_k)$ due to Lemma~\ref{lemma:poly:correct:phipsi}. Let $\hat{X}_\mathcal{Q}^{k-l} = (\hat{x}^k_1 / 2^l, \dots, \hat{x}^k_T / 2^l)$ be an optimal schedule for $\mathcal{Q}_{k-l}$. We apply Lemma~\ref{lemma:poly:correct:border}, but we use $\hat{X}_\mathcal{Q}^{k-l}$ and $\mathcal{Q}$ instead of $\hat{X}^k$ and~$\mathcal{P}$. By Lemma~\ref{lemma:poly:correct:border}, there exists an optimal fractional schedule $X_\mathcal{Q}^\ast = (x^\ast_1, \dots, x^\ast_T)$ for $\bar{\mathcal{Q}}$ such that $|\hat{x}^k_t / 2^l - x^\ast_t| \leq 2^{k-l}$. By Lemma~\ref{lemma:poly:correct:rounding}, $\lfloor X_\mathcal{Q}^\ast \rfloor$ is also an optimal schedule for $\bar{\mathcal{Q}}$ and therefore it is also optimal for $\mathcal{Q}$. The inequality $|\hat{x}^k_t / 2^l - \lfloor x^\ast_t \rfloor| \leq 2^{k-l}$ still holds, because the terms $\hat{x}^k_t / 2^l$ and $2^{k-l}$ are integral and therefore adding a value less than 1 to the left side cannot invalidate the inequality. Let $\hat{X}^l \coloneqq (\lfloor x^\ast_1 \rfloor \cdot 2^l, \dots, \lfloor x^\ast_T \rfloor \cdot 2^l)$. As $\lfloor X_\mathcal{Q}^\ast \rfloor$ is optimal for $\mathcal{Q}$, $\hat{X}^l$ must be optimal for $\mathcal{P}_l$. Furthermore, $\lfloor x^\ast_t \rfloor = \hat{x}^l_t / 2^l$ holds, so we can insert it into the above inequality and get $|\hat{x}^k_t / 2^l - \hat{x}^l_t / 2^l| \leq 2^{k-l}$ which is equivalent to $|\hat{x}^k_t - \hat{x}^l_t| \leq 2^k$.
\end{proof}

Now, we have proven all parts to show the correctness of our polynomial-time algorithm:

\begin{theorem}
	The algorithm described in Section~\ref{sec:poly:polytimealg} is correct.
\end{theorem}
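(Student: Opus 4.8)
The plan is to prove correctness by induction on the iteration index, running downward from $k = K$ to $k = 0$, maintaining the invariant that the schedule $\hat{X}^k$ computed in iteration $k$ is optimal for the restricted problem $\mathcal{P}_k$, i.e. $\hat{X}^k \in \Omega(\mathcal{P}_k)$.

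For the base case $k = K$, I would note that $2^K = 2^{\log m - 2} = m/4$, so the five rows $R_0, R_{m/4}, R_{m/2}, R_{3m/4}, R_m$ employed in the first iteration are exactly all the multiples of $2^K$ contained in $[m]_0$, that is $M_K$. Hence $V^K$ contains every feasible state of $\mathcal{P}_K$ in every column, and the shortest-path computation over $V^K$ returns a genuinely optimal schedule $\hat{X}^K \in \Omega(\mathcal{P}_K)$.

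For the inductive step, I would assume $\hat{X}^k \in \Omega(\mathcal{P}_k)$ and argue that iteration $k-1$ outputs a member of $\Omega(\mathcal{P}_{k-1})$. Applying Lemma~\ref{lemma:poly:correct:iteration} with $l = k-1$ to this particular $\hat{X}^k$ furnishes a witness $W \in \Omega(\mathcal{P}_{k-1})$ with $|\hat{x}^k_t - w_t| \leq 2^k$ for all $t$. The key observation is that $w_t$ is a multiple of $2^{k-1}$ lying within distance $2 \cdot 2^{k-1}$ of $\hat{x}^k_t$, which is a multiple of $2^k$ and therefore also of $2^{k-1}$; consequently $w_t = \hat{x}^k_t + \xi \cdot 2^{k-1}$ for some $\xi \in \{-2,-1,0,1,2\}$, and since $w_t \in [m]_0$ it survives the clipping in the definition of $V^{k-1}_t$. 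Thus $W$ is a feasible path in the graph $V^{k-1}$ over which the algorithm minimizes. Because the algorithm returns a minimum-cost such path $\hat{X}^{k-1}$, whose states are all multiples of $2^{k-1}$ and hence feasible for $\mathcal{P}_{k-1}$, we get $C^{\mathcal{P}_{k-1}}(\hat{X}^{k-1}) \leq C^{\mathcal{P}_{k-1}}(W)$; combined with the optimality of $W$ this yields $\hat{X}^{k-1} \in \Omega(\mathcal{P}_{k-1})$.

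Since $\mathcal{P}_0 = \mathcal{P}$, the schedule produced by the final iteration is optimal for the original instance, which establishes the theorem. The main obstacle is the band-membership verification inside the inductive step: one must combine the fact that both $\hat{x}^k_t$ and $w_t$ are divisible by $2^{k-1}$ with the distance bound $|\hat{x}^k_t - w_t| \leq 2^k$ from Lemma~\ref{lemma:poly:correct:iteration} to conclude that each coordinate of the witness lands on one of the exactly five grid points $\hat{x}^k_t + \xi \cdot 2^{k-1}$ the algorithm actually explores, and to confirm that intersecting with $[m]_0$ near the boundaries $0$ and $m$ never discards the witness. Everything else is a routine downward induction resting on the previously established lemmas.
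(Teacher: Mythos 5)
Your proof is correct and follows essentially the same route as the paper: a downward induction on the iteration index, with the base case handled by observing that $V^K$ contains all of $M_K$, and the inductive step powered by Lemma~\ref{lemma:poly:correct:iteration} with $l=k-1$ to exhibit an optimal schedule for $\mathcal{P}_{k-1}$ that lies entirely within the five-state band the algorithm explores. Your write-up is in fact somewhat more careful than the paper's, since you explicitly verify the divisibility/distance argument placing the witness on the grid points $\hat{x}^k_t + \xi\cdot 2^{k-1}$ and the survival of the witness under intersection with $[m]_0$.
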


\begin{proof}
	We will show the correctness by induction. 
	In the first iteration, the algorithm finds an optimal schedule for $\mathcal{P}_K$, because all states of $M_K$ are considered.
	
	Given an optimal schedule $\hat{X}^k$, in the next iteration the algorithm only considers the states $x_t \in M_{k-1}$ with $|\hat{x}^k_t - x_t| \leq 2^k$. By Lemma~\ref{lemma:poly:correct:iteration}, there exists an optimal schedule $\hat{X}^l$ with $l = k-1$ such that $|\hat{x}^k_t - \hat{x}^l_t| \leq 2^k$ holds. Therefore, the schedule found in iteration $k-1$ must be optimal for $\mathcal{P}_{k-1}$ (although some states are ignored by the algorithm). Thus, by induction, the algorithm will find an optimal schedule for $\mathcal{P}_0 = \mathcal{P}$ in the last iteration.
\end{proof}

\section{Deterministic online algorithm} 
\label{sec:lcp}
Lin et. al.~\cite{LinWierman2011infocom,LinWierman2013} developed an algorithm called \emph{Lazy Capacity Provisioning} (LCP) that achieves a competitive ratio of 3 for the continuous 
setting (i.e., $x_t \in \mathbb{R}$). In this section, we adapt LCP to the discrete data-center optimization problem and prove that the algorithm is
3-competitive for this problem as well.

The general approach of our proof is similar to the proof of the continuous setting in \cite{LinWierman2011infocom}. Some lemmas (e.g., Lemma~\ref{lemma:lcp:cf:lu} and~\ref{lemma:lcp:cf:optrec}) were adopted, however, their proofs are completely different. Lin et. al. use the properties of the convex program, especially duality and the complementary slackness conditions. This approach cannot be adapted to the discrete setting.  

\subsection{Algorithm}
\label{sec:lcp:alg}

First, we will define lower and upper bounds for the optimal offline solution that can be calculated online. For a given time slot $\tau$, let $X^L_\tau \coloneqq (x^L_{\tau,1}, \dots , x^L_{\tau, \tau} )$ be the vector that minimizes 
\begin{equation} \label{eqn:lcp:algo:lower}
C^L_\tau(X) = \sum_{t=1}^\tau f_t(x_t) + \beta \sum_{t=1}^\tau (x_{t} - x_{t-1})^+
\end{equation}
with $X = (x_1, \dots, x_\tau)$. This term describes the cost of a workload that ends at $\tau \leq T$. For $\tau = T$ this equation is equivalent to~\eqref{eqn:model:cost} . Let $x^L_\tau \coloneqq x^L_{\tau, \tau}$ be the last state for this truncated workload. If there is more than one vector that minimizes~\eqref{eqn:lcp:algo:lower}, then $x^L_\tau$ is defined as the smallest possible value.

Similarly, let $X^U_\tau \coloneqq (x^U_{\tau,1}, \dots , x^U_{\tau, \tau} )$ be the vector that minimizes 
\begin{equation} \label{eqn:lcp:algo:upper}
C^U_\tau (X) = \sum_{t=1}^\tau f_t(x_t) + \beta \sum_{t=1}^\tau (x_{t-1} - x_{t})^+ .
\end{equation}
The difference to the equation~\eqref{eqn:lcp:algo:lower} is that we pay the switching cost for powering down. Powering up does not cost anything. The last state is denoted by $x^U_\tau \coloneqq x^U_{\tau, \tau}$. If there is more than one vector that minimizes~\eqref{eqn:lcp:algo:upper}, then  $x^U_\tau$ is the largest possible value.

Define $[x]^b_a \coloneqq \max \{a, \min\{b, x\}\}$ as the projection of $x$ into the interval $[a, b]$. The LCP algorithm is defined as follows:
\begin{equation} \label{eqn:lcp:algo:lcp}
x^\text{LCP}_\tau \coloneqq 
\begin{cases}
0, & \tau = 0 \\
[x^\text{LCP}_{\tau - 1}]^{x^U_\tau} _{x^L_\tau}, & \tau \geq 1.
\end{cases}
\end{equation}

Before we can prove that this algorithm is 3-competitive, we have to introduce some notation.

\subsection{Notation}
\label{sec:lcp:algo:nota}

Let $X^\ast = (x^\ast_1, \dots, x^\ast_T)$ be an optimal offline solution that minimizes equation~\eqref{eqn:model:cost} (i.e., the whole workload). Note that $C^L_\tau(X^\ast)$ indicates the cost of the optimal solution until $\tau$. 

Let $R_\tau(X) \coloneqq \sum_{t=1}^{\tau} f_t(x_t)$ with $X = (x_1, \dots, x_\tau )$ denote the operating cost until $\tau$, let $S^L_\tau(X) \coloneqq \beta \sum_{t=1}^{\tau} (x_t - x_{t-1})^+$ denote the switching cost in $C^L_\tau(X)$ and let $S^U_\tau(X) \coloneqq \beta \sum_{t=1}^{\tau} (x_{t-1} - x_{t})^+$ denote the switching cost in $C^U_\tau(X)$. Note that $C^L_\tau(X) = R_\tau(X) + S^L_\tau(X)$ and $C^U_\tau(X) = R_\tau(X) + S^U_\tau(X)$. Furthermore,
\begin{equation} \label{eqn:lcp:algo:nota:slurelation}
S^L_\tau(X) = S^U_\tau(X) + \beta x_\tau
\end{equation}
as well as $C^L_\tau(X) = C^U_\tau(X) + \beta x_\tau$ holds, because in $C^L_\tau$ we have to pay the missing switching cost to reach the final state $x_\tau$. Note that $\beta x_\tau$ equals the cost for powering up in $C^L_\tau$ minus the cost for powering down in $C^U_\tau$. 

Given an arbitrary function $g : [m]_0 \rightarrow \mathbb{R}$, we define 
\begin{equation*}
\Delta g(x) \coloneqq g(x) - g(x-1)
\end{equation*}
as the slope of $g$ at $x$.
Let 
\begin{equation*}
\hat{C}^B_\tau(x) \coloneqq \min_{x_1, \dots, x_{\tau-1}} C^B_\tau((x_1, \dots, x_{\tau-1}, x))
\end{equation*}
with $B \in \{L, U\}$ be the  minimal cost achievable with $x_\tau = x$.

\subsection{Competitive ratio}
\label{sec:lcp:cf}

In this section, we prove that the LCP algorithm described by equation~\eqref{eqn:lcp:algo:lcp} achieves a competitive ratio of 3. First, we show that the optimal solution is bounded by the upper and lower bounds defined in the previous section.

\begin{lemma} \label{lemma:lcp:cf:lu}
	Let $X^\ast$ be an arbitrary optimal schedule. For all $\tau$, $x^L_\tau \leq x^\ast_\tau \leq x^U_\tau$ holds.
\end{lemma}

\begin{proof}
	We prove both parts of the inequality by contradiction:
	
	\textbf{Part 1 ($x^L_{\tau} \leq x^\ast_{\tau}$):} Assume that $x^L_{\tau} > x^\ast_{\tau}$. By the definition of the lower bound, $C^L_\tau(X^L_\tau) < C^L_\tau(X^\ast)$ holds and  we can replace $(x^\ast_1, \dots,  x^\ast_\tau)$ by $(x^L_{\tau,1}, \dots x^L_{\tau, \tau})$. This reduces the total cost of $x^\ast$, because the cost up to $\tau$ is reduced and for $\tau +1$ there is no additional switching cost because $x^L_{\tau} > x^\ast_{\tau}$ holds. Thus, $X^\ast$ would not be an optimal solution which is a contradiction, so $x^L_{\tau} \leq x^\ast_{\tau}$ must be fulfilled.
	
	\textbf{Part 2 ($x^\ast_{\tau} \leq x^U_{\tau}$):\ } Assume that $x^\ast_{\tau} > x^U_{\tau}$. 
	By definition of the upper bound, $C^U_\tau(X^U_\tau) < C^U_\tau(X^\ast)$ and thus,  
	\begin{equation} \label{eqn:lcp:algo:lu:proof}
	R_\tau(X^U_\tau) + S^U_\tau(X^U_\tau) < R_\tau(X^\ast) + S^U_\tau(X^\ast)
	\end{equation}
	holds. The cost of the optimal solution until $\tau$ is 
	$R_\tau(X^\ast) + S^L_\tau(X^\ast)$.
	If the states $(x^\ast_1, \dots , x^\ast_\tau)$ are replaced by $X^U_\tau$ and afterwards $ x^\ast_\tau - x^U_{\tau, \tau}$ servers are powered up (to ensure that we end in the same state), then the cost is $R_\tau(X^U_\tau) + S^L_\tau(X^U_\tau) + \beta (x^\ast_\tau - x^U_{\tau, \tau})$.
	This cost must be greater than or equal to the cost of the optimal solution, so
	\begin{equation*}
	R_\tau(X^U_\tau) + S^L_\tau(X^U_\tau) + \beta (x^\ast_\tau - x^U_{\tau, \tau}) \geq R_\tau(X^\ast) + S^L_\tau(X^\ast) 
	\end{equation*}
	holds. By using equation~\eqref{eqn:lcp:algo:nota:slurelation}, we get
	\begin{equation*}
	R_\tau(X^U_\tau) + S^U_\tau(X^U_\tau) + \beta x^U_{\tau, \tau} + \beta (x^\ast_\tau - x^U_{\tau, \tau}) \geq R_\tau(X^\ast) + S^U_\tau(X^\ast) + \beta x^\ast_\tau .
	\end{equation*}
	We eliminate identical terms and get
	\begin{equation*}
	R_\tau(X^U_\tau) + S^U_\tau(X^U_\tau) \geq R_\tau(X^\ast) + S^U_\tau(X^\ast)
	\end{equation*}
	which is a contradiction to inequality~\eqref{eqn:lcp:algo:lu:proof}. Therefore, our assumption was wrong, so $x^\ast_\tau \leq x^U_\tau$ must be fulfilled.
\end{proof}

The following four lemmas show important properties of $\hat{C}^L_\tau(x)$. First, we prove that the relation between $C^L_\tau(X)$ and $C^U_\tau(X)$ described by equation~\eqref{eqn:lcp:algo:nota:slurelation} still holds for $\hat{C}^L_\tau(x)$ and $\hat{C}^U_\tau(x)$. 

\begin{lemma} \label{lemma:lcp:cf:hatcluconvert}
	For all $\tau$, $\hat{C}^L_\tau (x) = \hat{C}^U_\tau (x) + \beta x$ holds.
\end{lemma}

\begin{proof}
	Let $X^L$ be a corresponding solution for $\hat{C}^L_\tau (x)$ such that $C^L_\tau(X^L) = \hat{C}^L_\tau(x)$ and let $X^U$ be a corresponding solution for $\hat{C}^U_\tau(x)$ such that $C^U_\tau (X^U) = \hat{C}^U_\tau (x)$. Note that the last state of $X^L$ and $X^U$ is $x$. Since $X^U$ is optimal for $C^U_\tau$, the inequality $C^U_\tau (X^U) \leq C^U_\tau(X)$ holds for all $X = (x_1, \dots, x_{\tau-1}, x)$. By equation~\eqref{eqn:lcp:algo:nota:slurelation}, we get 
	\begin{equation*}
	C^L_\tau (X^U) - \beta x \leq C^L_\tau(X) - \beta x
	\end{equation*}
	which is equivalent to $C^L_\tau (X^U)  \leq C^L_\tau(X)$. With $X \coloneqq X^L$, we get $C^L_\tau(X^U) \leq C^L_\tau(X^L)$. Since $X^L$ is optimal for $C^L_\tau$, $X^U$ must be optimal too, so $C^L_\tau (X^U) = C^L_\tau(X^L)$ holds. All in all, we get 
	\begin{equation*}
	\hat{C}^L_\tau (x) = C^L_\tau(X^L) = C^L_\tau (X^U) = C^U_\tau (X^U) + \beta x = \hat{C}^U_\tau (x) + \beta x . \qedhere
	\end{equation*}
\end{proof}

Obviously, the cost functions $C^L_\tau(X)$ and $C^U_\tau(X)$ are convex, since convexity is closed under addition. The following lemma shows that also $\hat{C}^L_\tau(x)$ and $\hat{C}^U_\tau(x)$ are convex.

\begin{lemma} \label{lemma:lcp:cf:convex}
	For all $\tau$ and $B \in \{L, U\}$, $\hat{C}^B_\tau(x)$ is a convex function.
\end{lemma}

We will prove this lemma together with the next lemma:

\begin{lemma} \label{lemma:lcp:cf:slope}
	The slope of $\hat{C}^L_\tau(x)$ is at most $\beta$ for $x \leq x^U_\tau$ and at least $\beta$ for $x > x^U_\tau$, i.e., $\Delta \hat{C}^L_\tau(x^U_\tau) \leq \beta$ and $\Delta \hat{C}^L_\tau(x^U_\tau+1) \geq \beta$ 
\end{lemma}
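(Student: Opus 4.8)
The plan is to prove both lemmas simultaneously by induction on $\tau$, exploiting a recursion for $\hat{C}^L_\tau$ in which the slope bound at step $\tau-1$ is exactly the ingredient needed to establish convexity at step $\tau$; this is why the two statements should be handled jointly rather than separately. Splitting off the last time step (and recalling $x_0 = 0$) gives the dynamic-programming identity
\begin{equation*}
\hat{C}^L_\tau(x) = g_\tau(x) + f_\tau(x), \qquad g_\tau(x) \coloneqq \min_{y} \bigl[ \hat{C}^L_{\tau-1}(y) + \beta (x - y)^+ \bigr].
\end{equation*}
Since $f_\tau$ is convex, it suffices to show that $g_\tau$ is convex; the inner minimization is the (discrete) inf-convolution of the convex function $\hat{C}^L_{\tau-1}$ with the convex penalty $\beta(\cdot)^+$.

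For the inductive hypothesis I would assume that $\hat{C}^L_{\tau-1}$ is convex and that its slope is at most $\beta$ for $y \le x^U_{\tau-1}$ and at least $\beta$ for $y > x^U_{\tau-1}$. The central step is to evaluate $g_\tau$ explicitly. Writing $x^L_{\tau-1}$ for the smallest minimizer of $\hat{C}^L_{\tau-1}$, a case analysis on the optimal predecessor $y$ shows that $g_\tau(x)$ is constant for $x \le x^L_{\tau-1}$, coincides with $\hat{C}^L_{\tau-1}(x)$ for $x^L_{\tau-1} \le x \le x^U_{\tau-1}$, and is linear with slope $\beta$ for $x \ge x^U_{\tau-1}$. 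Concretely, for $x \le x^U_{\tau-1}$ the telescoping comparison $\hat{C}^L_{\tau-1}(y) + \beta(x-y) - \hat{C}^L_{\tau-1}(x) = \sum_{j=y+1}^{x}\bigl(\beta - \Delta\hat{C}^L_{\tau-1}(j)\bigr) \ge 0$ shows that powering up (taking $y < x$) is never beneficial; whereas for $x > x^U_{\tau-1}$ the slopes beyond $x^U_{\tau-1}$ exceed $\beta$, so the analogous comparison makes powering up favorable and pins the optimal predecessor at $y = x^U_{\tau-1}$. The resulting slopes of $g_\tau$ rise from $0$ up to $\beta$ and then stay at $\beta$, so $g_\tau$, and hence $\hat{C}^L_\tau$, is convex. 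Convexity of $\hat{C}^U_\tau$ then follows immediately from Lemma~\ref{lemma:lcp:cf:hatcluconvert}, since $\hat{C}^U_\tau(x) = \hat{C}^L_\tau(x) - \beta x$ differs from a convex function by a linear term. This settles Lemma~\ref{lemma:lcp:cf:convex}.

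To complete the induction and obtain Lemma~\ref{lemma:lcp:cf:slope} at step $\tau$, I would use that $x^U_\tau$ is by definition the largest minimizer of $\hat{C}^U_\tau$. Convexity of $\hat{C}^U_\tau$ gives $\Delta\hat{C}^U_\tau(x^U_\tau) \le 0$, while maximality of the minimizer forces $\Delta\hat{C}^U_\tau(x^U_\tau + 1) > 0$. Differencing the identity of Lemma~\ref{lemma:lcp:cf:hatcluconvert} yields $\Delta\hat{C}^L_\tau(x) = \Delta\hat{C}^U_\tau(x) + \beta$, so these two inequalities translate into $\Delta\hat{C}^L_\tau(x^U_\tau) \le \beta$ and $\Delta\hat{C}^L_\tau(x^U_\tau+1) \ge \beta$; monotonicity of the slopes (from convexity) extends them to the claimed statement that the slope of $\hat{C}^L_\tau$ is at most $\beta$ for $x \le x^U_\tau$ and at least $\beta$ for $x > x^U_\tau$. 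The base case $\tau = 1$ is direct, since $\hat{C}^L_1(x) = f_1(x) + \beta x$ is convex and the slope claim reduces to the sign of $\Delta f_1$ around its largest minimizer.

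The main obstacle I anticipate is the explicit evaluation of $g_\tau$ in the inductive step: one must verify, through the telescoping comparisons above, that the inf-convolution preserves convexity in the integer setting and, in particular, caps the upward slope at $\beta$. This is exactly where the slope bound carried in the induction hypothesis is indispensable, and it is the reason the convexity statement and the slope statement have to be proven together in a single induction.
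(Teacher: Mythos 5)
Your proposal is correct and follows essentially the same route as the paper: a joint induction on $\tau$ that splits the inf-convolution $\min_{y}\{\hat{C}^L_{\tau-1}(y)+\beta(x-y)^+\}$ into the three regions delimited by the smallest minimizer of $\hat{C}^L_{\tau-1}$ and by $x^U_{\tau-1}$, and then derives the slope bounds at $x^U_\tau$ from Lemma~\ref{lemma:lcp:cf:hatcluconvert} together with the fact that $x^U_\tau$ is the largest minimizer of $\hat{C}^U_\tau$. The only (minor) difference is that you argue convexity by observing that the slopes of $g_\tau$ rise monotonically from $0$ through $[0,\beta]$ to $\beta$, which slightly streamlines the paper's explicit comparison of $\Delta\hat{C}^L_\tau$ at the two junction points.
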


\begin{proof}[Proof of Lemmas~\ref{lemma:lcp:cf:convex} and~\ref{lemma:lcp:cf:slope}]
	First, we will prove the case $B = L$ by induction. The function 
	\begin{equation*}
	\hat{C}^L_1(x) = f_1(x) + \beta x
	\end{equation*} is convex, because all $f_t$ are convex and $\beta x$ is a linear function which is also convex (note that convexity is closed under addition). For $\hat{C}^U_\tau$ there are no costs for powering up, so $x^U_1 = \argmin_{x} f_1(x)$ and therefore
	\begin{align*}
	\Delta \hat{C}^L_1(x^U_1) &= \hat{C}^L_1(x^U_1) - \hat{C}^L_1(x^U_1 - 1) \\
	&= f_1(x^U_1) - f_1(x^U_1 - 1) + \beta \\
	&\leq \beta
	\end{align*}
	and
	\begin{align*}
	\Delta \hat{C}^L_1(x^U_1+1) &= \hat{C}^L_1(x^U_1 +1) - \hat{C}^L_1(x^U_1) \\
	&= f_1(x^U_1 + 1) - f_1(x^U_1) + \beta \\
	&\geq \beta,
	\end{align*}
	so for $\tau = 1$ both lemmas are fulfilled.
	
	Assume that $\hat{C}^L_{\tau -1}$ is convex, $\Delta \hat{C}^L_{\tau-1}(x^U_{\tau-1}) \leq \beta$ and $\Delta \hat{C}^L_{\tau-1}(x^U_{\tau-1}+1) \geq \beta$. By definition, we have 
	\begin{align*}
	\hat{C}^L_\tau(x) 
	&= \min_{x'} \left\{\hat{C}^L_{\tau-1}(x') + \beta(x-x')^+\right\} + f_\tau(x) .
	\end{align*}
	Let $x'_\text{min} \coloneqq \argmin_{x'} \hat{C}^L_{\tau-1}(x')$. 
	If $x \leq x'_\text{min}$, then 
	\begin{equation*}
	\min_{x'} \left\{\hat{C}^L_{\tau-1}(x') + \beta(x-x')^+\right\} = C^L_{\tau-1}(x'_\text{min})
	\end{equation*}
	holds, so 
	\begin{equation} \label{eqn:lcp:cf:convex:la}
	\hat{C}^L_\tau(x) = \hat{C}^L_{\tau-1}(x'_\text{min}) + f_\tau(x) 
	\end{equation}
	is convex for $x \leq x'_\text{min}$.
	
	Now we consider the case $x > x'_\text{min}$. It is clear that $x'_\text{min} \leq x' \leq x$, because for $x' > x$ the term $\beta (x - x')^+$ is zero. We differ between two cases:
	
	If $x \leq x^U_{\tau - 1}$, then $\Delta \hat{C}^L_{\tau-1}(x) \leq \beta$ holds, since $\Delta \hat{C}^L_{\tau-1}(x^U_{\tau-1}) \leq \beta$ and $\hat{C}^L_{\tau-1}$ is convex. Therefore, $x' = x$ minimizes the term $\hat{C}^L_{\tau-1}(x') + \beta(x-x')^+$ because using a smaller state $\tilde{x} < x$ instead of $x' = x$ would increase the switching cost by $\beta (\tilde{x} - x')$ while the decrease of $\hat{C}^L_{\tau - 1}$ is less than or equal to $\beta (\tilde{x} - x')$. Thus,
	\begin{equation*}
	\min_{x'} \left\{\hat{C}^L_{\tau-1}(x') + \beta(x-x')^+\right\} = \hat{C}^L_{\tau-1}(x)
	\end{equation*}
	and
	\begin{equation} \label{eqn:lcp:cf:convex:lb}
	\hat{C}^L_\tau (x) = \hat{C}^L_{\tau - 1}(x) + f_\tau(x)
	\end{equation}
	is convex for $x'_\text{min} < x \leq x^U_{\tau - 1}$.
	
	If $x > x^U_{\tau - 1}$, then $x' = x^U_{\tau - 1}$, because using a greater state $\tilde{x} > x^U_{\tau - 1}$ would increase the value of $C^L_{\tau - 1}(x')$ by at least $\beta (\tilde{x} - x^U_{\tau - 1})$ while the switching cost is decreased by $\beta (\tilde{x} - x^U_{\tau - 1})$. Analogously, using a smaller state $\tilde{x} < x^U_{\tau - 1}$ would decrease the value of $C^L_{\tau - 1}(x')$ by at most $\beta ( x^U_{\tau - 1} - \tilde{x})$ while the switching cost is increased by $\beta ( x^U_{\tau - 1} - \tilde{x})$. Thus,
	\begin{equation*}
	\min_{x'} \left\{\hat{C}^L_{\tau-1}(x') + \beta(x-x')^+\right\} = \hat{C}^L_{\tau-1}(x^U_{\tau - 1}) + \beta(x-x^U_{\tau - 1})
	\end{equation*}
	and
	\begin{equation} \label{eqn:lcp:cf:convex:lc}
	\hat{C}^L_\tau (x) = \hat{C}^L_{\tau - 1}(x^U_{\tau -1}) - \beta x^U_{\tau -1} + \beta x + f_\tau(x)
	\end{equation}
	is convex for $x > x^U_{\tau -1}$.
	
	To show that $\hat{C}^L_\tau(x)$ is convex for all $x$, we have to compare the slopes of the edge cases. Note that equation~\eqref{eqn:lcp:cf:convex:la} and~\eqref{eqn:lcp:cf:convex:lb} as well as~\eqref{eqn:lcp:cf:convex:lb} and~\eqref{eqn:lcp:cf:convex:lc} have the same values for $x = x'_\text{min}$ and $x = x^U_{\tau-1}$, respectively. We have to show that
	\begin{align}
	\Delta \hat{C}^L_\tau (x'_\text{min}) &\leq \Delta \hat{C}^L_\tau (x'_\text{min} +1) 
	\label{eqn:lcp:cf:convex:delta:ab} \\
	\Delta \hat{C}^L_\tau (x^U_{\tau - 1}) &\leq \Delta \hat{C}^L_\tau (x^U_{\tau - 1} + 1).
	\label{eqn:lcp:cf:convex:delta:bc}
	\end{align}
	First, we will prove~\eqref{eqn:lcp:cf:convex:delta:ab} by using equations~\eqref{eqn:lcp:cf:convex:lb} and~\eqref{eqn:lcp:cf:convex:la}:
	\begin{align*}
	 \Delta \hat{C}^L_\tau (x'_\text{min} +1) 
	\alignstack{\eqref{eqn:lcp:cf:convex:lb}}{=}{}& \hat{C}^L_{\tau-1}(x'_\text{min} +1) - \hat{C}^L_{\tau-1}(x'_\text{min}) + f_\tau(x'_\text{min}+1) - f_\tau(x'_\text{min} ) \\
	\geq{}& f_\tau(x'_\text{min}+1) - f_\tau(x'_\text{min} ) \\
	\geq{}&  f_\tau(x'_\text{min}) - f_\tau(x'_\text{min} - 1) \\
	\alignstack{\eqref{eqn:lcp:cf:convex:la}}{=}{}& \Delta \hat{C}^L_\tau (x'_\text{min}) .
	\end{align*}
	The first inequality holds since $x'_\text{min}$ minimizes $\hat{C}^L_{\tau-1}$, the second inequality uses the convexity of~$f_\tau$.
	
	Inequality~\eqref{eqn:lcp:cf:convex:delta:bc} can be shown as follows:
	\begin{align*}
	 \Delta \hat{C}^L_\tau (x^U_{\tau - 1}) 
	\alignstack{\eqref{eqn:lcp:cf:convex:lb}}{=}{}& \hat{C}^L_{\tau-1}(x^U_{\tau - 1}) - \hat{C}^L_{\tau-1}(x^U_{\tau - 1} - 1) + f_\tau(x^U_{\tau - 1}) - f_\tau(x^U_{\tau - 1} - 1 ) \\
	={}& \Delta \hat{C}^L_{\tau-1}(x^U_{\tau - 1}) + f_\tau(x^U_{\tau - 1}) - f_\tau(x^U_{\tau - 1} - 1 ) \\
	\alignstack{\text{IH}}{\leq}{}& \beta + f_\tau(x^U_{\tau - 1}) - f_\tau(x^U_{\tau - 1} - 1 ) \\
	\leq{}& \beta + f_\tau(x^U_{\tau - 1} + 1) - f_\tau(x^U_{\tau - 1} ) \\
	\alignstack{\eqref{eqn:lcp:cf:convex:lc}}{=}{}& \Delta \hat{C}^L_\tau (x^U_{\tau - 1} + 1) .
	\end{align*}
	The first inequality uses the induction hypothesis ($\Delta \hat{C}^L_{\tau-1}(x^U_{\tau-1}) \leq \beta$), the second inequality holds due to the convexity of $f_\tau$.
	
	Now, we know that $\hat{C}^L_\tau$ is convex. We still have to show the slope property of Lemma~\ref{lemma:lcp:cf:slope}. We begin showing $\Delta \hat{C}^L_\tau (x^U_{\tau}) \leq \beta$ by using Lemma~\ref{lemma:lcp:cf:hatcluconvert}.
	\begin{align*}
	\Delta \hat{C}^L_\tau (x^U_{\tau}) &\;=\; \hat{C}^L_\tau (x^U_{\tau}) - \hat{C}^L_\tau (x^U_{\tau} - 1) \\
	&\alignstack{\text{\tiny L\ref{lemma:lcp:cf:hatcluconvert}}}{\;=\;} \hat{C}^U_\tau (x^U_{\tau}) + \beta x^U_\tau - \hat{C}^U_\tau (x^U_{\tau} - 1) - \beta (x^U_\tau - 1)  \\
	&\;\leq\; \beta .
	\end{align*}
	The inequality holds, because $x^U_\tau$ minimizes $\hat{C}^U_\tau$, so $\hat{C}^U_\tau (x^U_{\tau}) - \hat{C}^L_\tau (x^U_{\tau} - 1) \leq 0$. 
	
	The same arguments can be used to show that $\Delta \hat{C}^L_\tau (x^U_{\tau} + 1) \geq \beta$.
	\begin{align*}    
	\Delta \hat{C}^L_\tau (x^U_{\tau} + 1) &\;=\; \hat{C}^L_\tau (x^U_{\tau} + 1) - \hat{C}^L_\tau (x^U_{\tau}) \\
	&\alignstack{\text{\tiny L\ref{lemma:lcp:cf:hatcluconvert}}}{\;=\;} \hat{C}^U_\tau (x^U_{\tau}+1) + \beta (x^U_\tau+1) - \hat{C}^U_\tau (x^U_{\tau}) - \beta (x^U_\tau) \\
	&\;\geq\; \beta .
	\end{align*}
	The inequality holds because  $\hat{C}^U_\tau (x^U_{\tau}+1) - \hat{C}^L_\tau (x^U_{\tau}) \geq 0$.
	
	Since $\hat{C}^L_\tau(x)$ is convex, by Lemma~\ref{lemma:lcp:cf:hatcluconvert}, $\hat{C}^U_\tau(x) = \hat{C}^L_\tau(x) + \beta x$ is convex too, because convexity is closed under addition. 
\end{proof}

\begin{lemma} \label{lemma:lcp:cf:slope:allx}
	For $x \leq x^U_\tau$, the slope of $\hat{C}^L_\tau(x)$ is at most $\beta$, i.e., $\Delta\hat{C}^L_\tau(x) \leq \beta$ holds. 
\end{lemma}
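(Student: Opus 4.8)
The plan is to obtain this statement as an immediate corollary of the two lemmas just established, without any new computation. The key observation is that Lemma~\ref{lemma:lcp:cf:convex} tells us $\hat{C}^L_\tau$ is convex, and for a function defined on the integer states $[m]_0$ convexity is precisely the statement that its discrete slope $\Delta \hat{C}^L_\tau(x) = \hat{C}^L_\tau(x) - \hat{C}^L_\tau(x-1)$ is monotonically non-decreasing in $x$. Meanwhile Lemma~\ref{lemma:lcp:cf:slope} pins down the value of this slope at the single point $x^U_\tau$, namely $\Delta \hat{C}^L_\tau(x^U_\tau) \leq \beta$. The desired bound then follows by propagating that single-point bound leftward along the non-decreasing slope sequence.

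Concretely, I would carry out the argument in three short steps. First, invoke Lemma~\ref{lemma:lcp:cf:convex} to record that $\Delta \hat{C}^L_\tau$ is non-decreasing, i.e.\ $\Delta \hat{C}^L_\tau(x) \leq \Delta \hat{C}^L_\tau(x')$ whenever $x \leq x'$. Second, invoke Lemma~\ref{lemma:lcp:cf:slope} to get $\Delta \hat{C}^L_\tau(x^U_\tau) \leq \beta$. Third, chain the two: for any integer $x$ with $1 \leq x \leq x^U_\tau$, monotonicity applied with $x' = x^U_\tau$ yields
\begin{equation*}
\Delta \hat{C}^L_\tau(x) \leq \Delta \hat{C}^L_\tau(x^U_\tau) \leq \beta,
\end{equation*}
which is exactly the claim.

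I do not expect any real obstacle here, since the lemma is a direct consequence of convexity together with the already-proven slope bound at $x^U_\tau$; the content was effectively done inside Lemma~\ref{lemma:lcp:cf:slope}, and this statement merely extends the bound from the boundary point to the entire range below it. The only point that warrants a line of care is the interpretation of convexity for a function on the integer domain: I would make explicit that convexity of $\hat{C}^L_\tau$ is equivalent to monotonicity of the first differences $\Delta \hat{C}^L_\tau$, so that the bound at $x^U_\tau$ indeed dominates the slopes at all smaller states. No separate treatment of a base case is needed beyond noting that $\Delta$ is defined for $x \geq 1$, so the claim is understood for $1 \leq x \leq x^U_\tau$.
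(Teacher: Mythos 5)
Your proposal is correct and follows exactly the same route as the paper: the paper's proof is a one-liner that combines $\Delta\hat{C}^L_\tau(x^U_\tau) \leq \beta$ from Lemma~\ref{lemma:lcp:cf:slope} with the convexity of $\hat{C}^L_\tau$ from Lemma~\ref{lemma:lcp:cf:convex} to conclude $\Delta\hat{C}^L_\tau(x) \leq \beta$ for all $x \leq x^U_\tau$. Your only addition is the (welcome) explicit remark that convexity on the integer domain means the first differences are non-decreasing, which the paper leaves implicit.
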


\begin{proof}
	By Lemma~\ref{lemma:lcp:cf:slope}, $\Delta\hat{C}^L_\tau(x^U_\tau) \leq \beta$ holds and by Lemma~\ref{lemma:lcp:cf:convex}, $\hat{C}^L_\tau$ is convex, so $\Delta\hat{C}^L_\tau(x) \leq \beta$ holds for $x \leq x^U_\tau$.
\end{proof}

The next lemma characterizes the behavior of the optimal solution backwards in time.

\begin{lemma} \label{lemma:lcp:cf:optrec}
	A solution vector $(\hat{x}_1, \dots,  \hat{x}_T)$ that fulfills the following recursive equality for all $t \in [T]$ is optimal:
	\begin{equation*}
	\hat{x}_t \coloneqq 
	\begin{cases}
	0, & t = T+1 \\
	[\hat{x}_{t + 1}]^{x^U_t} _{x^L_t}, & t \leq T .
	\end{cases}
	\end{equation*}
\end{lemma}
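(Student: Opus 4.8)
I would prove optimality by backward induction on $t$, carrying the invariant
$$P(t): \text{there is a minimizer } X^\ast \text{ of } C^L_T \text{ with } x^\ast_s = \hat{x}_s \text{ for all } s \in \{t, t+1, \dots, T\}.$$
Here $C^L_T$ is exactly the total cost in~\eqref{eqn:model:cost}, so its minimizers are the optimal schedules. Because the recursion determines $\hat{X}$ uniquely, establishing $P(1)$ gives an optimal schedule that equals $\hat{X}$, which is the claim. The base case $P(T)$ is immediate: the vector $X^L_T$ minimizing $C^L_T$ has last state $x^L_T$, and $\hat{x}_T = [0]^{x^U_T}_{x^L_T} = x^L_T$ since $0 \le x^L_T$.

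\textbf{Inductive step and reduction.} Assume $P(t+1)$, so there is an optimal $X^\ast$ with $x^\ast_s = \hat{x}_s$ for all $s \ge t+1$; in particular $x^\ast_{t+1} = \hat{x}_{t+1}$. The key observation is that, with the suffix frozen to the values $\hat{x}_{t+1}, \dots, \hat{x}_T$, the total cost splits as
$$C^L_T(X) = C^L_t\big((x_1,\dots,x_t)\big) + \beta(\hat{x}_{t+1} - x_t)^+ + K,$$
where $K$ collects all terms not depending on $x_t$. Minimizing over the prefix $x_1,\dots,x_{t-1}$ replaces $C^L_t((x_1,\dots,x_t))$ by $\hat{C}^L_t(x_t)$, so the remaining problem is to minimize the single-variable function $h(x) \coloneqq \hat{C}^L_t(x) + \beta(\hat{x}_{t+1} - x)^+$. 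If I can show $\hat{x}_t = [\hat{x}_{t+1}]^{x^U_t}_{x^L_t}$ minimizes $h$, then placing the optimal prefix in front of $\hat{x}_t$ and keeping the frozen suffix yields a schedule $X^{\ast\ast}$ of cost $\min_x h(x) + K$; since $X^\ast$ shares the frozen suffix and is globally optimal, $\min_x h(x) + K = C^L_T(X^\ast)$, so $X^{\ast\ast}$ is optimal and agrees with $\hat{X}$ on $\{t,\dots,T\}$, giving $P(t)$.

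\textbf{The crux.} The step I expect to require the most care is verifying that the projection $\hat{x}_t$ minimizes $h$. Since $h$ is the sum of the convex function $\hat{C}^L_t$ (Lemma~\ref{lemma:lcp:cf:convex}) and the convex penalty $\beta(\hat{x}_{t+1}-x)^+$, it is convex, so it suffices to check the discrete local-optimality conditions $\Delta h(\hat{x}_t) \le 0 \le \Delta h(\hat{x}_t + 1)$. A short computation gives $\Delta h(x) = \Delta\hat{C}^L_t(x) - \beta$ for $x \le \hat{x}_{t+1}$ and $\Delta h(x) = \Delta\hat{C}^L_t(x)$ for $x > \hat{x}_{t+1}$. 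I would combine this with the slope estimates $\Delta\hat{C}^L_t(x) \le \beta$ for $x \le x^U_t$ and $\Delta\hat{C}^L_t(x) \ge \beta$ for $x > x^U_t$ (Lemmas~\ref{lemma:lcp:cf:slope} and~\ref{lemma:lcp:cf:slope:allx}), together with the fact that $x^L_t$ is the smallest minimizer of $\hat{C}^L_t$, whence $\Delta\hat{C}^L_t(x) \le 0$ for $x \le x^L_t$ and $\ge 0$ for $x > x^L_t$. Splitting into the three cases $\hat{x}_{t+1} < x^L_t$ (projection $x^L_t$), $x^L_t \le \hat{x}_{t+1} \le x^U_t$ (projection $\hat{x}_{t+1}$), and $\hat{x}_{t+1} > x^U_t$ (projection $x^U_t$) — precisely the branches distinguished by the clamping operator $[\,\cdot\,]^{x^U_t}_{x^L_t}$, whose interval is nonempty by $x^L_t \le x^U_t$ (Lemma~\ref{lemma:lcp:cf:lu}) — each of the two local-optimality inequalities follows by reading off the appropriate slope bound in the relevant regime of $\Delta h$. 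The main obstacle is exactly this bookkeeping: matching each branch of the projection to the correct regime and invoking the right slope estimate.
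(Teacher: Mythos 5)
Your proof is correct, and it is built on the same backward-induction skeleton and the same supporting lemmas (convexity of $\hat{C}^L_t$, the slope bounds at $x^U_t$, and $x^L_t\le x^U_t$ from Lemma~\ref{lemma:lcp:cf:lu}) as the paper's proof; the difference lies in how the inductive step is executed. The paper treats the three branches of the projection with three separate ad hoc arguments: an exchange argument for $\hat{x}_{t+1}<x^L_t$, an explicit $C^L$--$C^U$ bookkeeping computation for $\hat{x}_{t+1}>x^U_t$, and a slope/convexity contradiction for the middle case. You instead make the reduction to a one-dimensional problem explicit --- freeze the suffix, split the cost as $\hat{C}^L_t(x_t)+\beta(\hat{x}_{t+1}-x_t)^+ + K$, and verify the discrete first-order conditions $\Delta h(\hat{x}_t)\le 0\le\Delta h(\hat{x}_t+1)$ for $h(x)=\hat{C}^L_t(x)+\beta(\hat{x}_{t+1}-x)^+$ at the projection point. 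This buys a uniform treatment of all three cases (each becomes a one-line application of the appropriate slope estimate from Lemma~\ref{lemma:lcp:cf:slope}, Lemma~\ref{lemma:lcp:cf:slope:allx}, or the minimality of $x^L_t$) and makes transparent exactly why the answer is a clamping of $\hat{x}_{t+1}$ to $[x^L_t,x^U_t]$; the cost is that it leans more heavily on the convexity machinery, whereas the paper's first two cases are self-contained exchange arguments. Your slope computation for $h$ and the case bookkeeping check out, and the base case $P(T)$ (via $[0]^{x^U_T}_{x^L_T}=x^L_T$) is handled correctly.
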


\begin{proof}
	We will prove the lemma by induction in reverse time. Powering down does not cost anything, so setting $\hat{x}_{T+1} = 0$ does not produce any additional costs. Assume that $(\hat{x}_{\tau+1}, \dots, \hat{x}_T)$ can lead to an optimal solution, i.e., there exists an optimal solution $X^\ast$ with $x^\ast_t = \hat{x}_t$ for $t \geq \tau + 1$. We will show that the vector $(\hat{x}_{\tau}, \dots, \hat{x}_T)$ can still lead to an optimal solution. 
	
	We have to examine three cases:
	
	\textbf{Case 1:\ } If $\hat{x}_{\tau+1} < x^L_\tau$, then $\hat{x}_\tau = x^L_\tau$. By Lemma~\ref{lemma:lcp:cf:lu}, $x^\ast_\tau \geq x^L_\tau$ holds. Since $X^L_\tau$ minimizes $C^L_\tau$, we know that $C^L_\tau(X^L_\tau) \leq C^L_\tau(X)$ for all $X = (x_1,\dots, x_\tau)$. Thus, there is no benefit to use a state $x' \geq x^L_\tau$, because afterwards we have to power down some servers to reach $\hat{x}_{\tau+1}$. Therefore, $\hat{x}_\tau = x^L_\tau$ can still lead to an optimal solution.
	
	\textbf{Case 2:\ } If $\hat{x}_{\tau +1} > x^U_\tau$, then $\hat{x}_\tau = x^U_\tau$. By Lemma~\ref{lemma:lcp:cf:lu}, $x^\ast_\tau \leq x^U_\tau$ holds. Since $X^U_\tau$ minimizes $C^U_\tau$, we know that $C^U_\tau(X^U_\tau) \leq C^U_\tau(X)$ for all $X$. By using the solution $X^U_\tau$ and then switching to state $\hat{x}_{\tau + 1}$, the resulting cost is
	\begin{alignat*}{6}
	&C^L_\tau(X^U_\tau)  &{}+{}&\beta (\hat{x}_{\tau + 1} - x^U_\tau) &{}+{}& f_{\tau + 1}(\hat{x}_{\tau + 1}) \\
	={}&C^U_\tau(X^U_\tau) &{}+{}&  \beta \hat{x}_{\tau + 1}  &{}+{}& f_{\tau + 1}(\hat{x}_{\tau + 1}) \\
	\leq{}& C^U_\tau(X) &{}+{}&  \beta \hat{x}_{\tau + 1} &{}+{}& f_{\tau + 1}(\hat{x}_{\tau + 1}) \\
	={}& C^L_\tau(X)  &{}+{}& \beta (\hat{x}_{\tau + 1} -  x_\tau) &{}+{}& f_{\tau + 1}(\hat{x}_{\tau + 1}) .
	\end{alignat*}	
	The last line describes the cost until $\tau + 1$ by using the schedule $X = (x_1, \dots, x_\tau)$ with $x_\tau \leq \hat{x}_{\tau + 1}$ instead of $X^U_\tau$. The cost is not reduced by using $X$, so $\hat{x}_\tau = x^U_\tau$ can still lead to an optimal solution.
	
	\textbf{Case 3:\ } If $x^L_\tau \leq \hat{x}_{\tau +1} \leq x^U_\tau$, then $\hat{x}_\tau = \hat{x}_{\tau + 1}$. Assume that there is a better state $\hat{x}^-_\tau < \hat{x}_\tau$ 
	such that 
	\begin{equation} \label{eqn:lcp:cf:optrec:a}
	\hat{C}^L_\tau(\hat{x}^-_\tau) + \beta (\hat{x}_{\tau + 1} - \hat{x}^-_\tau) < \hat{C}^L_\tau(\hat{x}_ \tau) .
	\end{equation}
	In other words, we assume that using $\hat{x}^-_\tau$ servers at time $\tau$ and then powering up the missing $\hat{x}_{\tau + 1} - \hat{x}^-_\tau$ servers is cheaper than using $\hat{x}_\tau$ servers. 
	By Lemma~\ref{lemma:lcp:cf:slope:allx}, we know that the slope of $\hat{C}^L_\tau(x)$ is at most $\beta$ for $x \leq x^U_\tau$. This leads to the contradiction
	\begin{equation*}
	\hat{C}^L_\tau( \hat{x}_\tau) - \hat{C}^L_\tau(\hat{x}^-_ \tau) 
	\leq \beta (\hat{x}_\tau - \hat{x}^-_\tau) 
	= \beta (\hat{x}_{\tau+1} - \hat{x}^-_\tau) 
	\alignstack{\eqref{eqn:lcp:cf:optrec:a}}{<} \hat{C}^L_\tau( \hat{x}_\tau) - \hat{C}^L_\tau(\hat{x}^-_ \tau)   .
	\end{equation*}
	Therefore, there is no $\hat{x}^-_\tau$ with the desired properties.
	
	The other case is more simple: Assume that there is a better state $\hat{x}^+_\tau > \hat{x}_\tau$ with $\hat{C}^L_\tau(\hat{x}^+_\tau) < \hat{C}^L_\tau(\hat{x}_ \tau)$, then $x^L_\tau$ (which minimizes $\hat{C}^L_\tau$) must be greater than $\hat{x}_\tau$, because, by Lemma~\ref{lemma:lcp:cf:convex}, $\hat{C}^L_\tau$ is a convex function. However, this is a contradiction to $x^L_\tau \leq \hat{x}_{\tau +1} = \hat{x}_\tau$. 
\end{proof}

In the following $X^\ast = (x^\ast_1, \dots, x^\ast_T)$ denotes an optimal solution that fulfills the recursive equality of Lemma~\ref{lemma:lcp:cf:optrec}. The next lemma describes time slots where $X^\text{LCP}$ and $X^\ast$ are in same state. Informally, the lemma says that if the LCP curve cuts the optimal solution, then there is one time slot $\tau$ where both solutions are in the same state.

\begin{lemma} \label{lemma:lcp:cf:cut}
	If $x^\text{LCP}_{\tau - 1} < x^\ast_{\tau - 1}$ and $x^\text{LCP}_{\tau} \geq x^\ast_{\tau}$, then $x^\text{LCP}_{\tau} = x^\ast_{\tau}$. 
	
	\noindent $\phantom{\textbf{Lemma 12. }}$\hspace{-5pt}
	If $x^\text{LCP}_{\tau - 1}  > x^\ast_{\tau - 1} $ and $x^\text{LCP}_{\tau} \leq x^\ast_{\tau}$, then $x^\text{LCP}_{\tau} = x^\ast_{\tau}$.
\end{lemma}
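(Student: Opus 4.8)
The plan is to prove both implications by contradiction, using three ingredients: the LCP update $x^\text{LCP}_\tau = [x^\text{LCP}_{\tau-1}]^{x^U_\tau}_{x^L_\tau}$ from \eqref{eqn:lcp:algo:lcp}, the backward recursion $x^\ast_{\tau-1} = [x^\ast_\tau]^{x^U_{\tau-1}}_{x^L_{\tau-1}}$ satisfied by the fixed optimal solution (Lemma~\ref{lemma:lcp:cf:optrec}), and the sandwich $x^L_t \le x^\ast_t \le x^U_t$ of Lemma~\ref{lemma:lcp:cf:lu}. First I would dispose of the base case: both hypotheses force $\tau \ge 2$, since $x^\text{LCP}_0 = x^\ast_0 = 0$ makes the strict inequality $x^\text{LCP}_{\tau-1} \ne x^\ast_{\tau-1}$ impossible at $\tau = 1$. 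This matters because for $\tau \ge 2$ the value $x^\text{LCP}_{\tau-1}$ is itself a projection into $[x^L_{\tau-1}, x^U_{\tau-1}]$, so it obeys $x^L_{\tau-1} \le x^\text{LCP}_{\tau-1} \le x^U_{\tau-1}$ --- a bound I will use to close the argument.

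For the first implication I would assume $x^\text{LCP}_{\tau-1} < x^\ast_{\tau-1}$ and $x^\text{LCP}_\tau \ge x^\ast_\tau$ and suppose, for contradiction, that the latter is strict, $x^\text{LCP}_\tau > x^\ast_\tau$. Since $x^\ast_\tau \ge x^L_\tau$ by Lemma~\ref{lemma:lcp:cf:lu}, this gives $x^\text{LCP}_\tau > x^L_\tau$, so in $x^\text{LCP}_\tau = \max\{x^L_\tau, \min\{x^U_\tau, x^\text{LCP}_{\tau-1}\}\}$ the lower clamp is inactive and $x^\text{LCP}_\tau = \min\{x^U_\tau, x^\text{LCP}_{\tau-1}\} \le x^\text{LCP}_{\tau-1}$. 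Chaining the inequalities I would obtain $x^\ast_\tau < x^\text{LCP}_{\tau-1} < x^\ast_{\tau-1}$, i.e.\ the optimum strictly decreases, $x^\ast_\tau < x^\ast_{\tau-1}$.

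The crux is then to feed $x^\ast_\tau < x^\ast_{\tau-1}$ into the backward recursion: $x^\ast_{\tau-1} = \max\{x^L_{\tau-1}, \min\{x^U_{\tau-1}, x^\ast_\tau\}\}$ can exceed its argument $x^\ast_\tau$ only when the lower clamp fires, which pins $x^\ast_{\tau-1} = x^L_{\tau-1}$ (with $x^\ast_\tau < x^L_{\tau-1}$). Then $x^\text{LCP}_{\tau-1} < x^\ast_{\tau-1} = x^L_{\tau-1}$ contradicts the bound $x^\text{LCP}_{\tau-1} \ge x^L_{\tau-1}$ noted above, forcing $x^\text{LCP}_\tau = x^\ast_\tau$. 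I expect the second implication to be fully symmetric: replacing $<$ by $>$, the lower bounds $x^L$ by the upper bounds $x^U$, and the lower clamp by the upper clamp, a strict crossing $x^\text{LCP}_\tau < x^\ast_\tau$ will force $x^\ast_{\tau-1} = x^U_{\tau-1}$ and contradict $x^\text{LCP}_{\tau-1} \le x^U_{\tau-1}$.

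The main obstacle, I anticipate, is the monotone-projection bookkeeping --- precisely arguing which clamp is active so as to pass from a strict crossing to the ``wrong-direction'' monotonicity $x^\ast_\tau < x^\ast_{\tau-1}$, and then extracting the boundary identity $x^\ast_{\tau-1} = x^L_{\tau-1}$ (respectively $x^\ast_{\tau-1} = x^U_{\tau-1}$) from the recursion. Once that boundary value is identified, the clash with the guaranteed range of $x^\text{LCP}_{\tau-1}$ is immediate, so no further estimates are needed.
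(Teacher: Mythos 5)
Your proof is correct and follows essentially the same route as the paper's: both rest on the projection form of the LCP update, the sandwich $x^L_\tau \le x^\ast_\tau \le x^U_\tau$ of Lemma~\ref{lemma:lcp:cf:lu}, and the backward recursion of Lemma~\ref{lemma:lcp:cf:optrec} forcing $x^\ast_{\tau-1}=x^L_{\tau-1}$ when the optimum strictly decreases, which clashes with $x^\text{LCP}_{\tau-1}\ge x^L_{\tau-1}$. The only difference is organizational (you assume the crossing is strict and derive one contradiction, where the paper splits on whether $x^\text{LCP}$ increased), and your explicit handling of $\tau\ge 2$ is a small point of added care.
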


\begin{proof}
	We will only show the first statement of the lemma, since the other one works exactly analogously.
	Assume that $x^\text{LCP}_{\tau - 1}  < x^\ast_{\tau - 1} $ and $x^\text{LCP}_{\tau} \geq x^\ast_{\tau}$ holds. We differ between two cases. 
	
	\textbf{Case 1:\ } If $x^\text{LCP}_{\tau - 1}  < x^\text{LCP}_{\tau}$, then $x^\text{LCP}_{\tau} = x^L_{\tau}$ (by the definition of the LCP algorithm). By $x^\text{LCP}_{\tau} \geq x^\ast_{\tau}$ and Lemma~\ref{lemma:lcp:cf:lu} (which says that $x^L_{\tau} \leq x^\ast_\tau$), we get $x^\text{LCP}_{\tau} = x^\ast_{\tau}$.
	
	\textbf{Case 2:\ } If $x^\text{LCP}_{\tau - 1} \geq x^\text{LCP}_{\tau}$, then $x^\ast_{\tau - 1} > x^\ast_\tau$ (since $x^\ast_{\tau - 1} > x^\text{LCP}_{\tau - 1}  \geq x^\text{LCP}_{\tau} \geq x^\ast_\tau$). By Lemma~\ref{lemma:lcp:cf:optrec}, $x^\ast_{\tau - 1} = x^L_{\tau - 1}$ holds which is a contradiction to $ x^\ast_{\tau - 1} > x^\text{LCP}_{\tau - 1} \geq x^L_{\tau - 1}$, so this case never occurs. \qedhere
	
%
%
\end{proof}

The time slots where $x^\text{LCP}_t = x^\ast_t$ are denoted by $0 = t_0 < t_1 < \dots < t_\kappa$. Between these time slots it is not possible that $X^\text{LCP}$ powers one or more servers down and $X^\ast$ powers servers up or vice versa. 
At the end of the time horizon, $X^\text{LCP}$ and $X^\ast$ can be in different states. To get rid of this special case, let $f_{T+1}(x) \coloneqq (\beta + \epsilon) x$ for any $\epsilon > 0$ such that $x^U_{T+1} = x^L_{T+1} = 0$. Therefore, $x^\text{LCP}_{T+1} = x^\ast_{T+1} = 0$ holds, so $t_{\kappa + 1} \coloneqq T + 1$. 

In the following $[a:b]$ with $a,b \in \mathbb{N}$ denotes the set $\{a, a+1, \dots, b\}$. Analogously, we define $[a:b[ {} \coloneqq \{a, a+1 \dots, b-1\}$, $]a:b] {}\coloneqq \{a+1, a+2, \dots, b\}$ and $]a:b[ {}\coloneqq \{a+1, a+2, \dots b-1\}$. 

\begin{lemma} \label{lemma:lcp:cf:monotony}
	\renewcommand{\labelenumi}{(\roman{enumi})}
	For all time intervals $]t_{i} : t_{i+1}[$ with $i \in [\kappa]_0$, either
	\begin{enumerate}
		\item $x^\text{LCP}_\tau > x^\ast_ \tau$ and both $x^\text{LCP}_\tau$ and $x^\ast_ \tau$ are non-increasing for all $\tau \in ]t_{i} : t_{i+1}[$, or
		\item $x^\text{LCP}_\tau < x^\ast_ \tau$ and both $x^\text{LCP}_\tau$ and $x^\ast_ \tau$ are non-decreasing for all $\tau \in ]t_{i} : t_{i+1}[$.
	\end{enumerate}
\end{lemma}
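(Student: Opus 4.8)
The plan is to separate the claim into two parts: first, that on each open interval $]t_i : t_{i+1}[$ the sign of $x^\text{LCP}_\tau - x^\ast_\tau$ is constant, so that exactly one of the alternatives~(i) and~(ii) can occur; and second, that this fixed sign forces the asserted monotonicity of both curves. Throughout I would use that $X^\ast$ is the particular optimal solution obeying the backward recursion of Lemma~\ref{lemma:lcp:cf:optrec}, together with the sandwich $x^L_\tau \le x^\ast_\tau \le x^U_\tau$ from Lemma~\ref{lemma:lcp:cf:lu}. If the interval is empty (that is, $t_{i+1} = t_i + 1$) the statement is vacuous, so I assume it is nonempty.

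For the constant-sign part I would argue by contradiction via the cut lemma (Lemma~\ref{lemma:lcp:cf:cut}). By the definition of the crossing times $t_i$, every interior $\tau \in {}]t_i : t_{i+1}[$ satisfies $x^\text{LCP}_\tau \ne x^\ast_\tau$, so the sign is well defined there; since the interior indices form a contiguous block, it suffices to rule out a sign change between two consecutive interior times $\tau-1, \tau$. If $x^\text{LCP}_{\tau-1} < x^\ast_{\tau-1}$ while $x^\text{LCP}_\tau \ge x^\ast_\tau$, the first statement of Lemma~\ref{lemma:lcp:cf:cut} yields $x^\text{LCP}_\tau = x^\ast_\tau$, making $\tau$ a crossing point strictly inside the interval, a contradiction; the opposite transition is excluded by the second statement. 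Hence the sign is constant, and we are throughout in case~(i), $x^\text{LCP}_\tau > x^\ast_\tau$, or in case~(ii), $x^\text{LCP}_\tau < x^\ast_\tau$.

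For the monotonicity part I would handle case~(i) and obtain case~(ii) by the obvious symmetry. Fix consecutive times $\tau-1, \tau \in {}]t_i : t_{i+1}[$. For $X^\text{LCP}$: since $x^\text{LCP}_\tau > x^\ast_\tau \ge x^L_\tau$ by Lemma~\ref{lemma:lcp:cf:lu}, the lower bound in the projection $x^\text{LCP}_\tau = [x^\text{LCP}_{\tau-1}]^{x^U_\tau}_{x^L_\tau}$ of~\eqref{eqn:lcp:algo:lcp} is not binding, so $x^\text{LCP}_\tau = \min\{x^\text{LCP}_{\tau-1}, x^U_\tau\} \le x^\text{LCP}_{\tau-1}$. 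For $X^\ast$: applying the reverse recursion $x^\ast_{\tau-1} = [x^\ast_\tau]^{x^U_{\tau-1}}_{x^L_{\tau-1}}$ of Lemma~\ref{lemma:lcp:cf:optrec} and using $x^\ast_{\tau-1} < x^\text{LCP}_{\tau-1} \le x^U_{\tau-1}$, the upper bound is not binding, so $x^\ast_{\tau-1} = \max\{x^\ast_\tau, x^L_{\tau-1}\} \ge x^\ast_\tau$. Both curves are therefore non-increasing on the interval; in case~(ii) the same computations with the bounds and inequalities reversed give non-decreasing curves.

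The step I expect to be the main obstacle is not any individual estimate but getting the two monotonicity deductions to rest on the \emph{correct} non-binding bound and the correct recursion: $X^\text{LCP}$ is governed by its forward projection and the \emph{lower} bound is the one Lemma~\ref{lemma:lcp:cf:lu} renders inactive once the sign is fixed, whereas $X^\ast$ must be read off the \emph{backward} recursion of Lemma~\ref{lemma:lcp:cf:optrec} with the \emph{upper} bound inactive. Recognizing that the fixed sign is exactly what collapses each projection into a clean $\min$ or $\max$ with the neighbouring value -- and confining the projection identities to interior times so the strict inequalities $x^\text{LCP} > x^\ast$ and $x^\ast < x^U$ are genuinely available where invoked -- is the crux that makes the monotonicity fall out immediately.
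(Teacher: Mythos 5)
Your proof is correct and follows essentially the same route as the paper: constancy of the sign on each open interval via Lemma~\ref{lemma:lcp:cf:cut}, and monotonicity of both curves from the forward projection defining LCP together with the backward recursion of Lemma~\ref{lemma:lcp:cf:optrec} and the sandwich of Lemma~\ref{lemma:lcp:cf:lu}. The only (harmless) difference is stylistic: you read off which clamp is inactive directly from the fixed sign, whereas the paper derives the same conclusions by contradiction.
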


\begin{proof}
	First, we consider case (i). Let $x^\text{LCP}_{\tau} > x^\ast_{\tau}$ for any $\tau \in ]t_{i} : t_{i+1}[$. By Lemma~\ref{lemma:lcp:cf:cut}, this inequality holds for all $\tau \in ]t_{i} : t_{i+1}[$. 
	
	Assume that $x^\text{LCP}_{\tau+1} > x^\text{LCP}_{\tau}$. Then $x^L_{\tau + 1} = x^\text{LCP}_{\tau+1}$ by the LCP algorithm and $x^\ast_{\tau+ 1} \geq x^L_{\tau+ 1}$ by Lemma~\ref{lemma:lcp:cf:lu}. By Lemma~\ref{lemma:lcp:cf:optrec}, we get $x^U_\tau = x^\ast_\tau$ which leads to the contradiction $x^U_\tau = x^\ast_\tau < x^\text{LCP}_\tau \leq x^U_\tau$ (the last inequality uses the definition of the LCP algorithm). Thus, $x^\text{LCP}_{\tau}$ is non-increasing for all $\tau \in ]t_{i} : t_{i+1}[$. 
	
	Assume that $x^\ast_{\tau+1} > x^\ast_{\tau}$. Then  $x^\ast_\tau = x^U_\tau$ by Lemma~\ref{lemma:lcp:cf:optrec} which is a contradiction to $x^U_\tau \geq x^\text{LCP}_\tau > x^\ast_{t}$, so $x^\ast_\tau$ is also non-increasing for all $\tau \in ]t_{i} : t_{i+1}[$.
	
	Case (ii) works analogously.
%
%
\end{proof}

Now, we can calculate the switching cost of the LCP algorithm.

\begin{lemma} \label{lemma:lcp:cf:switching}
	$S^L_T(X^\text{LCP}) \leq	S^L_T(X^\ast)$
\end{lemma}

\begin{proof}
	By Lemma~\ref{lemma:lcp:cf:monotony}, both $x^\text{LCP}_\tau$ and $x^\ast_ \tau$ are either non-in\-creas\-ing or non-decreasing until there is a time slot $t$ with $x^\text{LCP}_t = x^\ast_ t$. Therefore, the switching cost during each time interval $[t_{i} : t_{i+1}]$ with $i \in [\kappa]_0$ is $\beta(x^\ast_{t_i} - x^\ast_{t_{i-1}})^+$ for both $X^\text{LCP}$ and $X^\ast$. By adding the switching costs of all intervals, we get $S^L_T(X^\text{LCP}) \leq S^L_T(X^\ast)$. 
\end{proof}

Lemma~\ref{lemma:lcp:cf:monotony} divides the intervals $[t_i : t_{i+1}[$ into two sets: Intervals of case (i) are called \emph{decreasing} intervals, the set of those intervals is denoted by $\mathcal{T}^-$. Intervals of case (ii) are called \emph{increasing} intervals and the set is denoted by $\mathcal{T}^+$. The following lemma is needed to estimate the operating cost of the LCP algorithm.

\begin{lemma}\label{lemma:lcp:cf:lcpcost}
	For all $\tau \in [t_i : t_{i+1}[ \in \mathcal{T}^+$,   
	\begin{equation}\label{eqn:lcp:cf:lcpcost:l}
	\hat{C}^L_\tau(x^\text{LCP}_\tau) + f_{\tau + 1}(x^\text{LCP}_{\tau +1}) \leq \hat{C}^L_{\tau + 1}(x^\text{LCP}_{\tau +1}).
	\end{equation}
	Analogously, for all $\tau \in [t_i : t_{i+1}[ \in \mathcal{T}^-$,   
	\begin{equation}\label{eqn:lcp:cf:lcpcost:u}
	\hat{C}^U_\tau(x^\text{LCP}_\tau) + f_{\tau + 1}(x^\text{LCP}_{\tau +1}) \leq \hat{C}^U_{\tau + 1}(x^\text{LCP}_{\tau +1}).
	\end{equation}
\end{lemma}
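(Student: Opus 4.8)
The plan is to peel off a single time step using the one-step recurrence for $\hat{C}^L$ established in the proof of Lemma~\ref{lemma:lcp:cf:slope}, namely $\hat{C}^L_{\tau+1}(y) = \min_{x'}\{\hat{C}^L_\tau(x') + \beta(y-x')^+\} + f_{\tau+1}(y)$, together with its analogue $\hat{C}^U_{\tau+1}(y) = \min_{x'}\{\hat{C}^U_\tau(x') + \beta(x'-y)^+\} + f_{\tau+1}(y)$ (the powering-down version). I will treat only the increasing case~\eqref{eqn:lcp:cf:lcpcost:l}; the decreasing case~\eqref{eqn:lcp:cf:lcpcost:u} is symmetric, replacing $\hat{C}^L$ by $\hat{C}^U$, swapping the roles of $x^L_\tau$ and $x^U_\tau$, and using $\hat{C}^U_\tau(x)=\hat{C}^L_\tau(x)-\beta x$ from Lemma~\ref{lemma:lcp:cf:hatcluconvert}. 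Substituting the recurrence into the right-hand side of~\eqref{eqn:lcp:cf:lcpcost:l} and cancelling the common term $f_{\tau+1}(x^\text{LCP}_{\tau+1})$ reduces the claim to the single inequality
\begin{equation*}
\hat{C}^L_\tau(x^\text{LCP}_\tau) \leq \min_{x'}\bigl\{\hat{C}^L_\tau(x') + \beta(x^\text{LCP}_{\tau+1}-x')^+\bigr\}.
\end{equation*}

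Set $g(x') := \hat{C}^L_\tau(x') + \beta(x^\text{LCP}_{\tau+1}-x')^+$, which is convex as a sum of convex functions, so it suffices to locate its minimizer and bound the minimum value from below. The inputs I would assemble are: $x^L_\tau \leq x^\text{LCP}_\tau \leq x^U_\tau$ (the LCP value is the projection onto $[x^L_\tau,x^U_\tau]$); $x^L_\tau$ minimizes the convex function $\hat{C}^L_\tau$ (Lemma~\ref{lemma:lcp:cf:convex}); $\Delta\hat{C}^L_\tau(x)\leq\beta$ for $x\leq x^U_\tau$ (Lemma~\ref{lemma:lcp:cf:slope:allx}) and $\Delta\hat{C}^L_\tau(x)\geq\beta$ for $x>x^U_\tau$ (Lemma~\ref{lemma:lcp:cf:slope}); and, since the interval lies in $\mathcal{T}^+$, that $x^\text{LCP}$ is non-decreasing there (Lemma~\ref{lemma:lcp:cf:monotony}), whence $x^\text{LCP}_{\tau+1}\geq x^\text{LCP}_\tau$. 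Writing $m^\ast:=\min\{x^U_\tau, x^\text{LCP}_{\tau+1}\}$ and computing the discrete slope $\Delta g$, I expect to show $\Delta g(x')\leq 0$ for $x'\leq m^\ast$ (since there $\Delta\hat{C}^L_\tau(x')-\beta\leq0$) and $\Delta g(x')\geq 0$ for $x'> m^\ast$ (using $\Delta\hat{C}^L_\tau\geq\beta$ past $x^U_\tau$ and $\Delta\hat{C}^L_\tau\geq0$ past the minimizer $x^L_\tau$). Hence $\min_{x'} g = g(m^\ast) \geq \hat{C}^L_\tau(m^\ast)$, the last step discarding the non-negative switching term.

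It then remains to compare $\hat{C}^L_\tau(m^\ast)$ with $\hat{C}^L_\tau(x^\text{LCP}_\tau)$. Both $x^U_\tau$ and $x^\text{LCP}_{\tau+1}$ are at least $x^\text{LCP}_\tau$ (the former always, the latter by monotonicity in $\mathcal{T}^+$), so $m^\ast \geq x^\text{LCP}_\tau \geq x^L_\tau$; since $\hat{C}^L_\tau$ is convex and minimized at $x^L_\tau$, it is non-decreasing on $[x^L_\tau,\infty)$, giving $\hat{C}^L_\tau(m^\ast)\geq\hat{C}^L_\tau(x^\text{LCP}_\tau)$ and closing the chain.

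The main obstacle I anticipate is the boundary case $\tau=t_i$ of the half-open interval $[t_i:t_{i+1}[$: Lemma~\ref{lemma:lcp:cf:monotony} asserts monotonicity of $x^\text{LCP}$ only on the open interval $]t_i:t_{i+1}[$, so I must verify that $x^\text{LCP}_{t_i+1}\geq x^\text{LCP}_{t_i}$ still holds at the left endpoint, as otherwise $m^\ast\geq x^\text{LCP}_\tau$ could fail. I would argue that if $x^\text{LCP}$ dropped at this step then, by the LCP rule, $x^\text{LCP}_{t_i+1}=x^U_{t_i+1}$, and Lemma~\ref{lemma:lcp:cf:optrec} would force $x^\ast$ to drop as well, contradicting the interval's membership in $\mathcal{T}^+$. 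The only other delicate point is getting the discrete-slope bookkeeping for $\Delta g$ correct across the kink at $x'=x^\text{LCP}_{\tau+1}$, but this becomes routine once the three slope regimes $x'\leq x^U_\tau$, $x^U_\tau<x'\leq x^\text{LCP}_{\tau+1}$, and $x'>x^\text{LCP}_{\tau+1}$ are separated.
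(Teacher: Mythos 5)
Your argument is correct and follows essentially the same route as the paper: both peel off the last time step via the optimal-substructure decomposition of $\hat{C}^L_{\tau+1}(y)=\min_{x'}\{\hat{C}^L_\tau(x')+\beta(y-x')^+\}+f_{\tau+1}(y)$ and then bound the contribution of the penultimate state using convexity of $\hat{C}^L_\tau$ with minimizer $x^L_\tau\leq x^{\text{LCP}}_\tau$ (Lemma~\ref{lemma:lcp:cf:convex}) together with the slope bound of Lemma~\ref{lemma:lcp:cf:slope:allx}. The only cosmetic differences are that you treat the paper's two cases ($x^{\text{LCP}}_\tau<x^{\text{LCP}}_{\tau+1}$ versus equality) uniformly by minimizing over all penultimate states $x'$ rather than analyzing the one chosen by an optimal solution, and that you explicitly patch the left endpoint $\tau=t_i$ of the half-open interval, a boundary case the paper passes over silently.
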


\begin{proof}
	First, we will prove equation~\eqref{eqn:lcp:cf:lcpcost:l}. 
	We differ between $x^\text{LCP}_{\tau} < x^\text{LCP}_{\tau +1}$ (case~1) and $x^\text{LCP}_{\tau} = x^\text{LCP}_{\tau +1}$ (case~2). Note that the case $x^\text{LCP}_{\tau} > x^\text{LCP}_{\tau +1}$ never occurs because $x^\text{LCP}_{t}$ is non-decreasing (Lemma~\ref{lemma:lcp:cf:monotony}).
	
	\textbf{Case 1:\ } If $x^\text{LCP}_{\tau} < x^\text{LCP}_{\tau +1}$, then $x^\text{LCP}_{\tau +1} = x^L_{\tau + 1}$ by the definition of the LCP algorithm. Furthermore,
	\begin{equation} \label{eqn:lcp:cf:lcpcost:c1}
	C^L_{\tau + 1}(X^L_{\tau +1}) = \hat{C}^L_{\tau}(x^L_{\tau +1, \tau}) + f_{\tau + 1}(x^L_{\tau + 1}) + \beta (x^L_{\tau + 1} - x^L_{\tau+1, \tau})^+ 
	\end{equation}
	holds by the definition of the upper bound. 
	If $x^L_{\tau + 1, \tau} \geq x^\text{LCP}_\tau$, then $\hat{C}^L_\tau(x^\text{LCP}_\tau) \leq \hat{C}^L_\tau(x^L_{\tau + 1, \tau})$ holds because $\hat{C}^L_\tau(x)$ is convex (Lemma~\ref{lemma:lcp:cf:convex}) with a minimum at $x^L_\tau \leq x^\text{LCP}_\tau$.
	If $x^L_{\tau + 1, \tau} < x^\text{LCP}_\tau$, then by using Lemma~\ref{lemma:lcp:cf:hatcluconvert} we get
	\begin{align*}
	\hat{C}^L_\tau(x^\text{LCP}_\tau) 
	&\alignstack{\text{\tiny L\ref{lemma:lcp:cf:hatcluconvert}}}{\,=\,} \hat{C}^U_\tau(x^\text{LCP}_\tau) +\beta x^\text{LCP}_\tau \\
	&\alignstack{\text{\tiny L\ref{lemma:lcp:cf:convex}}}{\,\leq\,} \hat{C}^U_\tau(x^L_{\tau + 1, \tau})  +\beta x^\text{LCP}_\tau \\
	&\alignstack{\text{\tiny L\ref{lemma:lcp:cf:hatcluconvert}}}{\,=\,} \hat{C}^L_\tau(x^L_{\tau + 1, \tau}) - \beta x^L_{\tau+1, \tau} + \beta x^\text{LCP}_\tau \\
	&\,\leq\, \hat{C}^L_\tau(x^L_{\tau + 1, \tau}) + \beta (x^L_{\tau + 1} - x^L_{\tau+1, \tau} )^+.
	\end{align*}
	The first inequality holds because $x^U_\tau$ minimizes $\hat{C}^U_\tau$ and $x^U_\tau \geq x^\text{LCP}_\tau > x^L_{\tau + 1, \tau}$. The last inequality uses $x^\text{LCP}_\tau < x^\text{LCP}_{\tau + 1} = x^L_{\tau + 1}$.
	
	By using this result in equation~\eqref{eqn:lcp:cf:lcpcost:c1}, we get
	\begin{equation*}
	C^L_{\tau + 1}(X^L_{\tau +1}) \geq \hat{C}^L_{\tau}(x^\text{LCP}_{\tau}) + f_{\tau + 1}(x^L_{\tau + 1}).
	\end{equation*}
	With $\hat{C}^L_{\tau + 1}(x^\text{LCP}_{\tau +1}) \geq  C^L_{\tau + 1}(X^L_{\tau +1})$ and $x^\text{LCP}_{\tau +1} = x^L_{\tau + 1}$, we get equation~\eqref{eqn:lcp:cf:lcpcost:l}.
	
	\textbf{Case 2:\ } $x^\text{LCP}_{\tau} = x^\text{LCP}_{\tau +1}$. 
	Let $\hat{X} = (\hat{x}_1, \dots, \hat{x}_\tau, x^\LCP_{\tau+1})$ be an optimal solution for $C^L_{\tau +1}$ that ends in the state $x^\LCP_{\tau+1}$, so $C^L_{\tau+1}(\hat{X}) = \hat{C}^L_{\tau+1}(x^\LCP_{\tau+1})$. It holds:
	\begin{equation} \label{eqn:lcp:cf:lcpcost:c3}
	\hat{C}^L_{\tau+1}(x^\LCP_{\tau +1}) = \hat{C}^L_\tau(\hat{x}_\tau) + f_{\tau+1}(x^\LCP_{\tau +1}) + \beta (x^\LCP_{\tau + 1} - \hat{x}_\tau)^+.
	\end{equation}
	If $\hat{x}_\tau \geq x^\LCP_\tau$, then $\hat{C}^L_\tau(\hat{x}_\tau) \geq \hat{C}^L_\tau(x^\LCP_\tau) $ holds because $\hat{C}^L_\tau(x)$ is convex (Lemma~\ref{lemma:lcp:cf:convex}) with a minimum at $x^L_\tau \leq x^\text{LCP}_\tau$. If $\hat{x}_\tau < x^\LCP_\tau$, then similar to case 1, we get
	\begin{align*}
	\hat{C}^L_\tau(x^\LCP_\tau) 
	&\alignstack{\text{\tiny L\ref{lemma:lcp:cf:hatcluconvert}}}{\,=\,} \hat{C}^U_\tau(x^\LCP_\tau) + \beta x^\LCP_\tau \\
	&\alignstack{\text{\tiny L\ref{lemma:lcp:cf:convex}}}{\,\leq\,} \hat{C}^U_\tau(\hat{x}_\tau) + \beta x^\LCP_\tau \\
	&\alignstack{\text{\tiny L\ref{lemma:lcp:cf:hatcluconvert}}}{\,=\,} \hat{C}^L_\tau(\hat{x}_\tau) - \beta \hat{x}_\tau + \beta x^\LCP_\tau \\
	&\,=\, \hat{C}^L_\tau(x^\LCP_\tau) + \beta (x^\LCP_{\tau+1} - \hat{x}_\tau)^+ .
	\end{align*}
	By using this in equation~\eqref{eqn:lcp:cf:lcpcost:c3}, we get
	\begin{equation*}
	\hat{C}^L_{\tau + 1}(x^\text{LCP}_{\tau +1}) \geq \hat{C}^L_\tau(x^\text{LCP}_\tau) + f_{\tau + 1}(x^\LCP_{\tau + 1})
	\end{equation*}
	which is exactly equation~\eqref{eqn:lcp:cf:lcpcost:l}.

	The proof of equation~\eqref{eqn:lcp:cf:lcpcost:u} works analogously by using the upper bound cost $\hat{C}^U_\tau$ and reversing the inequality signs. 

\end{proof}

We can use Lemma~\ref{lemma:lcp:cf:lcpcost} to estimate the operating cost of the LCP algorithm.

\begin{lemma} \label{lemma:lcp:cf:running}
	$R_T(X^\text{LCP}) \leq R_T(X^\ast) + \beta \sum_{t=1}^{T+1} |x^\ast_t - x^\ast_{t-1}|$
\end{lemma}

\begin{proof} 
	Consider the time interval $[t_i : t_{i+1}[ \in \mathcal{T}^+$. By adding the inequalities of Lemma~\ref{lemma:lcp:cf:lcpcost} for $\tau \in [t_i : t_{i+1}[$, we get
	\begin{equation*}
	\sum_{t = t_i}^{t_{i+1}-1} \hat{C}^L_t(x^\text{LCP}_t) + \sum_{t = t_i}^{t_{i+1}-1} f_{t + 1}(x^\text{LCP}_{t +1}) \leq \sum_{t = t_i}^{t_{i+1}-1} \hat{C}^L_{t + 1}(x^\text{LCP}_{t +1}).
	\end{equation*}
	Subtracting the first sum gives
	\begin{align*}
	\sum_{t = t_i}^{t_{i+1}-1} f_{t + 1}(x^\text{LCP}_{t +1}) &\leq \hat{C}^L_{t_{i + 1}}(x^\text{LCP}_{t_{i + 1}}) - \hat{C}^L_{t_{i}}(x^\text{LCP}_{t_{i}}) \\ &
	=  \hat{C}^L_{t_{i + 1}}(x^\ast_{t_{i + 1}}) - \hat{C}^L_{t_{i}}(x^\ast_{t_{i}}) \\
	&= \sum_{t = t_i}^{t_{i+1}-1} f_{t + 1}(x^\ast_{t +1}) + \beta(x^\ast_{t_{i + 1}} - x^\ast_{t_{i}}) . \numberthis	\label{eqn:lcp:cf:running:a}
	\end{align*}
	The first equality holds because $x^\text{LCP}_{t_{i}} = x^\ast_{t_{i}}$ and $x^\text{LCP}_{t_{i+1}} = x^\ast_{t_{i+1}}$. 
	
	Considering the time interval $[t_i : t_{i+1}[ \in \mathcal{T}^-$ yields to the following inequality:
	\begin{align*}
	\sum_{t = t_i}^{t_{i+1}-1} f_{t + 1}(x^\text{LCP}_{t +1}) &\leq \hat{C}^U_{t_{i + 1}}(x^\text{LCP}_{t_{i + 1}}) - \hat{C}^U_{t_{i}}(x^\text{LCP}_{t_{i}}) \\ &
	=  \hat{C}^U_{t_{i + 1}}(x^\ast_{t_{i + 1}}) - \hat{C}^U_{t_{i}}(x^\ast_{t_{i}}) \\
	&= \sum_{t = t_i}^{t_{i+1}-1} f_{t + 1}(x^\ast_{t +1}) + \beta(x^\ast_{t_{i}} - x^\ast_{t_{i + 1}}). \numberthis
	\label{eqn:lcp:cf:running:b}
	\end{align*}
	In both~\eqref{eqn:lcp:cf:running:a} and~\eqref{eqn:lcp:cf:running:b} the factor after $\beta$ is positive, so we can write
	\begin{align*}
	\sum_{t = t_i}^{t_{i+1}-1} f_{t + 1}(x^\text{LCP}_{t +1}) &\leq \sum_{t = t_i}^{t_{i+1}-1} f_{t + 1}(x^\ast_{t +1}) + \beta|x^\ast_{t_{i+1}} - x^\ast_{t_{i}}| \\
	&= \sum_{t = t_i}^{t_{i+1}-1} f_{t + 1}(x^\ast_{t +1}) + \beta \sum_{t=t_i}^{t_{i+1} - 1}|x^\ast_{t+1} - x^\ast_{t}|.
	\end{align*}
	By adding all intervals in $\mathcal{T}^+ \cup \mathcal{T}^-$, we get
	\begin{equation*}
	\sum_{t = 1}^{T+1} f_{t}(x^\text{LCP}_{t}) \leq \sum_{t = 1}^{T+1} f_{t}(x^\ast_{t}) + \beta\sum_{t=1}^{T+1}|x^\ast_{t} - x^\ast_{t-1}|.
	\end{equation*} 
	This is equivalent to $R_T(X^\text{LCP}) \leq R_T(X^\ast) + \beta \sum_{t=1}^{T+1} |x^\ast_t - x^\ast_{t-1}|$ since $f_{T+1}(x^\text{LCP}_{T+1}) = f_{T+1}(x^\ast_{T+1}) = 0$. 
\end{proof}

The term $\beta \sum_{t=1}^{T+1} |x^\ast_t - x^\ast_{t-1}|$ in Lemma~\ref{lemma:lcp:cf:running} is equal to twice the switching cost of the optimal schedule:

\begin{lemma} \label{lemma:lcp:cf:running:formula}
	$\beta \sum_{t=1}^{T+1} |x^\ast_t - x^\ast_{t-1}| = 2 \cdot S^L_T(X^\ast)$.
\end{lemma}

\begin{proof}
	Since we start at $x_0 = 0$ and end at $x_{T+1} = 0$, the number of servers that are powered up is equal to the number of servers that are powered down, i.e.,
	\begin{equation*}
	\sum_{t=1}^{T+1} (x^\ast_t - x^\ast_{t-1})^+ = \sum_{t=1}^{T+1} (x^\ast_{t-1} - x^\ast_{t})^+ .
	\end{equation*}
	Thus,
	\begin{align*}
	\sum_{t=1}^{T+1} |x^\ast_t - x^\ast_{t-1}| &= \sum_{t=1}^{T+1} (x^\ast_t - x^\ast_{t-1})^+ + \sum_{t=1}^{T+1} (x^\ast_{t-1} - x^\ast_{t})^+ \\&
	= 2 \cdot \sum_{t=1}^{T+1} (x^\ast_t - x^\ast_{t-1})^+  .
	\end{align*}
	Since $(x^\ast_{T+1} - x^\ast_T)^+ = 0$, we get
	\begin{equation*}
	2 \cdot S^L_T(X^\ast) 
	= 2 \cdot \beta \sum_{t=1}^{T} (x^\ast_t - x^\ast_{t-1})^+ 
	= \beta \sum_{t=1}^{T+1} |x^\ast_t - x^\ast_{t-1}|. \qedhere
	\end{equation*}
\end{proof}

Now, we are able to show that LCP is 3-competitive.

\begin{theorem}
	The LCP algorithm is 3-competitive.
\end{theorem}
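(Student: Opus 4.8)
The plan is to assemble the bounds established in Lemmas~\ref{lemma:lcp:cf:switching}, \ref{lemma:lcp:cf:running}, and~\ref{lemma:lcp:cf:running:formula}; no new structural argument is required, since all of the real work has already been done. First I would observe that the objective~\eqref{eqn:model:cost}, evaluated for any schedule $X$, decomposes exactly as $R_T(X) + S^L_T(X)$. In particular, the cost of the LCP schedule equals $R_T(X^\text{LCP}) + S^L_T(X^\text{LCP})$ and the optimal cost equals $R_T(X^\ast) + S^L_T(X^\ast)$, where $X^\ast$ denotes the distinguished optimal solution satisfying the recurrence of Lemma~\ref{lemma:lcp:cf:optrec} that was fixed at the start of the discussion.

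Next I would bound the two summands of the LCP cost separately. For the switching part, Lemma~\ref{lemma:lcp:cf:switching} gives $S^L_T(X^\text{LCP}) \le S^L_T(X^\ast)$ directly. For the operating part, Lemma~\ref{lemma:lcp:cf:running} gives $R_T(X^\text{LCP}) \le R_T(X^\ast) + \beta\sum_{t=1}^T |x^\ast_t - x^\ast_{t-1}|$, and Lemma~\ref{lemma:lcp:cf:running:formula} replaces the final term by $2\,S^L_T(X^\ast)$. Adding these three estimates yields
\begin{align*}
R_T(X^\text{LCP}) + S^L_T(X^\text{LCP}) &\le R_T(X^\ast) + 2\,S^L_T(X^\ast) + S^L_T(X^\ast) \\
&= R_T(X^\ast) + 3\,S^L_T(X^\ast).
\end{align*}

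Finally, since every operating cost is non-negative we have $R_T(X^\ast) \ge 0$, hence
\begin{equation*}
R_T(X^\ast) + 3\,S^L_T(X^\ast) \le 3\,R_T(X^\ast) + 3\,S^L_T(X^\ast) = 3\,\bigl(R_T(X^\ast) + S^L_T(X^\ast)\bigr),
\end{equation*}
which is exactly $3$ times the optimal cost, establishing $3$-competitiveness. There is essentially no obstacle in this concluding step: the difficulty has all been absorbed into the monotonicity and cost-accounting lemmas above. The one point to keep straight is that the three lemmas must be applied to the \emph{same} optimal schedule $X^\ast$ — namely the one produced by Lemma~\ref{lemma:lcp:cf:optrec} — which is precisely why that particular optimal solution was fixed before Lemmas~\ref{lemma:lcp:cf:switching} and~\ref{lemma:lcp:cf:running} were stated.
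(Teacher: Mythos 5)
Your proposal is correct and follows exactly the paper's own argument: decompose the cost as $R_T + S^L_T$, bound the switching part by Lemma~\ref{lemma:lcp:cf:switching}, the operating part by Lemmas~\ref{lemma:lcp:cf:running} and~\ref{lemma:lcp:cf:running:formula}, and conclude $R_T(X^\ast) + 3S^L_T(X^\ast) \leq 3\,C^L_T(X^\ast)$. Your explicit final step (using $R_T(X^\ast)\geq 0$) and your remark about applying all lemmas to the same distinguished optimal schedule from Lemma~\ref{lemma:lcp:cf:optrec} are both accurate and, if anything, slightly cleaner than the paper's write-up.
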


\begin{proof}
	By using Lemmas~\ref{lemma:lcp:cf:switching}, \ref{lemma:lcp:cf:running} and~\ref{lemma:lcp:cf:running:formula}, we get
	\begin{align*}
	C^L_T(X^\text{LCP}) &\;=\; R_T(X^\text{LCP}) + S^L_T(X^\text{LCP}) \\
	&\alignstack{\substack{\text{\tiny L\ref{lemma:lcp:cf:switching}}\\ \text{\tiny L\ref{lemma:lcp:cf:running}}}}{\;\leq\;} R_T(X^\ast) + \beta \sum_{t=1}^{T+1} |x^\ast_t - x^\ast_{t-1}| + S^L_T(X^\ast) \\
	&\alignstack{\text{\tiny L\ref{lemma:lcp:cf:running:formula}}}{\;=\;} R_T(X^\ast) + 3 \cdot S^L_T(X^\ast) \\
	&\;\leq\; 3 \cdot C^L_T(X^\ast) . \qedhere
	\end{align*} 
\end{proof}

\section{A randomized offline algorithm}
\label{sec:random}
    \newcommand{\ceilstrong}[1]{\lceil #1 \rceil^\ast}
    \newcommand{\ceilstronglr}[1]{\left\lceil #1 \right\rceil^\ast}
  
  \newcommand{\barx}{\bar{x}}
  \newcommand{\barxt}{\bar{x}_t}
  \newcommand{\barxtLower}{\left\lfloor \bar{x}_t \right\rfloor}
  \newcommand{\barxtUpper}{\ceilstronglr{\bar{x}_t}}
  \newcommand{\barxPrev}{\bar{x}_{t-1}}
  \newcommand{\barxPrevLower}{\left\lfloor \bar{x}_{t-1} \right\rfloor}
  \newcommand{\barxPrevUpper}{\ceilstronglr{\bar{x}_{t-1}}}
  \newcommand{\barxBegin}{\bar{x}_{1}}
  \newcommand{\barxBeginLower}{\left\lfloor \bar{x}_{1} \right\rfloor}
  \newcommand{\barxBeginUpper}{\ceilstronglr{\bar{x}_{1}}}

In the last section, we presented a deterministic online algorithm for the dynamic data-center optimization problem that  achieves a competitive ratio of 3. This result can be improved by using randomization. In this section, we present a randomized online algorithm that is 2-competitive against an oblivious adversary. The basic idea is to use the algorithm of Bansal et al.~\cite{Bansal2015} to get a 2-competitive schedule for the continuous extension of the given problem instance. Then, we round the particular states of the schedule randomly to achieve an integral schedule. The expected total cost of the resulting schedule is at most twice as much as the cost of an oblivious adversary.

\subsection{Algorithm}
\label{sec:random:algo}

We consider the continuous extension $\bar{\mathcal{P}}$ of the original problem instance $\mathcal{P}$ as introduced in Section~\ref{sec:poly:correct} (see equation~\eqref{eqn:poly:correct:extension}). 
For this continuous problem, Bansal et al. \cite{Bansal2015} give a 2-competitive (deterministic) online algorithm. 
Let $\bar{X} = (\bar{x}_1, \dots, \bar{x}_T)$ be the schedule calculated by the algorithm of Bansal et al. We will convert this solution to an integral schedule $X = (x_1, \dots x_T)$.

To describe our algorithm we use the following notation.
In contrast to the usual definition of~$\lceil \cdot \rceil$, we define $\ceilstrong{x} \coloneqq \min \{n \in \mathbb{Z} \mid n > x\}$, i.e., for an integer $n' \in \mathbb{Z}$, we have $\ceilstrong{n'} = n' + 1$. Note that the definition remains the same for non-integral arguments.
The definition of $\lfloor \cdot \rfloor$
does not change, so the identity $\ceilstrong{x} = \lfloor x \rfloor + 1$ is always fulfilled.
Let $[x]_a^b \coloneqq \max\{a, \min\{b, x\}\}$ be the projection of $x$ into the interval $[a,b]$, let $\text{frac}(x) \coloneqq x - \lfloor x \rfloor$ denote the fractional part of~$x$ and let $\bar{x}'_{t-1} \coloneqq [\bar{x}_{t-1}]_{\lfloor \bar{x}_t \rfloor}^{\ceilstrong{\bar{x}_t}}$ be the projection of the previous state into the interval of the current state.

We distinguish between time slots where the  number of active servers increases and those where the number of active servers decreases. In the first case, we have $\bar{x}_{t-1} \leq \bar{x}_t$. If $x_{t-1}$ is already in the upper state $\ceilstrong{\bar{x}_t}$, we keep this state, so $x_t = \ceilstrong{\bar{x}_t}$. Otherwise, with probability $p^\uparrow_t \coloneqq \frac{\bar{x}_t - \bar{x}'_{t-1}}{1 - \text{frac}(\bar{x}'_{t-1})}$, we set $x_t$ to the upper state $\barxtUpper$ and with probability $1 - p^\uparrow_t$, we keep the lower state $\barxtLower$. 
The other case (i.e., $\bar{x}_{t-1} > \bar{x}_t$) is handled symmetrically. If $x_{t-1} = \lfloor \bar{x}_t \rfloor$, then we keep the state, i.e., $x_t = \lfloor \bar{x}_t \rfloor$, and otherwise with probability $p^\downarrow_t \coloneqq \frac{\bar{x}'_{t-1} - \bar{x}_t }{\text{frac}(\bar{x}'_{t-1})}$, we set $x_t$ to the lower state $\barxtLower$  and with probability $1 - p^\downarrow_t$, we keep the upper state $\barxtUpper$.
Obviously, $X$ is an integral schedule.

\subsection{Analysis}
\label{sec:random:analysis}

To show that the algorithm described above is 2-competitive against an oblivious adversary, we have to prove that the expected cost of our algorithm is at most twice the cost of an optimal offline solution. Let $C^Q(Y)$ denote the total cost of the schedule $Y$ for the problem instance $Q$, so we want to prove that 
\begin{equation} \label{eqn:random:analysis:competitive}
\mathbb{E}[C^\mathcal{P}(X)] \leq 2 \cdot C^\mathcal{P}(X^\ast) .
\end{equation}

Let $\bar{X}^\ast$ be an optimal offline solution for $\bar{P}$. By Lemma~\ref{lemma:poly:correct:rounding}, we know that this solution can be easily rounded to an integral solution $X^\ast$ without increasing the cost, i.e.,
\begin{equation} \label{eqn:random:analysis:cxopt}
C^{\bar{\mathcal{P}}}(\bar{X}^\ast) = C^\mathcal{P}(X^\ast) .
\end{equation}
Furthermore, we know that the algorithm of Bansal et al.\ is 2-competitive for the continuous setting, so we have
\begin{equation} \label{eqn:random:analysis:bansal}
C^{\bar{\mathcal{P}}}(\bar{X}) \leq 2 \cdot C^{\bar{\mathcal{P}}}(\bar{X}^\ast) .
\end{equation}
Thus, it is sufficient to show that $\mathbb{E}[C^\mathcal{P}(X)] = C^{\bar{\mathcal{P}}}(\bar{X})$.

The following lemma describes the probability that a value $\bar{x}_t$ is rounded up.

\begin{lemma} \label{lemma:random:analysis:prob}
	The probability that $x_t$ equals the upper state  $\ceilstrong{\bar{x}_t}$ of the fractional schedule is $\fpart(\barx_t)$. Formally, $\Pr[x_t = \ceilstrong{\bar{x}_t}] = \fpart(\barxt)$. 
\end{lemma}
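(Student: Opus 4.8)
The plan is to prove the lemma by induction on $t$, carrying along the full marginal distribution of the integral state. Because the rounding rule for $x_t$ always returns one of the two values $\lfloor \bar{x}_t \rfloor$ or $\lceil \bar{x}_t \rceil$, the induction hypothesis is exactly the statement of the lemma at the previous step, namely $\Pr[x_{t-1} = \lceil \bar{x}_{t-1} \rceil] = \fpart(\bar{x}_{t-1})$ and hence $\Pr[x_{t-1} = \lfloor \bar{x}_{t-1}\rfloor] = 1 - \fpart(\bar{x}_{t-1})$. For the base case $t = 1$ I would use $x_0 = \bar{x}_0 = 0$: the projection gives $\bar{x}'_0 = \lfloor \bar{x}_1\rfloor$, so the algorithm applies $p^\uparrow_1 = \fpart(\bar{x}_1)$ and $\Pr[x_1 = \lceil \bar{x}_1\rceil] = \fpart(\bar{x}_1)$.

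The crucial intermediate step, and the one that makes the induction close, is to translate the hypothesis about $x_{t-1}$ relative to \emph{its own} interval into a statement about $x_{t-1}$ relative to the \emph{current} interval $[\lfloor \bar{x}_t\rfloor, \lceil \bar{x}_t\rceil]$. Concretely, I would first establish the invariant
$$\Pr[x_{t-1} \geq \lceil \bar{x}_t \rceil] = \bar{x}'_{t-1} - \lfloor \bar{x}_t\rfloor ,$$
where $\bar{x}'_{t-1} = [\bar{x}_{t-1}]_{\lfloor \bar{x}_t\rfloor}^{\lceil \bar{x}_t\rceil}$ is the projected previous state. This follows from the induction hypothesis together with the fact that $x_{t-1}\in\{\lfloor\bar x_{t-1}\rfloor,\lceil\bar x_{t-1}\rceil\}$, by checking how these two candidate values sit relative to $\lceil \bar{x}_t\rceil$: when $\bar{x}_{t-1}$ lies in the same unit interval as $\bar{x}_t$ the projection leaves it fixed and the right-hand side equals $\fpart(\bar{x}_{t-1})$, whereas when $\bar{x}_{t-1}$ lies in a strictly higher (resp.\ lower) interval both candidate values are $\geq \lceil \bar x_t\rceil$ (resp.\ $\leq \lfloor \bar x_t\rfloor$) and the projection collapses to the upper (resp.\ lower) endpoint, making the right-hand side $1$ (resp.\ $0$).

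Granting this invariant, I would finish with the two cases of the algorithm. For an increase ($\bar{x}_{t-1} \leq \bar{x}_t$) one always has $x_{t-1} \leq \lceil \bar{x}_t\rceil$, so conditioning on whether $x_{t-1} = \lceil \bar{x}_t\rceil$ (probability $\bar{x}'_{t-1} - \lfloor \bar{x}_t\rfloor = \fpart(\bar{x}'_{t-1})$ in this regime) or $x_{t-1} \leq \lfloor \bar{x}_t\rfloor$ gives
$$\Pr[x_t = \lceil \bar{x}_t\rceil] = \fpart(\bar{x}'_{t-1}) + \bigl(1 - \fpart(\bar{x}'_{t-1})\bigr)\, p^\uparrow_t = \fpart(\bar{x}'_{t-1}) + (\bar{x}_t - \bar{x}'_{t-1}) = \fpart(\bar{x}_t).$$
For a decrease ($\bar{x}_{t-1} > \bar{x}_t$) one symmetrically has $x_{t-1} \geq \lfloor \bar{x}_t\rfloor$, and conditioning on whether $x_{t-1} = \lfloor \bar{x}_t\rfloor$ or $x_{t-1} \geq \lceil \bar{x}_t\rceil$ (probability $\bar{x}'_{t-1} - \lfloor \bar x_t\rfloor$) yields $\Pr[x_t = \lceil \bar{x}_t\rceil] = (\bar{x}'_{t-1} - \lfloor \bar{x}_t\rfloor)(1 - p^\downarrow_t)$, which collapses to $\fpart(\bar{x}_t)$ after substituting $p^\downarrow_t$.

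I expect the main obstacle to be the boundary case in the decreasing regime where $\bar{x}_{t-1} \geq \lceil \bar{x}_t\rceil$, so that the projection lands exactly on the integer $\bar{x}'_{t-1} = \lceil \bar{x}_t\rceil$ and the denominator $\fpart(\bar{x}'_{t-1})$ in $p^\downarrow_t$ vanishes. The clean resolution is to read $\fpart(\bar{x}'_{t-1})$ throughout as the position $\bar{x}'_{t-1} - \lfloor \bar{x}_t\rfloor$ of the projected state \emph{within the current interval}, which equals $1$, not $0$, at the upper endpoint; equivalently, one treats this as a separate case in which $x_{t-1} \geq \lceil \bar{x}_t\rceil$ holds deterministically, so the step reduces to plain independent rounding of $\bar{x}_t$ and gives $\Pr[x_t = \lceil \bar{x}_t\rceil] = \fpart(\bar{x}_t)$ directly. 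Verifying that this convention is consistent with the invariant above is the only delicate point; the remaining algebra in both cases is routine.
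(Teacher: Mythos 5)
Your proposal is correct and follows essentially the same route as the paper: induction on $t$ with base case $t=1$, the law of total probability conditioned on $x_{t-1}$, and a case split on increasing versus decreasing steps and on where $\bar{x}_{t-1}$ sits relative to the interval $[\lfloor \bar{x}_t\rfloor, \lceil \bar{x}_t\rceil]$. Your explicit invariant $\Pr[x_{t-1} \geq \lceil \bar{x}_t\rceil] = \bar{x}'_{t-1} - \lfloor \bar{x}_t\rfloor$ is just a cleaner packaging of the paper's case analysis, and your resolution of the degenerate boundary $\bar{x}'_{t-1} = \lceil \bar{x}_t\rceil$ in the decreasing regime (where $p^\downarrow_t$ would be $0/0$) is a legitimate fix for a point the paper hides behind ``works analogously.''
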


\begin{proof}
	We prove the lemma by induction. It is clear that $x_t$ is either $\barxtLower$ or $\barxtUpper$. For $t=1$ the probability for $x_1 = \barxBeginUpper$ is $ p^\uparrow_1 = \frac{\barx_1 - \barx'_0}{1 - \fpart(\barx'_0)} = \barx_1 - \barxBeginLower = \fpart(\barx_1)$, because $\bar{x}_0 = 0$ and therefore $\bar{x}'_{0} = \barxBeginLower$.
	
	Assume that the claim of Lemma~\ref{lemma:random:analysis:prob} holds for $t-1$, so $\Pr[x_{t-1} = \ceilstrong{\bar{x}_{t-1}}] = \fpart(\barxPrev)$. We differ between increasing time slots where $\bar{x}_{t-1} \leq \bar{x}_t$ holds (case 1) and decreasing time slots where $\bar{x}_{t-1} > \bar{x}_t$ (case 2). In case 1, the probability $\Pr[x_t = \ceilstrong{\bar{x}_t}]$ can be written as
	\begin{equation} \label{eqn:random:analysis:prob:pr}
	\begin{aligned}
	\Pr[x_t = \barxtUpper] 
	&= \Pr[x_t = \barxtUpper \mid x_{t-1} = \barxtUpper]
	\cdot \Pr[x_{t-1} = \barxtUpper] \\
	&+ \Pr[x_t = \barxtUpper \mid x_{t-1} \leq \barxtLower] 
	\cdot \Pr[x_{t-1} \leq \barxtLower] .
	\end{aligned}
	\end{equation}
	Note that $x_{t-1}$ is integral and cannot be greater than $\barxtUpper$. If $\barxPrev \leq \barxtLower$, then $\Pr[x_{t-1} = \barxtUpper] = 0$ and $\Pr[x_{t-1} \leq \barxtLower] = 1$, so similar to the base case we get
	\begin{align*}
	\Pr[x_t = \barxtUpper]  &= \Pr[x_t = \barxtUpper \mid x_{t-1} \leq \barxtLower]  \\
	&= p^\uparrow_t \\
	&= \barxt - \barxtLower \\
	&= \fpart (\barxt) .
	\end{align*}
	If $\barxPrev > \barxtLower$, then by our induction hypothesis $\Pr[x_{t-1} = \barxtUpper] = \fpart(\barxPrev)$ and $\Pr[x_{t-1} \leq \barxtLower]  =  \Pr[x_{t-1} = \barxtLower] = 1 - \fpart(\barxPrev)$. By the definition of our algorithm, $\Pr[x_t = \barxtUpper \mid x_{t-1} = \barxtUpper] = 1$, because we keep the state if we are already in the upper state. Furthermore, we get 
	\begin{align*}
	\Pr[x_t = \barxtUpper \mid x_{t-1} \leq \barxtLower] &= \Pr[x_t = \barxtUpper \mid x_{t-1} = \barxtLower] \\
	&= p^\uparrow_t \\
	&= \frac{\barxt - \barx'_{t-1}}{1 - \fpart(\barx'_{t-1})} \\
	&= \frac{\barxt - \barxPrev}{1 - \fpart(\barxPrev)} .
	\end{align*}
	
	By inserting this results into equation~\eqref{eqn:random:analysis:prob:pr}, we get
	\begin{align*}
	\Pr[x_t = \barxtUpper] &= 1 \cdot \fpart(\barxPrev) + \frac{\barxt - \barxPrev}{1 - \fpart(\barxPrev)} \cdot ( 1 - \fpart(\barxPrev)) \\
	&= \barxPrev - \barxPrevLower + \barxt - \barxPrev \\
	&= \barxt - \barxtLower \\
	&= \fpart (\barxt) .
	\end{align*}
	The third equation holds, because $\barxtLower < \barxPrev < \barxt$, so $\barxtLower = \barxPrevLower$.
	
	The second case $\barxPrev > \barxt$ works analogously. \qedhere

\end{proof}

The proof of $\mathbb{E}[C^\mathcal{P}(X)] = C^{\bar{\mathcal{P}}}(\bar{X})$ is divided into two parts. First, in the following lemma, we will show that the expected operating cost of our algorithm is equal to the operating cost of the algorithm of Bansal et al.\ for the continuous version of the problem instance. Then, in Lemma~\ref{lemma:random:analysis:switching}, we will show the same for the switching cost. Let $R^Q(Y)$ and $S^Q(Y)$ denote the operating and switching cost of the schedule $Y$ for the problem instance $Q$, respectively.

\begin{lemma}\label{lemma:random:analysis:operating}
	The expected operating cost of our algorithm is equal to the operating cost of the algorithm of Bansal et al.\ for the continuous extension of the problem instance, i.e., $\mathbb{E}[R^\mathcal{P}(X)] = R^{\bar{\mathcal{P}}}(\bar{X})$. 
\end{lemma}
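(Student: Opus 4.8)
The plan is to reduce the statement to a per-slot identity via linearity of expectation and then recognize that the two-point expectation of $f_t$ under the rounding coincides with the linear interpolation that defines $\bar f_t$. First I would write $\mathbb{E}[R_\mathcal{P}(X)] = \sum_{t=1}^T \mathbb{E}[f_t(x_t)]$, which is immediate since the operating cost is a sum of per-slot terms and expectation is linear; this isolates each time slot and lets me work with a single term $\mathbb{E}[f_t(x_t)]$ at a time.

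The key observation is that the rounded state $x_t$ is supported on exactly the two values $\barxtLower$ and $\barxtUpper$. By Lemma~\ref{lemma:random:analysis:prob} we have $\Pr[x_t = \barxtUpper] = \fpart(\barxt)$, and hence $\Pr[x_t = \barxtLower] = 1 - \fpart(\barxt)$. Therefore
\begin{equation*}
\mathbb{E}[f_t(x_t)] = \bigl(1 - \fpart(\barxt)\bigr) f_t(\barxtLower) + \fpart(\barxt)\, f_t(\barxtUpper).
\end{equation*}
I would then match this against the definition of the continuous extension in equation~\eqref{eqn:poly:correct:extension}. For non-integral $\barxt$ we have $\barxtUpper = \barxtLower + 1$, so $\lceil \barxt \rceil - \barxt = 1 - \fpart(\barxt)$ and $\barxt - \lfloor \barxt \rfloor = \fpart(\barxt)$; substituting these into~\eqref{eqn:poly:correct:extension} gives precisely $\bar f_t(\barxt) = \bigl(1 - \fpart(\barxt)\bigr) f_t(\barxtLower) + \fpart(\barxt)\, f_t(\barxtUpper) = \mathbb{E}[f_t(x_t)]$. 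When $\barxt$ is integral we have $\fpart(\barxt) = 0$, so $x_t = \barxt$ deterministically and both sides collapse to $f_t(\barxt) = \bar f_t(\barxt)$. Summing over all $t$ then yields $\mathbb{E}[R_\mathcal{P}(X)] = \sum_{t=1}^T \bar f_t(\barxt) = R_{\bar{\mathcal{P}}}(\bar{X})$, as claimed.

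I do not anticipate a genuine obstacle here: the whole argument is a slot-wise identity between a two-point expectation and a linear interpolation, and the only care needed is the cosmetic separation of the integral and non-integral cases so that the relation $\barxtUpper = \barxtLower + 1$ may be used to rewrite the interpolation weights in terms of $\fpart(\barxt)$. The substantive part of the randomized-rounding analysis — namely that the coupling across consecutive slots also preserves the switching cost in expectation — is orthogonal to this lemma and is handled separately in Lemma~\ref{lemma:random:analysis:switching}; it plays no role in establishing the operating-cost equality.
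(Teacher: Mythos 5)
Your proposal is correct and matches the paper's own proof essentially verbatim: both expand $\mathbb{E}[R_\mathcal{P}(X)]$ slot-by-slot as the two-point expectation over $\lfloor \bar{x}_t \rfloor$ and $\lceil \bar{x}_t \rceil$, invoke Lemma~\ref{lemma:random:analysis:prob} for the probabilities, and identify the result with $\bar{f}_t(\bar{x}_t)$ via the definition of the continuous extension. Your explicit treatment of the integral case is a minor added detail the paper leaves implicit.
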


\begin{proof}
	The expected operating cost of our algorithm can be written as
	\begin{align*}
	\mathbb{E}[R^\mathcal{P}(X)] = \sum_{t=1}^{T}  \big( \Pr[x_t = \barxtLower] \cdot f_t(\barxtLower) 
	+ \Pr[x_t = \barxtUpper] \cdot f_t(\barxtUpper) \big) .
	\end{align*}
	By using Lemma~\ref{lemma:random:analysis:prob}, we get
	\begin{align*}
	\mathbb{E}[R^\mathcal{P}(X)] 
	&\alignstack{\text{\tiny L\ref{lemma:random:analysis:prob}}}{\,=\,} \sum_{t=1}^{T} \Big( \big(1 - \fpart (\barxt)\big) \cdot f_t(\barxtLower) + \fpart(\barxt)\cdot f_t(\barxtUpper) \Big) \\
	&\alignstack{\eqref{eqn:poly:correct:extension}}{\,=\,} \sum_{t=1}^{T} \bar{f}_t(\barxt) \\
	&\,=\, R^{\bar{\mathcal{P}}}(\bar{X}) .
	\end{align*}
	The second equality follows from the definition of the continuous extension of the operating cost functions (since $1 - \fpart (x) = \ceilstrong{x} - x$).
\end{proof}

Now, we will determine the expected switching cost of our algorithm for each time slot. 

\begin{lemma} \label{lemma:random:analysis:switching}
	The expected switching cost of our algorithm is equal to the switching cost of the continuous schedule, i.e., $\mathbb{E}[S^\mathcal{P}(X)] = S^{\bar{\mathcal{P}}}(\bar{X})$.
\end{lemma}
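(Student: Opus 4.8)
The plan is to exploit two facts about the randomized rounding: that each rounded state is \emph{unbiased}, $\mathbb{E}[x_t] = \barxt$, and that within a single step the integral schedule never moves against the direction of the fractional one. Together these let me linearize the positive-part operator $(\cdot)^+$ in the switching cost and reduce the statement to linearity of expectation, so that the correlation between $x_t$ and $x_{t-1}$ (the rounding of $x_t$ conditions on $x_{t-1}$) becomes harmless.

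First I would record that $\mathbb{E}[x_t] = \barxt$ for every $t$. Since $x_t$ takes only the two values $\barxtLower$ and $\barxtUpper$, Lemma~\ref{lemma:random:analysis:prob} gives
\[
\mathbb{E}[x_t] = (1 - \fpart(\barxt))\,\barxtLower + \fpart(\barxt)\,\barxtUpper = \barxt,
\]
using $\barxtUpper = \barxtLower + 1$ when $\barxt \notin \mathbb{N}$ and the trivial identity otherwise. Next I would establish the per-step monotonicity of the rounding: in an \emph{increasing} slot ($\barxPrev \leq \barxt$) we always have $x_t \geq x_{t-1}$, and in a \emph{decreasing} slot ($\barxPrev > \barxt$) we always have $x_t \leq x_{t-1}$, for every realization of the coin flips. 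This is a short case analysis on the algorithm. In an increasing slot, $x_{t-1} \in \{\barxPrevLower, \barxPrevUpper\}$ is integral and satisfies $x_{t-1} \leq \barxPrevUpper \leq \barxtUpper$; hence either $x_{t-1} = \barxtUpper$, in which case the "keep the current state" rule sets $x_t = \barxtUpper = x_{t-1}$, or $x_{t-1} \leq \barxtLower$, in which case $x_t \in \{\barxtLower, \barxtUpper\}$ is at least $\barxtLower \geq x_{t-1}$. Either way $x_t \geq x_{t-1}$, and the decreasing case is symmetric.

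Finally I would combine the two facts slot by slot, noting that the classification of each $t$ as increasing or decreasing is determined by the deterministic schedule $\bar{X}$. In a decreasing slot both $(x_t - x_{t-1})^+ = 0$ (by monotonicity) and $(\barxt - \barxPrev)^+ = 0$, so the contributions agree. In an increasing slot, monotonicity turns the positive part into an ordinary difference, and linearity of expectation together with $\mathbb{E}[x_t] = \barxt$ yields
\[
\mathbb{E}\bigl[(x_t - x_{t-1})^+\bigr] = \mathbb{E}[x_t] - \mathbb{E}[x_{t-1}] = \barxt - \barxPrev = (\barxt - \barxPrev)^+.
\]
Summing over all $t$ and multiplying by $\beta$ gives $\mathbb{E}[S_\mathcal{P}(X)] = S_{\bar{\mathcal{P}}}(\bar{X})$, as claimed.

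The main obstacle is the per-step monotonicity claim: without it the positive part cannot be linearized, and the dependence of $x_t$ on $x_{t-1}$ would block a direct computation. This monotonicity is exactly what the "keep the current state if it already matches" rules and the projection $\bar{x}'_{t-1}$ are designed to guarantee, so the only delicate point is checking all boundary cases (in particular $\barxt \in \mathbb{N}$, where $\barxtLower = \barxtUpper$ and the two subcases coincide trivially).
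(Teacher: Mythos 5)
Your proof is correct, but it is organized quite differently from the paper's. The paper computes $\mathbb{E}[(x_t-x_{t-1})^+]$ directly for each time slot by a three-way case analysis on where $\barxPrev$ lies ($\barxPrev < \barxtLower$, $\barxPrev \in [\barxtLower,\barxt]$, $\barxPrev > \barxt$), decomposing the expected number of powered-up servers into pieces and evaluating each with Lemma~\ref{lemma:random:analysis:prob} and the explicit conditional probability $p^\uparrow_t$. You instead isolate two structural facts --- unbiasedness $\mathbb{E}[x_t]=\barxt$ (an immediate consequence of Lemma~\ref{lemma:random:analysis:prob}) and the almost-sure per-step monotonicity of the rounding (a consequence only of the ``keep the state if it already matches'' rule and the integrality of $x_{t-1}$) --- and then dispose of the positive part by linearity of expectation, $\mathbb{E}[(x_t-x_{t-1})^+]=\mathbb{E}[x_t]-\mathbb{E}[x_{t-1}]=(\barxt-\barxPrev)^+$ in increasing slots and $0=0$ in decreasing ones. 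Your monotonicity argument is sound: in an increasing slot every realization has $x_{t-1}\le \barxPrevUpper \le \barxtUpper$, and if $x_{t-1}\ne\barxtUpper$ then integrality forces $x_{t-1}\le\barxtUpper-1\le\barxtLower\le x_t$. The payoff of your route is that the correlation between $x_t$ and $x_{t-1}$ and the exact value of $p^\uparrow_t$ become irrelevant to this lemma (they are only needed, via Lemma~\ref{lemma:random:analysis:prob}, to get the marginals right), which makes the argument shorter and less error-prone at the boundary cases; the paper's route is more explicit about where the specific choice of $p^\uparrow_t$ and the projection $\bar{x}'_{t-1}$ enter, at the cost of more bookkeeping.
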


\begin{proof}
	We calculate the switching cost for each time slot separately.	We distinguish between the cases (1) $\barxPrev < \barxtLower$, (2) $\barxPrev \in [\barxtLower, \barxt]$ and (3) $\barxPrev > \barxt$. The last case is trivial, because no servers are powered up, so there is no switching cost.
		
	In case 1, we can separate the expected switching cost into three parts: The expected cost for powering up from $\barxPrev$ to $\barxPrevUpper$, the cost from $\barxPrevUpper$ to $\barxtLower$ (can be zero) and the expected cost from $\barxtLower$ to $\barxt$. The expected number of servers powered up is
	\begin{align*}
	\phantom{{}={}} \mathbb{E}[(x_t - x_{t-1})^+] 
	&\,=\, \Pr[x_{t-1} = \barxPrevLower] + \left(\barxtLower - \barxPrevUpper \right) + \Pr[x_t = \barxtUpper] \\
	&\alignstack{\text{\tiny L\ref{lemma:random:analysis:prob}}}{\,=\,} 1 - \fpart (\barxPrev) + \barxtLower - \barxPrevUpper + \fpart(\barxt)  \\
	&\,=\, (\barxt - \barxPrev)^+ .
	\end{align*}
	The second equation uses Lemma~\ref{lemma:random:analysis:prob} and the third equation follows from the definition of $\fpart$ and the identity $\ceilstrong{x} = \lfloor x \rfloor + 1$. 
	
	For case 2, let $l \coloneqq \barxtLower$ be the lower and $u \coloneqq \barxtUpper$ the upper state of the fractional state $\barxt$. Since $\barxPrev \in [\barxtLower, \barxt]$ holds, we only switch the state, if we are in the lower state during time slot $t-1$ and in the upper state during time slot $t$. Thus, the expected number of servers powered up is
	\begin{align*}
	\mathbb{E}[(x_t - x_{t-1})^+] &= \Pr [x_{t-1} = l] \cdot \Pr[x_t = u \mid x_{t-1} = l]  .
	\end{align*}
	By Lemma~\ref{lemma:random:analysis:prob}, we know $\Pr [x_{t-1} = l] = 1 - \fpart (\barxPrev)$. Furthermore, by the definition of our algorithm, we have $\Pr[x_t = u \mid x_{t-1} = l] = p^\uparrow_t$, so we get
	\begin{align*}
	\mathbb{E}[(x_t - x_{t-1})^+] &= (1 - \fpart(\barxPrev)) \cdot \frac{\barxt - \barxPrev}{1 - \fpart(\barxPrev)} \\
	&=  (\barxt - \barxPrev)^+ .
	\end{align*}
	
	So for all cases, $\mathbb{E}[\beta (x_t - x_{t-1})^+] =  \beta (\barxt - \barxPrev)^+$ holds. By summing over all time slots, we get $\mathbb{E}[S^\mathcal{P}(X)] = S^{\bar{\mathcal{P}}}(\bar{X})$.
\end{proof}

\begin{theorem}
	The algorithm described in Section~\ref{sec:random:algo} is 2-competitive against an oblivious adversary.
\end{theorem}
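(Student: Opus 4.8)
The plan is to assemble the theorem directly from the three building blocks already in place: the two expected-cost identities and the competitiveness of the continuous algorithm. First I would split the expected total cost of the integral schedule into its operating and switching parts by linearity of expectation, $\mathbb{E}[C_\mathcal{P}(X)] = \mathbb{E}[R_\mathcal{P}(X)] + \mathbb{E}[S_\mathcal{P}(X)]$, and then invoke Lemma~\ref{lemma:random:analysis:operating} and Lemma~\ref{lemma:random:analysis:switching} to replace each term by the corresponding cost of the fractional schedule $\bar{X}$, giving $\mathbb{E}[C_\mathcal{P}(X)] = R_{\bar{\mathcal{P}}}(\bar{X}) + S_{\bar{\mathcal{P}}}(\bar{X}) = C_{\bar{\mathcal{P}}}(\bar{X})$. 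This is exactly the identity flagged as sufficient in the preamble of Section~\ref{sec:random:analysis}.

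Next I would bound $C_{\bar{\mathcal{P}}}(\bar{X})$ against the discrete optimum. Chaining the 2-competitiveness of the Bansal et al.\ algorithm, equation~\eqref{eqn:random:analysis:bansal}, with the rounding identity~\eqref{eqn:random:analysis:cxopt} yields
\begin{equation*}
\mathbb{E}[C_\mathcal{P}(X)] = C_{\bar{\mathcal{P}}}(\bar{X}) \leq 2\,C_{\bar{\mathcal{P}}}(\bar{X}^\ast) = 2\,C_\mathcal{P}(X^\ast),
\end{equation*}
which is precisely the target inequality~\eqref{eqn:random:analysis:competitive}. The identity $C_{\bar{\mathcal{P}}}(\bar{X}^\ast) = C_\mathcal{P}(X^\ast)$ itself rests on Lemma~\ref{lemma:poly:correct:rounding}: an optimal fractional schedule for $\bar{\mathcal{P}}$ can be rounded (by flooring or ceiling) to an integral schedule of equal cost, and since $\bar{f}_t$ agrees with $f_t$ on integer arguments, this integral schedule is a feasible discrete solution; conversely every discrete solution is feasible for $\bar{\mathcal{P}}$, so the two optima coincide.

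There is essentially no obstacle left at this level: the theorem is a corollary, and all the genuine work has already been discharged in the supporting lemmas. The only points requiring care are bookkeeping ones, namely that $X^\ast$ in~\eqref{eqn:random:analysis:competitive} denotes the optimal \emph{discrete} schedule (the correct benchmark for an oblivious adversary facing a discrete online algorithm), and that the randomness enters solely through the rounding step, so that linearity of expectation applies term by term. The hardest part of the whole argument lives earlier, in the switching-cost identity of Lemma~\ref{lemma:random:analysis:switching}, where the rounding probabilities $p^\uparrow_t$ and $p^\downarrow_t$ were calibrated precisely so that the expected number of servers powered up equals the fractional movement $(\bar{x}_t - \bar{x}_{t-1})^+$; once that is granted, the present statement follows in two lines.
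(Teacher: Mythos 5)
Your proposal is correct and follows exactly the same route as the paper's proof: split $\mathbb{E}[C_\mathcal{P}(X)]$ by linearity of expectation, apply Lemmas~\ref{lemma:random:analysis:operating} and~\ref{lemma:random:analysis:switching} to obtain $C_{\bar{\mathcal{P}}}(\bar{X})$, then chain inequality~\eqref{eqn:random:analysis:bansal} with identity~\eqref{eqn:random:analysis:cxopt}. The supporting remarks about the discrete benchmark and Lemma~\ref{lemma:poly:correct:rounding} match the paper's setup as well.
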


\begin{proof}
	We have to show that $\mathbb{E}[C^\mathcal{P}(X)] \leq 2 \cdot C^{\mathcal{P}}(X^\ast)$. By using Lemmas~\ref{lemma:random:analysis:operating} and~\ref{lemma:random:analysis:switching} as well as equations~\eqref{eqn:random:analysis:cxopt} and~\eqref {eqn:random:analysis:bansal}, we get
	\begin{align*}
	\mathbb{E}[C^\mathcal{P}(X)] &= \mathbb{E}[R^\mathcal{P}(X)] + \mathbb{E}[S^\mathcal{P}(X)] \\
	&\alignstack{\substack{\text{\tiny L\ref{lemma:random:analysis:operating}}\\ \text{\tiny L\ref{lemma:random:analysis:switching}}}}{=} R^{\bar{\mathcal{P}}}(\bar{X}) + S^{\bar{\mathcal{P}}}(\bar{X}) \\
	&= C^{\bar{\mathcal{P}}}(\bar{X}) \\
	&\alignstack{\eqref{eqn:random:analysis:bansal}}{\leq} 2 \cdot C^{\bar{\mathcal{P}}}(\bar{X}^\ast) \\
	&\alignstack{\eqref{eqn:random:analysis:cxopt}}{=} 2 \cdot C^{\mathcal{P}}(X^\ast) . \qedhere
	\end{align*}
\end{proof}

\section{Lower bounds}
\label{sec:lower}
In this section, we will show lower bounds for both the discrete and continuous data-center optimization problem. First, in Section~\ref{sec:lower:deterministic} we prove that there is no deterministic online algorithm that achieves a competitive ratio better than 3 for the discrete setting. This lower bound demonstrates that the LCP algorithm analyzed in Section~\ref{sec:lcp} is optimal. Afterwards, we show that this lower bound also holds for the restricted model introduced by Lin et al.~\cite{LinWierman2011infocom} where the operating cost functions are more restricted than in the general model investigated in Section~\ref{sec:lcp}. 
A formal definition of the restricted model is given in Section~\ref{sec:lower:deterministic:lin}. Moreover, we give a lower bound for the continuous setting and show that this lower bound holds again for the restricted model (see Section~\ref{sec:lower:cont}). 
A lower bound of~2 for the general continuous setting was independently shown by Antoniadis~et~al.~\cite{Antoniadis2017}. 
Based on our result for the continuous setting, we show in Section~\ref{sec:lower:random} that there is no randomized algorithm that achieves a competitive ratio better than 2 in the discrete setting. Again, this lower bound still holds for the restricted model.
Finally, in Section~\ref{sec:lower:prediction} we extend our lower bounds to the scenario that an online algorithm has a finite prediction window.

To simplify the analysis, the switching costs are paid for both powering up and powering down. At the end of the workload all servers have to be powered down. This ensures that the total cost remains the same. We will set $\beta = 2$, so changing a server's state will cost $\beta / 2 = 1$. Thus, the cost of a schedule is defined by
\begin{equation*}
C(X) \coloneqq \sum_{t=1}^{T} f_t(x_t) + \sum_{t=1}^{T+1} |x_{t}  - x_{t-1}|
\end{equation*}
with $x_0 \coloneqq x_{T+1} \coloneqq 0$. 

\subsection{Discrete setting, deterministic algorithms}
\label{sec:lower:deterministic}

First, we analyze the discrete setting for deterministic online algorithms. We begin with the general model and afterwards show in Section~\ref{sec:lower:deterministic:lin} how our construction can be adapted to the restricted model.

\subsubsection{General model}
\label{sec:lower:deterministic:general}

\begin{theorem} \label{theo:lower:disc:three}
	There is no deterministic online algorithm that achieves a competitive ratio of $c < 3$ for the discrete data-center optimization problem. 
\end{theorem}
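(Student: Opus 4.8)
The plan is to establish the bound already on instances with a single server, $m=1$, so the state space is just $\{0,1\}$ and, under the normalization $\beta = 2$ fixed at the start of Section~\ref{sec:lower}, every change of that one server costs exactly $1$. This makes the instance a two-point uniform metrical task system, whose deterministic competitive ratio is classically $3$; I would give a self-contained \emph{cruel} adversary realizing it. Fix an arbitrary deterministic online algorithm and let $x^A_0, x^A_1, \dots$ be the states it occupies (with $x^A_0 = 0$). Having seen $x^A_{t-1}$, the adversary presents at time $t$ the (trivially convex) operating cost $f_t$ with $f_t(x^A_{t-1}) = \delta$ and $f_t(1 - x^A_{t-1}) = 0$ for a small constant $\delta > 0$, and runs this for a long horizon of $T$ steps. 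Two quantities drive the analysis: the total charge $P \coloneqq \delta T$ placed on the states (split as $P_0$ on state $0$ and $P_1$ on state $1$), and the number $M$ of times the algorithm switches state.

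Next I would compute both sides. Whenever the algorithm stays it pays the charge $\delta$, and whenever it switches it pays $1$ but dodges that step's charge; hence its operating cost is $P - \delta M$, its switching cost is $M$, and its total cost is $P + (1-\delta)M$ up to the $O(1)$ boundary cost of returning to $0$. To bound the optimum from above I would exhibit three offline schedules. Staying in state $0$ (resp.\ $1$) for the whole horizon costs $P_0$ (resp.\ $P_1 + O(1)$), since a fixed state only pays the charges placed on it; averaging gives $\mathrm{OPT} \le \tfrac12 P + O(1)$. The \emph{anti-algorithm} schedule, occupying $1 - x^A_{t-1}$ at every step, pays no operating cost and changes state only when the algorithm does, so $\mathrm{OPT} \le M + O(1)$. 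Hence $\mathrm{OPT} \le \min\{\tfrac12 P, M\} + O(1)$.

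Finally I would combine these by a case distinction on whether $M \le \tfrac12 P$. If $M \le \tfrac12 P$, then $P \ge 2M$, so the algorithm pays at least $2M + (1-\delta)M = (3-\delta)M$ while $\mathrm{OPT} \le M + O(1)$. If instead $M > \tfrac12 P$, then the algorithm pays more than $P + (1-\delta)\tfrac12 P = \tfrac{3-\delta}{2}P$ while $\mathrm{OPT} \le \tfrac12 P + O(1)$. In both regimes the ratio is at least $3 - \delta - o(1)$, where the $o(1)$ absorbs the additive $O(1)$ terms and vanishes once the horizon (hence $P$, and in the balanced regime $M$) is taken large; if $M$ stays bounded the ratio is in fact unbounded. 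Letting $\delta \to 0$ pushes the ratio to $3$, so for every $c < 3$ some instance of this family forces ratio $> c$, proving the theorem.

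The main obstacle is the upper bound on $\mathrm{OPT}$: one must keep all three comparison schedules simultaneously cheap and correctly track the small discrepancy $\delta M$ arising because the algorithm dodges a charge on exactly the steps where it switches, then check that every additive boundary term is dominated after scaling. The two regimes meet precisely at $M = \tfrac12 P$, and it is this balance that pins the constant to $3$ rather than anything smaller; together with the integrality of the state (a fractional server could hedge between $0$ and $1$ and bring the bound down to $2$, as in the continuous setting) it is the crux of the construction.
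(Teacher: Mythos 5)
Your proposal is correct and takes essentially the same route as the paper's proof: the identical cruel adversary on a one-server instance (charge $\epsilon$ at the algorithm's current state, $0$ at the other), the same cost accounting $(T-S)\epsilon + S$ for the algorithm, the same two offline comparison schedules (majority fixed state and the anti-algorithm schedule) yielding $C(X^\ast) \le \min\{T\epsilon/2,\, S\} + O(1)$, and the same case split at $S = T\epsilon/2$ followed by letting $\epsilon \to 0$ and $T \to \infty$. The only cosmetic difference is that you phrase the construction in metrical-task-system language and track the $\delta M$ dodged charges explicitly, which the paper folds into $(T-S)\epsilon$.
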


\begin{proof}
	Assume that there is a deterministic algorithm $\mathcal{A}$ that is $(3-\delta)$-competitive with $\delta > 0$. The adversary will use the functions $\varphi_0(x) = \epsilon|x|$ and $\varphi_1(x) = \epsilon|x-1|$ with $\epsilon \rightarrow 0$, so we only need the states $0$ and $1$, there is no benefit to use other states. If $\mathcal{A}$ is in state $0$ or $1$, the adversary will send $\varphi_1$ or $\varphi_0$, respectively.
	
	Let $S$ be the number of time slots where algorithm $\mathcal{A}$ changes the state of a server, i.e., $S$ is the switching cost of $\mathcal{A}$. Let $T$ be length of the whole workload (we will define $T$ later), so for $T-S$ time slots the operating costs of $\mathcal{A}$ are $\epsilon$. Thus, the total cost of $\mathcal{A}$ is
	\begin{equation*}
	C(\mathcal{A}) = (T-S)\epsilon + S .
	\end{equation*}
	
	The cost of the optimal offline schedule can be bounded by the minimum of the following two strategies. The first strategy is to stay at one state for the whole workload. If $\varphi_0$ is sent more often than $\varphi_1$, then this is state 0, else it is state 1. The operating cost is at most $T\epsilon /2$, the switching cost is at most $2$, because if we use state 1, we have to switch the state at the beginning and end of the workload. The second strategy is to always switch the state, such that there is no operating cost. In this case the switching cost is at most $S + 2$, because we switch the state after each time $\mathcal{A}$ switches its state as well as possibly at the beginning and the end of the workload.
	Thus, the cost of the optimal offline schedule is 
	\begin{equation} \label{eqn:lower:disc:three:opt}
	C(X^\ast) \leq \min (T\epsilon/2 + 2, S + 2) .
	\end{equation}
	
	We want to find a lower bound for the competitive ratio $\frac{C(\mathcal{A})}{C(X^\ast)}$. We distinguish between $S \geq T\epsilon / 2$ (case 1) and $S < T\epsilon / 2$ (case 2).
	
	In \textbf{case 1} the competitive ratio of $\mathcal{A}$ is 
	\begin{alignat*}{2}
	&\frac{C(\mathcal{A})}{C(X^\ast)} 
	&{}\alignstack{\eqref{eqn:lower:disc:three:opt}}{\,\geq\,}{}& \frac{(T-S)\epsilon + S}{T\epsilon/2 + 2} 
	{}={} 2 + \frac{S(1-\epsilon) - 4}{T\epsilon / 2 + 2} \\
	&&{}\,\geq\,{}& 2 + \frac{(T\epsilon / 2)(1 - \epsilon) - 4}{T\epsilon / 2 + 2} 
	{}={} 2 + (1-\epsilon) - \frac{2(1-\epsilon) + 4}{T\epsilon / 2 + 2} .
	\end{alignat*}
	The last inequality uses $S \geq T\epsilon / 2$ that holds for case~1. By setting $T \geq \frac{1}{\epsilon^2}$, we get $\lim_{\epsilon \rightarrow 0} T \epsilon = \infty$ and thus
	$\lim_{\epsilon \rightarrow 0} \frac{C(\mathcal{A})}{C(X^\ast)} = 3$.
	
	In \textbf{case 2}, we get
	\begin{alignat*}{2}
	&\frac{C(\mathcal{A})}{C(X^\ast)} 
	&{}\alignstack{\eqref{eqn:lower:disc:three:opt}}{\geq}{}& \frac{(T-S)\epsilon + S}{S + 2}
	{}={} (1-\epsilon) + \frac{T\epsilon - 2(1 - \epsilon)}{S+2} \\
	&&{}\geq{}& (1-\epsilon) + \frac{T\epsilon - 2(1-\epsilon)}{T\epsilon/2+2} 
	{}={} 3 - \epsilon - \frac{2(1- \epsilon) + 4}{T\epsilon/2+2} .
	\end{alignat*}
	
	The last inequality uses $S > T\epsilon / 2$. Again, we set $T \geq \frac{1}{\epsilon^2}$ and get
	$\lim_{\epsilon \rightarrow 0} \frac{C(\mathcal{A})}{C(X^\ast)} = 3$.
	
	Therefore, there is no algorithm with a competitive ratio that is less than 3. We can set $T$ to an arbitrarily large value, so the total cost of $\mathcal{A}$ converges to infinity. 
\end{proof}

\subsubsection{Restricted model}
\label{sec:lower:deterministic:lin}

Lin et. al. \cite{LinWierman2011infocom} introduced a more restricted setting as described by equation~\eqref{eqn:model:lin}. In this section, we show that the lower bound of 3 still holds for this model.
The essential differences of the restricted model to the general model are: (1) There is only one convex function for the whole problem instance and (2) there is the additional condition that $x_t \geq \lambda_t$. The different definition of the switching cost does not influence the total cost as already mentioned at the 
beginning of Section~\ref{sec:lower}.

\begin{theorem} \label{theo:lower:disc:lin}
	There is no deterministic online algorithm for the discrete setting of the restricted model with a competitive ratio of $c < 3$. 
\end{theorem}

\begin{proof}
	The general model (examined in the previous sections) is denoted by $\mathcal{G}$ and the restricted model by Lin et al.\ is denoted by $\mathcal{L}$.  The states of the model $\mathcal{X} \in \{\mathcal{G}, \mathcal{L}\}$ are indicated by $x^\mathcal{X}_t$. We will use the same idea as in the proof of Theorem~\ref{theo:lower:disc:three}, but we have to modify it such that it fits for the restricted model.
	
	We use 2 servers, so the states are $x^\mathcal{L}_t \in \{0, 1, 2\}$. Instead of switching between the states $0$ and $1$ in $\mathcal{G}$, we will switch between $1$ and $2$ in $\mathcal{L}$, so for $t \in [T]$ we have $x^\mathcal{L}_t = x^\mathcal{G}_t + 1$. In $\mathcal{L}$ the state $0$ is only used at the beginning ($t=0$) of the workload. This leads to additional switching costs of 1 for both the optimal offline solution and the online algorithm. However, for a sufficiently long workload the total cost converges to infinity, so the constant extra cost does not influence the competitive ratio.
	
	We will apply the same adversary strategy used in the proof of Theorem~\ref{theo:lower:disc:three}. Let $f(z) \coloneqq  \epsilon|1 - 2z|$ with $\epsilon \rightarrow 0$, let $\beta = 2$. If the adversary in $\mathcal{G}$ sends $\varphi_0(x) = \epsilon|x|$ as function, then we will use $\lambda_t = l_0 \coloneqq 0.5$ which leads to operating cost of
	\begin{equation*}
	x^\mathcal{L}_t f \left(l_0/x^\mathcal{L}_t \right) =x^\mathcal{L}_t \cdot \epsilon \left|1 - \frac{1}{x^\mathcal{L}_t}\right| = \epsilon \left|x^\mathcal{L}_t - 1 \right| = \epsilon \left|x^\mathcal{G}_t \right| .
	\end{equation*}
	If the adversary sends $\varphi_1(x) = \epsilon |1 - x|$, then we will use $\lambda_t = l_1 \coloneqq 1$ which leads to operating cost of
	\begin{equation*}
	x^\mathcal{L}_t f \left(l_1/x^\mathcal{L}_t \right) = x^\mathcal{L}_t \cdot \epsilon \left|1  - \frac{2}{x^\mathcal{L}_t} \right| =\epsilon \left|x^\mathcal{L}_t - 2\right| = \epsilon \left| x^\mathcal{G}_t - 1\right| = \epsilon \left|1 - x^\mathcal{G}_t \right| .
	\end{equation*}
	Thus, the difference (1) between both models is solved.
	
	For $t \geq 1$ it is not allowed to use the state $x^\mathcal{L}_t = 0$, because both $l_0$ and $l_1$ are greater than 0. For $x^\mathcal{L}_t \in \{1,2\}$ the inequality $x_t \geq \lambda_t$ is always fulfilled, so the difference (2) is solved too.
\end{proof}

\subsection{Continuous setting}
\label{sec:lower:cont}

In this section, we determine a lower bound for the continuous data-center optimization problem. Again, we begin with the general model and analyze the restricted model afterwards in Section~\ref{sec:lower:cont:lin}.

\subsubsection{General model}
\label{sec:lower:cont:general}

\begin{theorem} \label{theo:lower:cont:two}
	There is no deterministic online algorithm for the continuous data-center optimization problem that achieves a competitive ratio that is less than 2.
\end{theorem}

The proof consists of two parts. 
First we will construct an algorithm $\mathcal{B}$ whose competitive ratio is greater than $2-\delta$ for an arbitrary small $\delta$.  
Then we will show that the competitive ratio of any deterministic algorithm that differs from $\mathcal{B}$ is  greater than 2. 

For the first part we use an adversary that uses $\varphi_0(x) = \epsilon|x|$ and $\varphi_1(x) = \epsilon|1-x|$ as functions where $\epsilon \rightarrow 0$.  Let $b_t$ be the state of $\mathcal{B}$ at time $t$. If the function $\varphi_0$ arrives, then the next state $b_{t+1}$ is $\max \{b_t - \epsilon/2, 0\}$. If $\varphi_1$ arrives, the next state is $b_{t+1} \coloneqq \min\{b_t + \epsilon / 2, 1\}$, so formally
\begin{equation*}
b_{t+1} \coloneqq \begin{cases}
\max \{b_t - \epsilon / 2, 0\} & \text{if $f_t = \varphi_0$} \\
\min \{b_t + \epsilon / 2, 1\} & \text{if $f_t = \varphi_1$} .
\end{cases}
\end{equation*}
The algorithm starts at $b_0 = 0$, so $b_t \in [0,1]$ is fulfilled for all $t$. Note that algorithm $\mathcal{B}$ is equivalent to the algorithm of Bansal et al.~\cite{Bansal2015} for the special case of $\varphi_0$ and $\varphi_1$ functions. 
To simplify the calculations we assume that $\epsilon^{-1}$ is an integer, so the algorithm $\mathcal{B}$ is able to use $2\epsilon + 1$ different states. Note that $\epsilon$ can be chosen arbitrarily, so this is not a restriction.

\begin{lemma} \label{lemma:lower:cont:blowerbound}
	The competitive ratio of $\mathcal{B}$ is at least $2-\delta$ for an arbitrary small $\delta > 0$, so $C(\mathcal{B}) \geq (2-\delta) \cdot C(X^\ast)$. 
\end{lemma}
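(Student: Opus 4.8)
The plan is to feed $\mathcal{B}$ a request sequence on which it is forced to pay, per unit of optimal cost, close to twice as much. I would let the adversary issue the two functions in long alternating blocks: $\varphi_1$ for $K := 2/\epsilon$ consecutive steps, then $\varphi_0$ for $K$ steps, repeating this cycle $N$ times so that $T = 2KN$. (The assumption $\epsilon^{-1}\in\mathbb{N}$ makes $K$ an integer, and $\mathcal{B}$ can occupy $2/\epsilon+1$ distinct states.) Since each block is exactly long enough for $\mathcal{B}$'s $\epsilon/2$-sized steps to cross the whole interval $[0,1]$, a $\varphi_1$-block drives $\mathcal{B}$ from $0$ up to $1$ and a $\varphi_0$-block drives it back to $0$; in particular $\mathcal{B}$ starts and ends every cycle at state $0$, so it agrees with the forced endpoints $b_0 = x_0 = x_{T+1} = 0$ and pays no boundary switching cost.

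The next step is to evaluate $C(\mathcal{B})$. Over one $\varphi_1$-block, $\mathcal{B}$ sweeps through the states $0,\epsilon/2,\epsilon,\dots,1$, so its switching cost is $\sum \epsilon/2 = 1$ (the total displacement), while its operating cost is
\[
\sum_{t=1}^{K}\varphi_1\!\left(t\tfrac{\epsilon}{2}\right)=\epsilon\sum_{t=1}^{K}\Bigl(1-t\tfrac{\epsilon}{2}\Bigr)=1-O(\epsilon),
\]
an arithmetic sum that tends to $1$ as $\epsilon\to 0$. A symmetric computation gives the same pair of values for a $\varphi_0$-block (the operating cost $\varphi_0(s)=\epsilon s$ summed along the descent). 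Hence one cycle costs $\mathcal{B}$ the amount $2\bigl(1+(1-O(\epsilon))\bigr)=4-O(\epsilon)$, and summing over the $N$ cycles yields $C(\mathcal{B})=(4-O(\epsilon))N$.

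To bound the optimum from above it suffices to exhibit one cheap feasible schedule. I would take the commit schedule that holds state $1$ throughout each $\varphi_1$-block and state $0$ throughout each $\varphi_0$-block. It pays no operating cost, because $\varphi_1(1)=\varphi_0(0)=0$, and only the $2N$ unit jumps between consecutive blocks, so $C(X^\ast)\le 2N$. Combining the two estimates,
\[
\frac{C(\mathcal{B})}{C(X^\ast)}\ge\frac{(4-O(\epsilon))N}{2N}=2-O(\epsilon),
\]
and choosing $\epsilon$ small enough that the error term falls below any prescribed $\delta>0$ (with $N$ as large as we like, so the absolute cost diverges) gives $C(\mathcal{B})\ge(2-\delta)\cdot C(X^\ast)$, as claimed.

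The conceptual crux, and the step I would verify most carefully, is the operating-cost computation for $\mathcal{B}$: because $\mathcal{B}$ drifts toward the current minimizer only at rate $\epsilon/2$, it spends essentially every step strictly between the two extreme states and therefore accumulates operating cost of the same order ($\to 1$ per block) as the switching cost it pays to traverse the interval ($=1$ per block). This built-in doubling is exactly what separates $\mathcal{B}$ from the commit schedule and where the factor $2$ originates. The only genuine care needed is in the block sums — an off-by-one in indexing $\mathcal{B}$'s visited states perturbs the operating cost only by an $O(\epsilon)$ boundary term, which is harmless — and in confirming that the endpoint conditions contribute nothing.
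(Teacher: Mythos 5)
Your proof is arithmetically sound and does establish the statement as literally written: exhibiting one request sequence on which $\mathcal{B}$ pays at least $(2-\delta)$ times the optimum is enough to lower-bound the competitive ratio of $\mathcal{B}$. Your route is genuinely different from the paper's. You fix a single oblivious input (alternating blocks of $2/\epsilon$ copies of $\varphi_1$ and $\varphi_0$), compute $\mathcal{B}$'s cost per block by an explicit arithmetic sum ($1$ of switching plus $1-\epsilon/2$ of operating cost, matching the paper's own sum $\sum_{i=1}^{2/\epsilon}\varphi_1(i\epsilon/2)=1-\epsilon/2$), and compare against the commit schedule. The paper instead proves the inequality for \emph{every} sequence of $\varphi_0$'s and $\varphi_1$'s that the adversary might emit, via a case analysis on whether $b_t$ first returns to $0$, first reaches $1$, or does neither, with the costs expressed in terms of the counts $S_0(T)$, $S_1(T)$ and a pairing of time slots. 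Your instance is exactly the sequence the adversary would generate when playing against $\mathcal{B}$ itself, i.e.\ a concatenation of the paper's Cases 1 and 2; this makes your argument shorter and more concrete.

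The caveat you should be aware of is how the lemma is consumed downstream. In the proof of Theorem~\ref{theo:lower:cont:two} the adversary's sequence is \emph{adaptive to an arbitrary algorithm} $\mathcal{A}$ (it chooses $\varphi_0$ or $\varphi_1$ depending on the sign of $a_t-b_t$), and the chain $C(\mathcal{A})\geq C(\mathcal{B})\geq(2-\delta)\,C(X^\ast)$ must hold on that \emph{same} sequence. Lemma~\ref{lemma:lower:cont:nobetter} supplies the first inequality on that sequence, and the paper's version of Lemma~\ref{lemma:lower:cont:blowerbound} supplies the second for any sequence the adversary can produce (this generality is also what Lemma~\ref{lemma:lower:cont:blowerbound:plus} builds on, e.g.\ its Case~3 where $b_t$ never reaches an endpoint). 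Your single-instance proof cannot be substituted into that chain without extending the block computation to arbitrary interleavings, which is precisely the extra work the paper's case analysis performs. So: correct for the lemma in isolation, but not strong enough to serve the role the lemma plays in the paper.
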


\begin{proof}
	Let $N_0(t)$ be the number of time slots $t' \leq t$ where $f_{t'} = \varphi_0$ and let $N_1(t)$ be the number of time slots where $f_{t'} = \varphi_1$. Note that $N_0(t) + N_1(t) = t$ for all~$t$. 
	
	Let $T > 0$ denote the first time slot, when $b_t$ reaches $0$ (case 1) or $1$ (case 2). Case 3 handles the case that there is no such time slot. 
	
	\textbf{Case 1:} If $b_T = 0$, then $N_0(T) = N_1(T)$. In each time step the algorithm $\mathcal{B}$ either increases or decreases its state by $\epsilon/2$, so the switching cost during the whole workload is $T\epsilon/2$. For each time slot $t$ with $f_t = \varphi_1$ there is exactly one unique corresponding time slot $t'$ with $f_{t'} = \varphi_0$ and $b_{t'} = b_{t} - \epsilon/2$. The operating costs for both time slots are
	\begin{align*}
	f_t(b_t) + f_{t'}(b_{t'})
	&= \epsilon |1-b_t| + \epsilon |b_{t'}|\\
	&= \epsilon (1- \epsilon/2)	.
	\end{align*}
	As $T$ must be even, the operating cost is $T/2 \cdot \epsilon (1- \epsilon/2)$. 
	
	To estimate the cost of an optimal solution, we consider the schedule that stays at $x = 0$ for the whole time. For this schedule, there is no switching cost and the operating cost is $\epsilon N_1(T) = \epsilon T / 2$. 
	Therefore, the cost of an optimal schedule is at most $C(X^\ast) \leq \epsilon T/2$. 
	
	
	The competitive ratio of $\mathcal{B}$ is 
	\begin{equation*}
	\frac{C(\mathcal{B})}{C(X^\ast)} \geq \frac{T\epsilon/2 + T/2 \cdot \epsilon (1- \epsilon/2)}{\epsilon T/2} = 2 - \epsilon/2 .
	\end{equation*}
	
	\textbf{Case 2:} If $b_T = 1$, then $N_1(T) = N_0(T) + 2/\epsilon$. The switching cost during the time interval is again $T\epsilon/2 = N_0(T)\epsilon+1$. For each time slot $t$ with  $f_t = \varphi_1$ there exists either one corresponding time slot~$t'$ with $f_{t'} = \varphi_0$ and $b_{t'} = b_{t} - \epsilon/2$ or for all $t' > t$ we have $b_{t'} \geq b_t$ . 
	Analogously to case 1, the operating costs of the corresponding pairs are $N_0(T) \cdot \epsilon (1-\epsilon/2)$. For each state $x \in \{\epsilon/2, 2\epsilon/2, \dots , 1\}$ there is exactly one time slot where $b_t$ has no corresponding time slot $t'$. This leads to operating costs of 
	\begin{align*}
	\sum_{i = 1}^{2/\epsilon} \varphi_1(i \cdot \epsilon/2) 
	&= \sum_{i = 1}^{2/\epsilon} \epsilon \cdot |1 - i \cdot \epsilon/2| \\
	&= \sum_{i' = 0}^{2/\epsilon - 1} \epsilon \cdot |i' \cdot \epsilon/2| \\
	&= \epsilon^2 / 2 \sum_{i' = 0}^{2/\epsilon - 1} i' \\
	&= \epsilon^2 / 2 \frac{(2/\epsilon-1) \cdot (2/\epsilon)}{2} \\
	&= 1 - \epsilon/2 .
	\end{align*}
	
	A schedule that switches directly to $x = 1$ at the beginning of the workload and stays there has a switching cost of $1$ and a operating cost of $\epsilon N_0(T)$. Therefore, the total cost of an optimal solution is at most $C(X^\ast) \leq 1 + \epsilon N_0(T)$, 
	%
	so the competitive ratio of $\mathcal{B}$ is:
	\begin{equation*}
	\frac{C(\mathcal{B})}{C(X^\ast)} \geq \frac{(\epsilon N_0(T)+1) + N_0(T) \cdot \epsilon (1-\epsilon/2) +  1 - \epsilon/2}{1 + \epsilon N_0(T)} = 2 - \epsilon/2 .
	\end{equation*}
	
	\textbf{Case 3:} It is possible that $b_t$ never reaches $0$ or $1$, for example if the adversary sends $\varphi_0$ and $\varphi_1$ alternately. Let $T$ be an arbitrary time slot.  The state of $\mathcal{B}$ is $b_T$, so $N_1(T) = N_0(T) + 2b_T/\epsilon$ holds. The switching cost of $\mathcal{B}$ is again $T\epsilon/2 = \epsilon N_1(T) - b_T$. Similar to case 2, there are corresponding pairs with operating costs of  $N_0(T) \cdot \epsilon (1-\epsilon/2) = (\epsilon N_1(T) - 2b_T)(1-\epsilon/2)$. For a lower bound, it is not necessary to consider the operating cost of the time slots without a corresponding partner, so $C(\mathcal{B}) \geq \epsilon N_1(T) - b_T + (\epsilon N_1(T) - 2b_T) (1-\epsilon/2)$. 
	
	A schedule that stays at $x = 0$ for the whole time has a total cost of $\epsilon N_1(T)$, so $C(X^\ast) \leq \epsilon N_1(T)$. 
	Thus, the competitive ratio is
	\begin{align*}
	\frac{C(\mathcal{B})}{C(X^\ast)} &\geq \frac{\epsilon N_1(T) - b_T + (\epsilon N_1(T) - 2b_T) (1-\epsilon/2)}{\epsilon N_1(T)} \\
	&= 2 - \epsilon/2 - \frac{b_T(3 - \epsilon/2)}{\epsilon N_1(T)} \\
	&\geq 2 - \epsilon/2 - \frac{6}{T} .
	\end{align*}
	The last inequality holds because $b_T < 1$ and $N_1(T) > T/2$. We set $T \geq 12 / \epsilon$ and get $\frac{C(\mathcal{B})}{C(X^\ast)} \geq 2 - \epsilon$. 
	
	We set $\epsilon \coloneqq \delta$, so the inequality $C(\mathcal{B}) \geq (2-\delta) \cdot C(X^\ast)$ is satisfied in all cases.
\end{proof}

Instead of ending at the states $0$ or $1$, we can extend the workload such that the competitive ratio is still at least 2, but the total cost of $\mathcal{B}$ converges to infinity. This leads to the following lemma which contains a stronger definition of the competitive ratio: 

\begin{lemma} \label{lemma:lower:cont:blowerbound:plus}
	For all $\delta > 0$ and $\alpha \geq 0$, there exists a workload such that 
	\begin{equation*}
	C(\mathcal{B}) \geq (2-\delta) \cdot C(X^\ast) + \alpha
	\end{equation*}  
	is fulfilled.
\end{lemma}

\begin{proof}
	We prove the lemma by extending the construction used in the proof of Lemma~\ref{lemma:lower:cont:blowerbound}.
	If $\mathcal{B}$ reaches the state $0$ (case 1), the situation is the same as at the beginning (i.e., $t=0$). We can repeat the argumentation of the proof by sending $\varphi_1$ as next function, which leads to a competitive ratio of $2-\delta$ for the new interval, so the overall competitive ratio is not reduced. If $\mathcal{B}$ reaches the state~$1$ (case 2), then we can use the same construction but the states and functions are switched, i.e., the next function is $\varphi_0$. This is possible, since both the algorithm $\mathcal{B}$ and the adversary strategy are symmetrical to $x = 0.5$. 
	
	Each workload extension (case 1 and 2) increases the total cost of $\mathcal{B}$ by at least $\epsilon$, because the adversary sends at least one $\varphi_0$ and one $\varphi_1$ function, so $\mathcal{B}$ switches its state at least two times incurring a switching cost of $2 \cdot \epsilon/2$. By repeating case~1 or~2, the total cost converges to infinity. 
	
	Case 3 already contains an arbitrarily long workload. Algorithm~$\mathcal{B}$ does not reach $0$ or $1$ in case~3 by definition, so the total cost of $\mathcal{B}$ converges to infinity.
	
	Therefore, for all $\alpha \geq 0$ there exists a workload such that $C(\mathcal{B}) \geq (2-\delta) \cdot C(X^\ast) + \alpha$ holds. 
\end{proof}

So far, we have shown that the competitive ratio of algorithm $\mathcal{B}$ is at least $2 - \delta$ for an arbitrary small $\delta > 0$. Now, in the second part of the proof of Theorem~\ref{theo:lower:cont:two}, we will show that any deterministic online algorithm that differs from $\mathcal{B}$ causes more cost than $\mathcal{B}$. Thus, $2$ is a lower bound for the competitive ratio of the continuous data-center optimization problem.

\begin{lemma} \label{lemma:lower:cont:nobetter}
	Any deterministic online algorithm $\mathcal{A}$ that  differs from the states of $\mathcal{B}$ produces more cost than $\mathcal{B}$, so $C(\mathcal{A}) \geq C(\mathcal{B})$.
\end{lemma}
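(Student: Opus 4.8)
The plan is to run both $\mathcal{A}$ and $\mathcal{B}$ against the same adaptive adversary of Lemma~\ref{lemma:lower:cont:blowerbound}, i.e.\ the one that, after observing the algorithm's current state, always sends the function penalizing that state (so $\varphi_1$ while the algorithm sits below $0.5$ and $\varphi_0$ once it is above). Since the adversary's choice of $f_t$ depends only on the algorithm's previous state, and both algorithms start in state $0$, as long as their states agree they are fed identical functions and incur identical cost. Hence it suffices to study the first time slot $t^\ast$ with $a_{t^\ast}\neq b_{t^\ast}$ (writing $a_t$ for $\mathcal{A}$'s states) and to show that from $t^\ast$ onward $\mathcal{A}$ pays at least as much as $\mathcal{B}$. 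Using the symmetry of both $\mathcal{B}$ and the adversary about $x=0.5$ that was already exploited in Lemma~\ref{lemma:lower:cont:blowerbound:plus}, I may assume $f_{t^\ast}=\varphi_1$, so that $\mathcal{B}$ moves up by exactly $\epsilon/2$ while $\mathcal{A}$ selects some other state.

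First I would treat the overshoot case $a_{t^\ast}>b_{t^\ast}$. Here $\mathcal{A}$ immediately pays more switching cost, and its only possible compensation is reduced operating cost on later $\varphi_1$ steps or an earlier threshold crossing. Because the slope of $\varphi_0,\varphi_1$ is only $\epsilon$ while a unit of state change costs $1$, every unit of excess height must later be undone by a matching unit of downward switching once the adversary flips to $\varphi_0$, and the operating gain of order $\epsilon$ cannot recover this, giving $C(\mathcal{A})\ge C(\mathcal{B})$. The delicate case is the lag $a_{t^\ast}<b_{t^\ast}$: now $\mathcal{A}$ declines some switching now, but by staying farther below $0.5$ it keeps the adversary sending $\varphi_1$ for strictly more steps, and in every step where it is lower than $\mathcal{B}$ it pays an extra operating cost of $\epsilon(b_t-a_t)$. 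The key observation is that the switching $\mathcal{A}$ ``saves'' is only deferred, not avoided: to leave the region below $0.5$ it must eventually traverse the same total height, so its total variation is no smaller than $\mathcal{B}$'s, whereas the accumulated extra operating cost during the lag is pure loss.

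I would formalize both cases through a single monotone invariant, for instance the cumulative extra cost $\mathcal{A}$ has committed relative to the trajectory $\mathcal{B}$ would follow from $\mathcal{A}$'s current state, showing that it is nonnegative and never decreases as the two runs evolve, so that in the end $C(\mathcal{A})\ge C(\mathcal{B})$. The main obstacle is exactly this bookkeeping across the adversary's adaptive continuation: once $a_{t^\ast}\neq b_{t^\ast}$ the two runs see different function sequences, so the comparison cannot be made term by term but must be driven by the invariant that excess height is always repaid as switching and deficient height is always paid for as operating cost. Combining the resulting inequality $C(\mathcal{A})\ge C(\mathcal{B})$ with Lemma~\ref{lemma:lower:cont:blowerbound} and its unbounded-cost strengthening in Lemma~\ref{lemma:lower:cont:blowerbound:plus} then shows that every deterministic online algorithm has competitive ratio at least $2$, completing the proof of Theorem~\ref{theo:lower:cont:two}.
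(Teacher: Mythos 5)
There is a genuine gap, and it is in the very first step: the choice of adversary. You let the adversary react only to $\mathcal{A}$'s own position relative to the threshold $0.5$ (send $\varphi_1$ while $\mathcal{A}$ is below $0.5$, $\varphi_0$ once it is above). The paper's adversary is different in an essential way: it compares $\mathcal{A}$'s state with $\mathcal{B}$'s state and sends $\varphi_1$ as long as $a_t \leq b_t$ and $\varphi_0$ when $a_t > b_t$ (with overrides at the boundary states $0$ and $1$). This comparison is what makes the claim $C(\mathcal{A}) \geq C(\mathcal{B})$ provable, because the adversary never rewards $\mathcal{A}$ for being on the ``wrong'' side of $\mathcal{B}$. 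With your adversary the claim is simply false. Concretely, take $\mathcal{A}$ to be the algorithm that moves by $\epsilon/4$ (instead of $\epsilon/2$) towards the minimizer of the arriving function. Your adversary sends $\varphi_1$ for the first $2/\epsilon$ steps, after which $\mathcal{A}$ sits at $0.5$ and the adversary alternates $\varphi_0,\varphi_1,\dots$ forever. In the alternating tail, $\mathcal{A}$ oscillates between $0.5$ and $0.5-\epsilon/4$ and pays about $0.75\epsilon$ per step ($0.5\epsilon$ operating plus $0.25\epsilon$ switching), whereas $\mathcal{B}$ has reached $1$ and oscillates between $1$ and $1-\epsilon/2$, paying about $\epsilon$ per step ($0.5\epsilon$ operating on average plus $0.5\epsilon$ switching). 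The initial phases differ only by $O(1)$, so for a long horizon $C(\mathcal{A}) < C(\mathcal{B})$; worse, $\mathcal{A}$'s ratio against the optimum (which sits at $1$ and pays about $0.5\epsilon$ per step) tends to $1.5$ on this sequence, so this adversary cannot even establish the lower bound of $2$ for this particular $\mathcal{A}$. The same kind of example defeats the variant in which the adversary tracks $\mathcal{B}$'s state instead.

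Your subsequent heuristics inherit this problem: the assertion that ``excess height must later be undone once the adversary flips to $\varphi_0$'' and that ``deficient height is always paid for as operating cost'' both rely on the adversary flipping exactly when $\mathcal{A}$ is on the wrong side of $\mathcal{B}$, which the $0.5$-threshold rule does not guarantee. The proposed monotone invariant is never defined precisely enough to check, and in the counterexample above no such nonnegative nondecreasing invariant can exist, since $\mathcal{A}$ ends up strictly cheaper. To repair the argument you need the paper's adversary, after which the comparison becomes a per-time-slot computation: whenever $\varphi_0$ is sent one has $a_{t-1}\geq b_{t-1}$, and a short calculation (using $\epsilon<1$) shows $f_t(a_t)+|a_t-a_{t-1}| \geq f_t(b_t)+|b_t-b_{t-1}|$ when $a_t<b_t$, while the case $a_t\geq b_t$ is handled by comparing operating costs pointwise and switching costs over the maximal interval on which $a$ stays above $b$; the case of $\varphi_1$ is symmetric.
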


\begin{proof}
	Let $\mathcal{A}$ be an arbitrary online algorithm.  
	The states of $\mathcal{A}$ are denoted by $a_t$. The adversary will use the following strategy: It sends $\varphi_1$ functions as long as $a_t \leq b_t$ and $a_t < 1$. If $a_t > b_t$, the adversary will send $\varphi_0$. 
	If $a_t$ reaches $1$, the adversary will send $\varphi_0$.
	
	We divide the resulting function sequence $F$ into time intervals $I_1, I_2, \dots$ of maximal size such that the adversary sends the same function for each time slot in the interval. The set $\mathcal{J} \coloneqq \{I_1, I_2, \dots\}$ is a partition of $[T]$. 
	Let $\mathcal{T}_1 \coloneqq \{I_1, I_3, I_5, \dots\}$ be the set of odd intervals and let $\mathcal{T}_0 \coloneqq \{I_2, I_4, \dots\}$ be the set of even intervals. 
	The first function $f_1$ is always $\varphi_1$, since $a_0 = b_0 = 0$, so for each odd interval $I \in \mathcal{T}_1$, we have $f_t = \varphi_1$ for all $t \in I$; for even intervals $I \in \mathcal{T}_0$, we have $f_t = \varphi_0$ for all $t \in I$. 
	
	Let $S_I(X) \coloneqq \sum_{t \in I} |x_t - x_{t-1}|$ denote the switching cost for algorithm $X$ during the time interval $I = [u:v]$ including the switching cost from $u-1$ to $u$. 
	Let $C_I(X) \coloneqq S_I(X) + \sum_{t \in I} f_t(x_t)$ denote the total cost of $X$ during $I$. 
	Note that contrary to Section~\ref{sec:poly}, this definition includes the switching cost from $u-1$ to $u$. 
	
	We will show that $C_I(\mathcal{A}) \geq C_I(\mathcal{B})$ holds for all intervals $I \in \mathcal{J}$. We differ between finite and infinite intervals. The last interval is infinite, if $\mathcal{A}$ permanently stays below or above $\mathcal{B}$. 
	
	For a finite even interval $I = [u:v] \in \mathcal{T}_0$, we have $a_{t-1} \geq b_{t-1}$ for all $t \in I$, because otherwise the adversary had not used the function $\varphi_0$. Furthermore, for the last time slot $v$, we have $a_v \leq b_v$. Let $\delta \coloneqq b_v - a_v$ be the difference between $\mathcal{A}$ and $\mathcal{B}$ at the end of the interval. 
	By the definition of algorithm $\mathcal{B}$, we have $b_t \leq b_{t-1}$ because $b_t = \max \{b_{t-1} - \epsilon/2, 0\}$, so the switching cost of $\mathcal{B}$ during~$I$ is exactly $S_I(\mathcal{B}) = b_u - b_v$.  
	The switching cost of $\mathcal{A}$ during $I$ is at least 
	\begin{equation*} 
	S_I(\mathcal{A}) \geq a_u - a_v \geq b_u - b_v + \delta = S_I(\mathcal{B}) + \delta
	\end{equation*}
	since $a_u \geq b_u$ and $a_v = b_v - \delta$. The operating cost of $\mathcal{A}$ for all time slots $t \in I \setminus \{v\}$ is $f_t(a_t) =\epsilon a_t \geq \epsilon b_t = f_t(b_t)$. For the last time slot $v$, we get $f_v(a_v) = \epsilon a_v = \epsilon (b_v - \delta) = f_v(b_v) - \epsilon \delta$. Therefore, the total cost of $\mathcal{A}$ during the time interval $I$ is 
	\begin{equation*} 
	C_I(\mathcal{A}) = S_I(\mathcal{A}) + \sum_{t=u}^{v} f_t(a_t) \geq S_I(\mathcal{B}) + \delta + \sum_{t=u}^{v} f_t(a_t) - \epsilon \delta \geq C_I(\mathcal{B}) .
	\end{equation*}
	The last inequality holds, because we can choose $\epsilon < 1$.
	
	If $\mathcal{A}$ permanently stays above $\mathcal{B}$, i.e., $a_t > b_t$ for all $t \geq u$, then the interval $I$ does not end, so there is no last time slot $v$.  If there is a constant $c > 0$ such that $a_{t} \geq c$ for all $t \in I$, then the operating cost of $\mathcal{A}$ goes towards infinity since $f_t(a_t) \geq \epsilon c > 0$ for all $t \in I$. If there is no such constant, the difference of the switching costs of $\mathcal{A}$ and $\mathcal{B}$ goes towards zero, while the operating cost of $\mathcal{A}$ is greater than the operating cost of $\mathcal{B}$. Thus, in both cases we get $C_I(\mathcal{A}) \geq C_I(\mathcal{B})$.
	
	The proof for an odd interval $I \in \mathcal{T}_1$ is analogous. 
	We have $a_{t-1} \leq b_{t-1}$ for all $t \in I$ and $a_v \geq b_v$ if $I$ is finite. Let $\delta \coloneqq a_v - b_v$. Since $S_I(\mathcal{B}) = b_v - b_u$, we get $S_I(\mathcal{A}) \geq a_v - a_u \geq b_v + \delta - b_u  = S_I(\mathcal{B}) + \delta$. Furthermore, for all $t \in I \setminus \{v\}$, we have $f_t(a_t) = \epsilon (1 - a_t) \geq \epsilon (1 - b_t) = f_t(b_t)$. For the last time slot $v$, we get $f_v(a_v) = \epsilon (1- a_v) = \epsilon (1 - b_v - \delta) = f_v(b_v) - \epsilon \delta$. Therefore, $C_I(\mathcal{A}) = S_I(\mathcal{A}) + \sum_{t=u}^{v} f_t(a_t) \geq S_I(\mathcal{B}) + \delta + \sum_{t=u}^{v} f_t(a_t) - \epsilon \delta \geq C_I(\mathcal{B})$.

	If $I \in \mathcal{T}_1$ is infinite, then either there is a constant $c < 1$ with $a_t \leq c$, so the operating cost of $\mathcal{A}$ converges to infinity since $f_t(a_t) \geq \epsilon (1 - c) > 0$, or there is no such constant, so the difference of the switching costs of $\mathcal{A}$ and $\mathcal{B}$ goes towards zero, while the operating cost of $\mathcal{A}$ is greater than the operating cost of $\mathcal{B}$. Thus, $C_I(\mathcal{A}) \geq C_I(\mathcal{B})$ is always fulfilled.
	
	By adding the total cost of all intervals, we get
	$C(\mathcal{A}) = \sum_{I \in \mathcal{J}} C_I(\mathcal{A}) \geq \sum_{I \in \mathcal{J}} C_I(\mathcal{B}) = C(\mathcal{B})$.
\end{proof}

\begin{proof}[\textbf{Proof of Theorem~\ref{theo:lower:cont:two}}]
	Let $\mathcal{A}$ be an arbitrary deterministic online algorithm. By using Lemmas~\ref{lemma:lower:cont:blowerbound} and~\ref{lemma:lower:cont:nobetter}, we get 
	\begin{equation*}
	C(\mathcal{A}) 
	\alignstack{\text{\tiny L\ref{lemma:lower:cont:nobetter}}}{\,\geq\,} C(\mathcal{B}) 
	\alignstack{\text{\tiny L\ref{lemma:lower:cont:blowerbound}}}{\,\geq\,} (2-\delta)  \cdot C(X^\ast) + \alpha
	\end{equation*} for all $\delta > 0$ and $\alpha \geq 0$. 
\end{proof}

\subsubsection{Restricted model}
\label{sec:lower:cont:lin}

Analogously to the discrete setting, in this section we want to show that the lower bound of 2 for the continuous data-center optimization problem still holds for the restricted model described in Section~\ref{sec:lower:deterministic:lin}.

\begin{theorem} \label{theo:lower:cont:lin}
	There is no deterministic online algorithm for the continuous setting of the restricted model with a competitive ratio of $c < 2$. 
\end{theorem}

\begin{proof}
	The restricted model is denoted by $\mathcal{L}$, the general model is denoted by $\mathcal{G}$.
	Let $f(z) \coloneqq \epsilon|1-kz|$ with $\epsilon \rightarrow 0$ and $k \rightarrow \infty$, let $\beta = 2$. If the adversary in $\mathcal{G}$ sends $\varphi_0(x) = \epsilon|x|$ as function, then we will use $\lambda_t = l_0 \coloneqq 0$ which leads to operating costs of
	\begin{equation*}
	x_t f(l_0 / x_t) = x_t \cdot \epsilon |1| = \epsilon|x_t| .
	\end{equation*}
	The last equality holds, because $x_t \geq l_0 = 0$. If the adversary sends $\varphi_1(x) = \epsilon|1-x|$, then we will use $\lambda_t = l_1 \coloneqq 1/k$ which leads to operating costs of 
	\begin{equation*}
	x_t f(l_1 / x_t) = x_t \cdot \epsilon \left|1 - \frac{1}{x_t}\right| = \epsilon|x_t - 1| = \epsilon|1-x_t| .
	\end{equation*}
	Hence, difference (1) between both models is solved. 
	
	The additional condition that $x_t \geq \lambda_t$ does not change anything, because both $l_0$ and $l_1$ are arbitrary close to 0 as $k \rightarrow \infty$, so difference (2) is solved too.
\end{proof}

\subsection{Discrete setting, randomized algorithms}
\label{sec:lower:random}

In this section, we determine a lower bound for randomized online algorithms in the discrete setting.
We begin with the analysis of the general model and afterwards show how our construction can be adapted to the restricted model (see Section~\ref{sec:lower:random:lin}).

\subsubsection{General model}
\label{sec:lower:random:general}

In this section, we show that there is no randomized online algorithm that achieves a competitive ratio that is smaller than 2 in the discrete setting against an oblivious adversary. The construction is similar to the continuous setting (Section~\ref{sec:lower:cont:general}). We have only one single server and the adversary will use the functions $\varphi_0(x) = \epsilon |x|$ and $\varphi_1(x) = \epsilon |1-x|$ with $\epsilon > 0$ and $\epsilon^{-1} \in \mathbb{N}$. 

The lower bound is proven as follows: First, we will construct an algorithm $\mathcal{B}$ that solves the continuous setting with a competitive ratio of at least $2-\delta$ for an arbitrary small $\delta > 0$. Then, we consider an arbitrary randomized online algorithm $\mathcal{A}$ for the discrete setting and show how to convert its probabilistic discrete schedule $X^\mathcal{A}$ to a deterministic continuous schedule $\bar{X}^\mathcal{A}$ without increasing the cost. Finally, we show how the adversary constructs the problem instance in dependence on the current state of $\bar{X}^\mathcal{A}$ and $\bar{X}^\mathcal{B}$.

Consider algorithm $\mathcal{B}$ described in Section~\ref{sec:lower:cont:general}. By Lemma~\ref{lemma:lower:cont:blowerbound:plus}, the competitive ratio of $\mathcal{B}$ for the continuous setting is at least $2 - \delta$ for an arbitrary small $\delta > 0$. Formally, 
\begin{equation} \label{eqn:random:lower:bopt} 
C^{\bar{\mathcal{P}}}(\bar{X}^\mathcal{B}) \geq (2-\delta) \cdot C^{\bar{\mathcal{P}}}(\bar{X}^\ast) + \alpha
\end{equation}
for all $\delta > 0$ and $\alpha \geq 0$. 

Let $\mathcal{A}$ be an arbitrary randomized online algorithm and let $\mathcal{P}$ be the problem instance created by the adversary (we will define later, how this problem instance is determined). For each time slot~$t$, the oblivious adversary knows the probability $\bar{x}^\mathcal{A}_t$ that $\mathcal{A}$ is in state 1. Note that there is only one server, so the probability that $\mathcal{A}$ is in state 0 is given by $1 - \bar{x}^\mathcal{A}_t$. Now, consider the fractional schedule $\bar{X}^\mathcal{A} = (\bar{x}^\mathcal{A}_1, \dots, \bar{x}^\mathcal{A}_T)$. The following lemma shows that the cost of $\bar{X}^\mathcal{A}$ for the continuous problem instance $\bar{\mathcal{P}}$ is smaller than or equal to the expected cost of $\mathcal{A}$ for the discrete problem instance $\mathcal{P}$.

\begin{lemma} \label{lemma:random:lower:expectedA}
	$\mathbb{E}[C^\mathcal{P}(X^\mathcal{A})] \geq C^{\bar{\mathcal{P}}}(\bar{X}^\mathcal{A})$.
\end{lemma}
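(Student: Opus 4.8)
The plan is to split the total cost into its operating and switching parts and bound each separately, exploiting that with a single server every realized state $x^\mathcal{A}_t$ lies in $\{0,1\}$ and that, consequently, $\bar{x}^\mathcal{A}_t = \Pr[x^\mathcal{A}_t = 1] = \mathbb{E}[x^\mathcal{A}_t]$. Writing $\mathbb{E}[C_\mathcal{P}(X^\mathcal{A})] = \mathbb{E}[R_\mathcal{P}(X^\mathcal{A})] + \mathbb{E}[S_\mathcal{P}(X^\mathcal{A})]$, where $R_\mathcal{P}(X^\mathcal{A}) = \sum_{t=1}^T f_t(x^\mathcal{A}_t)$ is the operating cost and $S_\mathcal{P}(X^\mathcal{A}) = \sum_{t=1}^{T+1} |x^\mathcal{A}_t - x^\mathcal{A}_{t-1}|$ is the switching cost, I would treat the two sums independently and then recombine using linearity of expectation.

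For the operating cost I would use that each $f_t$ (one of $\varphi_0$ or $\varphi_1$) is affine on $[0,1]$ and that $x^\mathcal{A}_t$ takes only the values $0$ and $1$, so that
\[
\mathbb{E}[f_t(x^\mathcal{A}_t)] = (1 - \bar{x}^\mathcal{A}_t)\, f_t(0) + \bar{x}^\mathcal{A}_t\, f_t(1) = f_t(\bar{x}^\mathcal{A}_t) = \bar{f}_t(\bar{x}^\mathcal{A}_t),
\]
where the last equality is the definition of the continuous extension in~\eqref{eqn:poly:correct:extension} restricted to $[0,1]$ (on that interval the linear interpolation coincides with the affine function itself). Summing over $t$ gives $\mathbb{E}[R_\mathcal{P}(X^\mathcal{A})] = R_{\bar{\mathcal{P}}}(\bar{X}^\mathcal{A})$, i.e.\ the expected operating cost matches the continuous operating cost exactly.

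For the switching cost the key observation is that for each $t$ the triangle inequality (equivalently, Jensen's inequality for the convex map $z \mapsto |z|$) yields
\[
\mathbb{E}[\,|x^\mathcal{A}_t - x^\mathcal{A}_{t-1}|\,] \geq \bigl|\mathbb{E}[x^\mathcal{A}_t - x^\mathcal{A}_{t-1}]\bigr| = |\bar{x}^\mathcal{A}_t - \bar{x}^\mathcal{A}_{t-1}|.
\]
This is the step where the correlations between consecutive states might look problematic, but the inequality holds for an \emph{arbitrary} joint distribution of $(x^\mathcal{A}_{t-1}, x^\mathcal{A}_t)$, so no independence assumption is needed. Summing over $t = 1, \dots, T+1$, and noting that the boundary values $\bar{x}^\mathcal{A}_0 = \bar{x}^\mathcal{A}_{T+1} = 0$ also hold deterministically for $X^\mathcal{A}$, gives $\mathbb{E}[S_\mathcal{P}(X^\mathcal{A})] \geq S_{\bar{\mathcal{P}}}(\bar{X}^\mathcal{A})$.

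Finally I would add the two bounds to conclude $\mathbb{E}[C_\mathcal{P}(X^\mathcal{A})] = \mathbb{E}[R_\mathcal{P}(X^\mathcal{A})] + \mathbb{E}[S_\mathcal{P}(X^\mathcal{A})] \geq R_{\bar{\mathcal{P}}}(\bar{X}^\mathcal{A}) + S_{\bar{\mathcal{P}}}(\bar{X}^\mathcal{A}) = C_{\bar{\mathcal{P}}}(\bar{X}^\mathcal{A})$. The only genuinely nontrivial point is the switching-cost inequality, and since it is precisely Jensen's inequality applied per time step, I expect no real obstacle beyond making the single-server and affine-on-$[0,1]$ structure explicit so that both the operating-cost identity and the correlation-agnostic switching bound are justified cleanly.
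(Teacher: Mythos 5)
Your proposal is correct and follows essentially the same route as the paper: the same decomposition into operating and switching cost, the same exact identity $\mathbb{E}[R_\mathcal{P}(X^\mathcal{A})] = R_{\bar{\mathcal{P}}}(\bar{X}^\mathcal{A})$ via the definition of $\bar{f}_t$, and the same per-step switching bound. The only cosmetic difference is that you justify $\mathbb{E}[|x^\mathcal{A}_t - x^\mathcal{A}_{t-1}|] \geq |\bar{x}^\mathcal{A}_t - \bar{x}^\mathcal{A}_{t-1}|$ by Jensen's inequality, whereas the paper argues that the probability of a state change is at least $(\bar{x}^\mathcal{A}_t - \bar{x}^\mathcal{A}_{t-1})^+$ respectively $(\bar{x}^\mathcal{A}_{t-1} - \bar{x}^\mathcal{A}_t)^+$; in the binary single-server setting these are the same fact.
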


\begin{proof}
	First, we will analyze the operating costs. The expected operating cost of $X^\mathcal{A}$ for time slot $t$ is 
	\begin{equation*}
	\mathbb{E}[f_t(x^\mathcal{A}_t)] = (1 - \bar{x}^\mathcal{A}_t)  f_t(0) + \bar{x}^\mathcal{A}_t  f_t(1) = \bar{f}_t(\bar{x}^\mathcal{A}_t) .
	\end{equation*}
	The last term describes the operating cost of $\bar{X}^\mathcal{A}$ in the continuous setting for time slot $t$. Thus, $\mathbb{E}[R^\mathcal{P}(X^\mathcal{A})] = R^{\bar{\mathcal{P}}}(\bar{X}^\mathcal{A})$.
	
	The switching cost of $\bar{X}^\mathcal{A}$ for time slot $t$ is $|\bar{x}^\mathcal{A}_t - \bar{x}^\mathcal{A}_{t-1}|$. The probability that $X^\mathcal{A}$ switches its state from 0 to 1 is at least $(\bar{x}^\mathcal{A}_t - \bar{x}^\mathcal{A}_{t-1})^+$. Analogously, the probability for switching the state from  1 to 0 is at least $(\bar{x}^\mathcal{A}_{t-1} - \bar{x}^\mathcal{A}_t)^+$. The actual probability can be greater, because we do not know the exact behavior of $\mathcal{A}$. All in all, the probability that $X^\mathcal{A}$ switches  its state is at least $|\bar{x}^\mathcal{A}_t - \bar{x}^\mathcal{A}_{t-1}|$, so over all time slots we get $\mathbb{E}[S^\mathcal{P}(X^\mathcal{A})] \geq S^{\bar{\mathcal{P}}}(\bar{X}^\mathcal{A})$ and therefore $\mathbb{E}[C^\mathcal{P}(X^\mathcal{A})] \geq C^{\bar{\mathcal{P}}}(\bar{X}^\mathcal{A})$.
\end{proof}

Now we have constructed a continuous schedule $\bar{X}^\mathcal{A}$ from the probabilities of $X^\mathcal{A}$.
The adversary behaves like in Section~\ref{sec:lower:cont:general}, that is, if $\bar{x}^\mathcal{A}_t$ equals 1 or 0, it will send $\varphi_0$ or $\varphi_1$, respectively, and otherwise if $\bar{x}^\mathcal{A}_t$ is greater than or smaller than  $\bar{x}^\mathcal{B}_t$, it will send $\varphi_0$ or $\varphi_1$. If $\bar{x}^\mathcal{A}_t = \bar{x}^\mathcal{B}_t$, then the adversary can choose an arbitrary state. By Lemma~\ref{lemma:lower:cont:nobetter},
\begin{equation} \label{eqn:random:lower:ab}
C^{\bar{\mathcal{P}}}(\bar{X}^\mathcal{A}) \geq C^{\bar{\mathcal{P}}}(\bar{X}^\mathcal{B})
\end{equation}
holds. 
Now, we are able to prove that 2 is a lower bound for randomized online algorithms.

\begin{theorem} \label{theo:lower:random:two}
	There is no randomized online algorithm for the discrete data-center optimization problem that achieves a competitive ratio that is less than 2 against an oblivious adversary.
\end{theorem}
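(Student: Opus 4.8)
The plan is to lift the randomized discrete algorithm into the deterministic continuous setting and then quote the continuous lower bound that has already been established. Let $\mathcal{A}$ be an arbitrary randomized online algorithm for the discrete problem, and let the adversary build the instance $\mathcal{P}$ by the rule stated just above, sending $\varphi_0$ or $\varphi_1$ according to whether the marginal probability $\bar{x}^\mathcal{A}_t$ lies above or below $\bar{x}^\mathcal{B}_t$ (and according to $\bar{x}^\mathcal{A}_t\in\{0,1\}$). I would then chain exactly three inequalities that are already in place. First, Lemma~\ref{lemma:random:lower:expectedA} gives $\mathbb{E}[C_\mathcal{P}(X^\mathcal{A})] \geq C_{\bar{\mathcal{P}}}(\bar{X}^\mathcal{A})$. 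Second, the fractional trajectory $\bar{X}^\mathcal{A}$ is itself a deterministic online schedule for $\bar{\mathcal{P}}$, so Lemma~\ref{lemma:lower:cont:nobetter} applies and yields inequality~\eqref{eqn:random:lower:ab}, namely $C_{\bar{\mathcal{P}}}(\bar{X}^\mathcal{A}) \geq C_{\bar{\mathcal{P}}}(\bar{X}^\mathcal{B})$. Third, the strong continuous bound~\eqref{eqn:random:lower:bopt} coming from Lemma~\ref{lemma:lower:cont:blowerbound:plus} gives $C_{\bar{\mathcal{P}}}(\bar{X}^\mathcal{B}) \geq (2-\delta)\, C_{\bar{\mathcal{P}}}(\bar{X}^\ast) + \alpha$. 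Composing the three produces $\mathbb{E}[C_\mathcal{P}(X^\mathcal{A})] \geq (2-\delta)\, C_{\bar{\mathcal{P}}}(\bar{X}^\ast) + \alpha$.

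The only remaining step is to replace the continuous optimum $C_{\bar{\mathcal{P}}}(\bar{X}^\ast)$ by the discrete optimum $C_\mathcal{P}(X^\ast)$. For this I would invoke Lemma~\ref{lemma:poly:correct:rounding}: an optimal fractional schedule rounds (down or up) to an integral schedule of the same cost, and that integral schedule is feasible for $\mathcal{P}$ with cost $C_{\bar{\mathcal{P}}}(\bar{X}^\ast)$, so $C_\mathcal{P}(X^\ast) \leq C_{\bar{\mathcal{P}}}(\bar{X}^\ast)$; since $\bar{\mathcal{P}}$ is a relaxation of $\mathcal{P}$ the reverse inequality is immediate, and the two optima coincide (this is precisely the identity~\eqref{eqn:random:analysis:cxopt} used in Section~\ref{sec:random}). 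Substituting gives $\mathbb{E}[C_\mathcal{P}(X^\mathcal{A})] \geq (2-\delta)\, C_\mathcal{P}(X^\ast) + \alpha$ for every $\delta>0$ and $\alpha\geq 0$. Letting $\delta\to 0$, and using the additive term $\alpha$ to swallow any additive constant in a purported competitiveness guarantee while the instances themselves can be made arbitrarily costly, rules out any competitive ratio below $2$, which is the assertion.

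The step I expect to be the main obstacle is justifying that this adaptive-looking construction is legitimate \emph{against an oblivious adversary}. The tension is that an oblivious adversary must commit to the whole sequence $f_1,\dots,f_T$ before $\mathcal{A}$ tosses its coins, yet the rule for $f_t$ refers to $\bar{x}^\mathcal{A}_t$. The resolution I would spell out is that $\bar{x}^\mathcal{A}_t$ is the \emph{probability} that $\mathcal{A}$ occupies state~$1$ at time~$t$, i.e.\ an expectation over $\mathcal{A}$'s random bits and hence a deterministic function of the input prefix that the adversary, knowing the code of $\mathcal{A}$, can precompute; the sequence is therefore fixed offline and $\bar{X}^\mathcal{A}$ is a genuine deterministic fractional schedule, exactly the object Lemma~\ref{lemma:lower:cont:nobetter} is phrased for. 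A second point that must not be glossed over is the direction of the optimal-cost comparison: the chain delivers a factor times $C_{\bar{\mathcal{P}}}(\bar{X}^\ast)$, and since the continuous optimum is a priori only a lower bound on the discrete optimum, the whole argument would collapse without the equality furnished by Lemma~\ref{lemma:poly:correct:rounding}. Once these two points are secured, the theorem follows by merely assembling the already-proved inequalities.
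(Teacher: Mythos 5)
Your proposal is correct and follows essentially the same route as the paper: the same chain of Lemma~\ref{lemma:random:lower:expectedA}, inequality~\eqref{eqn:random:lower:ab} via Lemma~\ref{lemma:lower:cont:nobetter}, the strengthened continuous bound~\eqref{eqn:random:lower:bopt}, and the identity~\eqref{eqn:random:analysis:cxopt} from Lemma~\ref{lemma:poly:correct:rounding}. Your added remarks on why the adversary remains oblivious (the marginals $\bar{x}^\mathcal{A}_t$ are deterministic functions of the input prefix) and on the direction of the discrete-versus-continuous optimum comparison make explicit two points the paper treats only briefly, but the argument is the same.
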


\newcommand{\Ccont}{C^{\bar{\mathcal{P}}}}
\newcommand{\barX}{\bar{X}}
\begin{proof}
	Let $\mathcal{A}$ be an arbitrary randomized online algorithm. By using Lemma~\ref{lemma:random:lower:expectedA} as well as equations~\eqref{eqn:random:lower:bopt}, \eqref{eqn:random:lower:ab} and~\eqref{eqn:random:analysis:cxopt}, we get
	\begin{align*} 
	\mathbb{E}[C^\mathcal{P}(X^\mathcal{A})] 
	&\alignstack{\text{\tiny L\ref{lemma:random:lower:expectedA}}}{\,\geq\,} \Ccont (\bar{X}^\mathcal{A}) \\
	&\alignstack{\eqref{eqn:random:lower:ab}}{\,\geq\,} \Ccont (\barX^\mathcal{B}) \\
	&\alignstack{\eqref{eqn:random:lower:bopt}}{\,\geq\,} (2-\delta) \cdot \Ccont (\barX^\ast) + \alpha \\
	&\alignstack{\eqref{eqn:random:analysis:cxopt}}{\,=\,} (2-\delta) \cdot C^\mathcal{P}(X^\ast) + \alpha
	\end{align*}
	where $\delta > 0$ and $\alpha \geq 0$ can be chosen arbitrarily. 
\end{proof}

The theorem shows that the randomized algorithm given in Section~\ref{sec:random:algo} is optimal.

\subsubsection{Restricted model}
\label{sec:lower:random:lin}

In this section, we show that the lower bound of~2 presented above still holds for the restricted model. The basic idea is very similar to the proof of Theorem~\ref{theo:lower:disc:lin} in Section~\ref{sec:lower:deterministic:lin}.

\begin{theorem} \label{theo:lower:random:lin}
	There is no randomized online algorithm for the discrete setting of the restricted model with a competitive ratio of $c < 2$.
\end{theorem}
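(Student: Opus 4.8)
The plan is to adapt the state-shifting reduction from the proof of Theorem~\ref{theo:lower:disc:lin} to the randomized lower bound of Theorem~\ref{theo:lower:random:two}. As before, denote the general model by $\mathcal{G}$ and the restricted model by $\mathcal{L}$. I would use $m=2$ servers, the single operating-cost function $f(z) \coloneqq \epsilon|1-2z|$, and $\beta = 2$, and translate states via $x^\mathcal{L}_t = x^\mathcal{G}_t + 1$ for $t \in \{1,\dots,T\}$, so that the restricted algorithm effectively chooses between states $1$ and $2$ while state $0$ is used only at $t=0$. Whenever the general-model adversary of Theorem~\ref{theo:lower:random:two} would present $\varphi_0(x)=\epsilon|x|$ I would set $\lambda_t = l_0 \coloneqq 0.5$, and whenever it would present $\varphi_1(x)=\epsilon|1-x|$ I would set $\lambda_t = l_1 \coloneqq 1$.

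The first step is to verify, exactly as in Theorem~\ref{theo:lower:disc:lin}, that these workloads reproduce the general-model operating costs term by term: $x^\mathcal{L}_t f(l_0/x^\mathcal{L}_t) = \epsilon|x^\mathcal{L}_t - 1| = \epsilon|x^\mathcal{G}_t| = \varphi_0(x^\mathcal{G}_t)$ and $x^\mathcal{L}_t f(l_1/x^\mathcal{L}_t) = \epsilon|x^\mathcal{L}_t - 2| = \epsilon|1 - x^\mathcal{G}_t| = \varphi_1(x^\mathcal{G}_t)$, which resolves difference~(1). Difference~(2) is resolved by noting that for $t \geq 1$ both $l_0$ and $l_1$ are strictly positive, so the feasibility constraint $x_t \geq \lambda_t$ forbids state $0$; hence any randomized algorithm must place all of its probability mass on states $1$ and $2$, which is precisely what the translation $x^\mathcal{G} = x^\mathcal{L} - 1$ requires.

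The crux is that this correspondence is cost-preserving for randomized algorithms up to an additive constant. Given an arbitrary randomized online algorithm $\mathcal{A}$ for $\mathcal{L}$, the oblivious adversary knows $\Pr[x^\mathcal{L}_t = 2]$ at each step and can therefore run the adversary of Theorem~\ref{theo:lower:random:two} on the induced general-model probabilities $\bar{x}^\mathcal{A}_t \coloneqq \Pr[x^\mathcal{L}_t = 2]$, comparing them against the reference state $\bar{x}^\mathcal{B}_t$ exactly as there. Since the switching costs satisfy $x^\mathcal{L}_t - x^\mathcal{L}_{t-1} = x^\mathcal{G}_t - x^\mathcal{G}_{t-1}$ for $t \geq 2$ and the operating costs coincide, the expected cost of $\mathcal{A}$ in $\mathcal{L}$ equals its expected cost in $\mathcal{G}$ plus the constant $O(1)$ incurred by powering up from state $0$ at the beginning and down again at the end; the same constant is added to the optimum. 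I would then invoke Theorem~\ref{theo:lower:random:two} in its strengthened additive form to obtain $\mathbb{E}[C_\mathcal{P}(X^\mathcal{A})] \geq (2-\delta)\cdot C_\mathcal{P}(X^\ast) + \alpha$ for all $\delta > 0$ and $\alpha \geq 0$, so the constant boundary cost is absorbed into $\alpha$ and the competitive ratio cannot be below $2$.

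The main obstacle I anticipate is making the randomized correspondence airtight. One must check that the adversary, which branches on the comparison between $\bar{x}^\mathcal{A}_t$ and $\bar{x}^\mathcal{B}_t$, remains oblivious in $\mathcal{L}$ — this holds because all relevant probabilities are fixed by $\mathcal{A}$ in advance and the adversary knows $\mathcal{A}$ — and that forbidding state $0$ does not let a randomized algorithm gain anything by splitting mass onto the infeasible state, which the constraint $x_t \geq \lambda_t$ rules out directly. Once these checks are in place, the chain of inequalities from Theorem~\ref{theo:lower:random:two} transfers verbatim to the restricted model.
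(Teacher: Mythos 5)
Your proposal is correct and follows essentially the same route as the paper: two servers, the single cost function $f(z)=\epsilon|1-2z|$, loads $\lambda_t\in\{0.5,1\}$, the state shift $x^\mathcal{L}_t=x^\mathcal{G}_t+1$, and an appeal to the additive form of Theorem~\ref{theo:lower:random:two}. The only minor difference is that the paper, because it re-runs the chain of inequalities through the continuous extension of the restricted instance, also explicitly checks that fractional states in $[0.5,1)$ are never beneficial there (the first load is $\lambda_1=1$, and moving below state $1$ only increases operating and switching cost) --- a detail your discrete-level black-box reduction largely sidesteps.
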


\begin{proof}
	The general model is denoted by $\mathcal{G}$ and the restricted model is denoted by $\mathcal{L}$. The states of the model $\mathcal{X} \in \{\mathcal{G}, \mathcal{L}\}$ are indicated by $x_t^\mathcal{X}$. In the restricted model we use 2 servers, the operating cost function $f(z) \coloneqq \epsilon|1 - 2z|$ with $\epsilon \rightarrow 0$ and $\beta = 2$. Instead of switching between the states $0$ and $1$ in $\mathcal{G}$, we will switch between $1$ and $2$ in $\mathcal{L}$, so for $t \in [T]$ we have $x^\mathcal{L}_t = x^\mathcal{G}_t + 1$.
	
	If the adversary in $\mathcal{G}$ sends $\varphi_0(x)$ as function, then we will use $\lambda_t = l_0 \coloneqq 0.5$, and if he sends $\varphi_1(x)$, then we will use $\lambda_t = l_1 \coloneqq 1$. As already shown in the proof of Theorem~\ref{theo:lower:disc:lin}, the operating cost $x_t^\mathcal{L} f(l_k / x_t^\mathcal{L})$ in $\mathcal{L}$ (with $k \in \{0,1\}$) is equal to the operating cost $\varphi_k(x_t^\mathcal{G})$ in $\mathcal{G}$. 
	
	In the continuous extension of $\mathcal{L}$ we are allowed to use the states $\bar{x}^\mathcal{L}_t \geq 0.5$, if $\lambda_t = l_0 = 0.5$. Since $x_0 = 0$, the first function the adversary sends in $\mathcal{G}$ is $\varphi_1$, so we have $\lambda_1 = 1$ and thus even in the continuous extension $\bar{x}_t^\mathcal{L} \geq 1$ must be fulfilled. For $t \geq 2$, there is no benefit to use states smaller than 1 in $\mathcal{L}$, since $\bar{x}_t^\mathcal{L} f(l_0 / \bar{x}_t^\mathcal{L}) = \epsilon |\bar{x}_t^\mathcal{L} - 1|$ which is minimal for $\bar{x}_t^\mathcal{L} = 1$. Moving to states below $1$ always increases the operating and switching costs. Therefore, the inequality $x_t \geq \lambda_t$ is  always fulfilled. 
\end{proof}

\subsection{Online algorithms with prediction window}
\label{sec:lower:prediction}
So far, we have considered online algorithms that at time $t$ only know the arriving function $f_t$ in determining the next state. In contrast, an offline algorithm knows the whole function sequence~$F$. There are models between these edge cases. An online algorithm with a \emph{prediction window} of length~$w$,
at any time $t$, can not only use the function $f_t$ but the function set $\{f_t, \dots, f_{t + w}\}$ to choose the state $x_t$. This problem
extension was also defined by Lin et al.~\cite{LinWierman2011infocom,LinWierman2013}.
If $w$ has a constant size (i.e., $w$ is independent of $T$), then the lower bounds developed in the previous sections
still hold as the following theorem shows. We will prove the lower bounds for the restricted model, thus they hold for 
the general model as well.

\begin{theorem} \label{theo:lower:prediction}
	Let $w \in \mathbb{N}$ and $\delta > 0$ be arbitrary constants. There is no deterministic online algorithm with a prediction window of length $w$ that achieves a competitive ratio of $3-\delta$ in the discrete setting or $2-\delta$ in the continuous setting for the restricted model.
\end{theorem}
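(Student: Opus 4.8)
The plan is to reduce the window-$w$ setting to the window-$0$ lower bounds already established for the restricted model, namely Theorem~\ref{theo:lower:disc:lin} (ratio $3$, discrete) and Theorem~\ref{theo:lower:cont:lin} (ratio $2$, continuous). The idea is to \emph{slow the adversary down}: take the corresponding window-$0$ construction and replace every single function by a block of $\ell$ identical copies, where $\ell$ is chosen much larger than $w$ (and is allowed to grow). Because all $\ell$ functions in a block coincide, a prediction window of length $w$ reveals genuinely new information only during the last $w$ slots of each block, when it first peeks into the following block; over the first $\ell-w$ slots the window shows nothing a window-$0$ algorithm would not also know. Intuitively, this peek cannot help: all it lets the algorithm do is start a between-block transition up to $w$ steps early, and since the switching cost is strictly positive and each block is homogeneous, moving toward the next block's preferred state inside the current block only raises the operating cost there while the transition itself must still be paid.

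The steps, in order, are the following. First, fix $\ell \gg w$ and define the blocked instance, so that each original step $i$ becomes a block $B_i$ of length $\ell$ carrying one of the two operating costs $\varphi_0,\varphi_1$ (equivalently, one of the two loads $\lambda_t\in\{l_0,l_1\}$ of the restricted construction). Second, let the adversary commit the type of block $B_{i+1}$ at the first slot of $B_i$ whose window does \emph{not} yet reach into $B_{i+1}$, i.e.\ no later than slot $e_i-w$ where $e_i$ is the last slot of $B_i$; at that moment the algorithm's state $x'_i:=x_{e_i-w}$ is already fixed, so the adversary may respond to it exactly as the window-$0$ adversary of Theorem~\ref{theo:lower:disc:lin} (respectively Lemma~\ref{lemma:lower:cont:nobetter}) responds to the previous state. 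This makes the induced block-level game identical to the original window-$0$ game, with operating costs multiplied by $\ell$ and switching costs unchanged. Third, run the very same two-strategy bound on the optimum (stay in one state, or copy the adversary) and the case analysis of Theorem~\ref{theo:lower:disc:lin} / Theorem~\ref{theo:lower:cont:lin}, now with the rescaled costs; since the optimum and the algorithm are each charged a factor $\ell$ on operating cost while switching stays linear in the number of transitions, the limiting ratio is unchanged (rescaling $\ell\epsilon$ plays the role of $\epsilon\to 0$). Finally, extend the workload by repetition as in Lemma~\ref{lemma:lower:cont:blowerbound:plus} so that the total cost diverges while the ratio stays at $3-\delta$, respectively $2-\delta$.

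The main obstacle is to control, uniformly over all algorithms, the advantage extracted from the $w$-slot peek at each block boundary. One must show that whatever an algorithm does during the $w$ peeked slots of a block changes its cost by only $O(w)$ per block (a bounded number of extra transitions plus $O(w\epsilon)$ operating cost), which is negligible against the $\Theta(\ell\epsilon)$ operating cost accrued over the $\ell-w$ unpeeked slots once $\ell\gg w$; this is exactly where the homogeneity of blocks and the positivity of the switching cost are used. A secondary point requiring care is that the reduction must preserve integrality in the discrete case, so that the factor-$3$ bound survives: unlike the continuous case, where the block-representative state $x'_i$ may be taken fractional, in the discrete case $x'_i$ must be read off as an integral state of the algorithm and the adversary's response defined against this integral value, so that Theorem~\ref{theo:lower:disc:lin} applies verbatim. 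Once the peek contribution is shown to vanish relative to $\ell$, the constant blow-up by $\ell$ is immaterial because $T$ (hence the number of blocks) is unbounded, and both claimed bounds follow for the restricted model, and a fortiori for the general model.
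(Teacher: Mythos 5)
Your proposal is correct and follows essentially the same route as the paper: replace each function of the window-$0$ restricted-model construction by a block of identical copies whose length is much larger than $w$, so that the prediction window reveals information only on a vanishing fraction of each block, and absorb the resulting advantage as a lower-order term (the paper scales each copy by $1/(mw)$ and bounds the loss by the fraction $1-1/m$, whereas you keep unscaled copies and let $\ell\epsilon$ play the role of the small parameter — an equivalent bookkeeping). Your explicit treatment of the adversary's commitment time at slot $e_i-w$ makes precise a step the paper leaves implicit, but the underlying argument is the same.
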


\begin{proof}
	Let $c$ be the lower bound for the competitive ratio without prediction window, i.e., we have $c=2$ for the continuous and the randomized discrete setting and $c=3$ for the deterministic discrete setting. By Theorem~\ref{theo:lower:cont:lin} and~\ref{theo:lower:disc:lin}, there exists a function sequence $F$ such that there is no online algorithm that achieves a competitive ratio of $c-\delta/2$ for an arbitrary small $\delta > 0$. Let $\mathcal{A}$ be an optimal online algorithm without prediction window and let $\mathcal{B}_w$ be an online algorithm with a prediction window of length $w \geq 1$. We will construct a function sequence $F'$ such that the competitive ratio of $\mathcal{B}_w$ is at least $c-\delta$. 
	
	Let $n \in \mathbb{N}$. Each function $f_t$ in $F$ is replaced by the function sequence $(f'_{t, 1}, \dots, f'_{t, n\cdot w})$ with $f'_{t,u}(z)  \coloneqq \frac{1}{nw} f_t(z)$ where $u \in [n \cdot w]$. So we have
	\begin{equation*}
	F' = (f'_{1,1}, \dots, f'_{1, nw}, f'_{2,1}, \dots, f'_{2, nw}, \dots, f'_{T,1}, \dots, f'_{T, nw}) .
	\end{equation*}
	Since the functions in the subsequence $(f'_{t, 1}, \dots,  f'_{t, n\cdot w})$ are equal and since 
	\begin{equation*}
	\sum_{u=1}^{n\cdot w} f'_{t,u}(x) = f_t(x)
	\end{equation*}
	holds for all $t \in [T]$ and $x \in \mathbb{R}$, the cost of an optimal online algorithm without prediction window are equal for both function sequences, i.e., $C^{F}(\mathcal{A}) = C^{F'}(\mathcal{A})$. Furthermore, the inequality $C^{F}(X^\ast) \geq C^{F'}(X^\ast)$ holds, because in $F'$ we have more possibilities to choose from.

	Only for the last $w$ functions in the sequence $(f'_{t, 1}, \dots, f'_{t, n w})$ the algorithm~$\mathcal{B}_w$ has an extra knowledge in comparison to $\mathcal{A}$. The operating cost of~$\mathcal{B}_w$ is at least zero for these functions, so we can bound the cost of $\mathcal{B}_w$ by
	\begin{align*}
	C^{F'}(\mathcal{B}_w) 
	&\geq \frac{(n-1)\cdot w}{nw} \cdot C^{F'}(\mathcal{A}) \\
	&=  \left(1 - 1/n \right) \cdot C^{F}(\mathcal{A}) \\
	&> \left(1 - 1/n\right)\left(c- \delta/2 \right) \cdot C^{F}(X^\ast) \\
	&=\left(c - \delta/2 - \frac{c - \delta/2}{n} \right) \cdot C^{F}(X^\ast) \\
	&>\left(c - \delta/2 - c/n\right) \cdot C^{F}(X^\ast) \\
	&\geq \left(c - \delta/2 - c/n\right) \cdot C^{F'}(X^\ast) .
	\end{align*}
	By using $n \coloneqq \lceil 2c/\delta \rceil$, we get
	\begin{align*}
	C^{F'}(\mathcal{B}_w) > (c - \delta) \cdot C^{F'}(X^\ast) .
	\end{align*}
	Thus, there is no online algorithm with a prediction window of length $w$ that achieves a competitive ratio of $c-\delta$. 
\end{proof}

\section{Summary}
\label{sec:summary}

This paper examined the data-center optimization problem with homogeneous servers. In contrast to the publications of Lin et al. \cite{LinWierman2011infocom,LinWierman2013} and Bansal et al. \cite{Bansal2015}, we studied the setting where only integral solutions are allowed, i.e., the number of active servers must be an integer. 
We developed an $\mathcal{O}(T \cdot \log m)$ time algorithm for the offline version of the problem. Furthermore, we showed how to adapt Lin et al.'s deterministic online algorithm for the discrete setting and proved that it is still 3-competitive. In addition, we presented a randomized algorithm with a competitive ratio of~2 against an oblivious adversary. At the end, we showed that both the deterministic and the randomized algorithm are optimal for the discrete setting. Independently of \cite{Antoniadis2017}, we gave a lower bound of 2 for the continuous data-center optimization problem. We proved that all lower bounds still hold for the more restricted model introduced by Lin e. al. \cite{LinWierman2011infocom} and also for online algorithms with a finite prediction window. 

\appendix

\section{Variables and notation}
\label{sec:appendix:variables}


Let $k, l \in \mathbb{N}_0$, let $x, a, b \in \mathbb{R}$ and let $g : \mathbb{N} \rightarrow \mathbb{R}$ be an arbitrary function. 
\begin{small}
\begin{align*}
[k] &\coloneqq \{1, 2, \dots k\} &
[k:l] &\coloneqq \{k, k+1, \dots, l\} &
[k:l[ &\coloneqq \{k, k+1, \dots, l-1\} \\
[k]_0 &\coloneqq \{0, 1, \dots k\} &
]k:l] &\coloneqq \{k+1, k+2, \dots, l\} &
]k:l[ &\coloneqq \{k+1, k+2, \dots, l-1\} \\
[x]^a_b &\coloneqq \max\{a, \min\{b, x\}\} &
\text{frac}(x) &\coloneqq x - \lfloor x \rfloor &
\Delta g(x) &\coloneqq g(x) - g(x-1) \\
\ceilstrong{x} &\coloneqq \min \{n \in \mathbb{Z} \mid n > x\}
\end{align*}%
\end{small}%

\vspace{10pt} \noindent
The following table gives an overview of the variables defined in this paper.

\begingroup
	\setlength{\tabcolsep}{4pt}
	\renewcommand{\arraystretch}{1.3}
	\scriptsize
	
	\centering
	\begin{longtable}{|l|p{11.6cm}|} 
		\hline
		\textbf{Variable} & \textbf{Description} \\
		\hline
		\endhead
		\hline
		\endfoot
		$a_t$ & Number of active servers of algorithm $\mathcal{A}$ at time $t$. \\
		$\mathcal{A}$ & An arbitrary online algorithm. \\
		$b_t$ & Number of active servers of algorithm $\mathcal{B}$ at time $t$. Note that $b_t$ can be fractional. \\
		$\mathcal{B}$ & Online algorithm that achieves a competitive ratio of $2-\delta$ in the continuous setting if the adversary only uses $\varphi_0$ and $\varphi_1$ as functions. \\
		$\mathcal{B}_w$ & An arbitrary online algorithm with a prediction window of length $w$. \\
		$\beta$ & Switching cost. \\
		$C(X), C(\mathcal{A})$ & Total cost of the schedule $X$ or algorithm $\mathcal{A}$, respectively. \\
		$C^F(X), C^F(\mathcal{A})$ & Total cost of the schedule $X$ or algorithm $\mathcal{A}$ for the function sequence $F$. \\
		$C_{I}(X), C_{I}(\mathcal{A})$ & Total cost of the schedule $X$ or algorithm $\mathcal{A}$ during the time interval $I = \{a, a+1, \dots, b\}$ \emph{including} the switching cost from $a-1$ to $a$. \\
		$C^L_\tau(X), C^U_\tau(X)$ & Cost of the schedule $X$ up to time $\tau$ if the switching cost is paid for powering up ($L$) or down ($U$), respectively. \\
		$\hat{C}^L_\tau(x)$, $\hat{C}^U_\tau(x)$ & Minimal cost up to time $\tau$ that can be achieved if the last state $x_\tau$ is $x$ and if the switching cost is paid for powering up~($L$) or down~($U$), respectively. \\
		$C^Q(X)$ & Total cost of the schedule $X$ in the problem instance $Q$. \\
		$C^Q_{[a,b]}(X)$ & Total cost of the schedule $X$ in the problem instance $Q$ during the time interval $\{a, a+1, \dots, b\}$ \emph{excluding} the switching cost from $a-1$ to $a$. \\
		$f(z)$ & Operating cost of a single server running with load $z \in [0,1]$ in the \emph{restricted} model. \\
		$f_t(x_t)$ & Operating cost at time slot $t$ for $x_t$ active servers. \\
		$\bar{f}_t(x_t)$ & Continuous extension of $f_t$, see equation~\eqref{eqn:poly:correct:extension}. \\
		$F$ & Sequence of operating cost functions, $F = (f_1, \dots, f_t)$. \\
		$\mathcal{G}$ & General model described by equation~\eqref{eqn:model:cost}. \\
		$K$ & First iteration of the polynomial offline algorithm. All in all, the algorithm performs $K+1$ iterations. \\ 
		$\mathcal{L}$ & Restricted model described by equation~\eqref{eqn:model:lin}. \\
		$\lambda_t$ & Incoming workload at time $t$ in the \emph{restricted} model. \\
		$m$ & Total number of servers. \\
		$M_k$ & States used in the problem instance $\mathcal{P}_k$. Formally, $M_k \coloneqq \{n \in [m]_0 \mid n \mod 2^k = 0\}$. \\ 
		$N_0(t), N_1(t)$ & Number of time slots up to time slot $t$ where the adversary sent the function $\varphi_0$ or $\varphi_1$, respectively. \\
		$p^\uparrow_t$ & Probability that the randomized offline algorithm uses the upper state, if the number of active servers in $\bar{X}$ increases.  \\ 
		$p^\downarrow_t$ & Probability that the randomized offline algorithm uses the lower state, if the number of active servers in $\bar{X}$ decreases.\\
		$\mathcal{P}$ & Original problem instance, $\mathcal{P} = (T, m, \beta, F)$ or $\mathcal{P} = (T, m, \beta, F, M)$ (in Section~\ref{sec:poly}). \\
		$\bar{\mathcal{P}}$ & Continuous extension of the problem instance $\mathcal{P}$. \\
		$\mathcal{P}_k$ & Problem instance where only states are allowed that are multiples of $2^k$. Formally, $\mathcal{P}_k = \Phi_k(\mathcal{P})$. \\
		$\varphi_0, \varphi_1$ & Functions for constructing lower bounds. Formally, $\varphi_0(x) \coloneqq \epsilon |x|$ and $\varphi_1(x) \coloneqq \epsilon |1-x|$. \\
		$\Phi_k(Q)$ & Modified version of the problem instance $Q$, where the states must be multiples of $2^k$. \\
		$R_\tau(X)$ & Operating cost of the schedule $X$ up to time $\tau$. \\
		$R^Q(X)$ & Operating cost of the schedule $X$ in the problem instance $Q$. \\
		$S$ & Total switching cost of algorithm $\mathcal{A}$. \\
		$S_I(X)$, $S_I(\mathcal{A})$ & Switching cost of the schedule $X$ or algorithm $\mathcal{A}$ during the time interval $I = \{a, a+1, \dots, b\}$ \emph{including} the switching cost from $a-1$ to $a$. \\
		$S^L_\tau(X), S^U_\tau(X)$ & Switching cost of $X$ up to time $\tau$ if the switching cost is paid for powering up ($L$) or down ($U$), respectively. \\
		$S^Q(X)$ & Switching cost of the schedule $X$ in the problem instance $Q$. \\
		$T$ & Total number of time slots. \\
		$\mathcal{T}^-$, $\mathcal{T}^+$ & Decreasing ($\mathcal{T}^-$) or increasing ($\mathcal{T}^+$) time intervals of the LCP algorithm. During a time interval in $\mathcal{T}^-$ or $\mathcal{T}^+$, the number of active servers in both $X^\text{LCP}$ and $X^\ast$ never increases or decreases, respectively. \\
		$v_{t,j}$ & Vertex in $G$ representing $j$ active servers at time slot $t$. \\
		$V^k$ & Vertex set used in iteration $k$ of the polynomial offline algorithm. \\
		$w$ & Length of the prediction window. \\
		$x_t$ & Number of active servers in the schedule $X$ at time $t$. \\
		$x^\ast_t$ & Number of active servers in the optimal schedule $X^\ast$ at time $t$. \\
		$\hat{x}^k_t$ & Number of active servers at time slot $t$ in iteration $k$ of the polynomial offline algorithm. \\
		$x^\text{LCP}_t$ & Number of active servers of the LCP algorithm at time $t$. \\
		$x^L_\tau$, $x^U_\tau$ & Last state of $X^L_\tau$ or $X^U_\tau$, respectively. Formally, $x^B_\tau \coloneqq x^B_{\tau, \tau}$ for $B \in {L, U}$ where $X^B_\tau = (x^B_{\tau, 1}, \dots, x^B_{\tau, \tau})$. \\
		$x^L_{\tau,t}$, $x^U_{\tau,t}$ & Number of active server at time $t$ in the schedule $X^L_\tau = (x^L_{\tau, 1}, \dots, x^L_{\tau, \tau})$ or $X^U_\tau = (x^U_{\tau, 1}, \dots, x^U_{\tau, \tau})$, respectively. \\
		$X$ & Schedule, $X = (x_1, \dots, x_T)$. \\
		$X^\ast$ & Optimal schedule, $X^\ast = (x^\ast_1, \dots, x^\ast_T)$. \\
		$\hat{X}^k$ & Schedule that is calculated in iteration $k$ of the polynomial offline algorithm, $\hat{X}^k = (\hat{x}^k_1, \dots, \hat{x}^k_T)$. \\
		$X^\text{LCP}$ & Schedule of the LCP algorithm, $X^\text{LCP} = (x^\text{LCP}_1, \dots, x^\text{LCP}_T)$. \\
		$X^L_\tau, X^U_\tau$ & Schedule that minimizes the cost $C^L_\tau$ or $C^U_\tau$ up to time $\tau$ if the switching cost is paid for powering up~($L$) or down ~($U$), respectively. \\
		$\Psi_l(Q)$ & Scaled version of the problem instance $Q$. State $x$ in $Q$ corresponds to $x/2^l$ in $\Psi_l(Q)$. \\
		$\Omega(Q)$ & Set of optimal schedules for the problem instance $Q$. \\
	\end{longtable}
\endgroup

\bibliographystyle{plainurl}
\bibliography{literature}

\end{document}